\newtheorem{theorem}{Theorem}[section]
\newtheorem{lemma}{Lemma}[section]
\newtheorem{corollary}{Corollary}[section]
\newtheorem{claim}{Claim}[section]
\newtheorem{proposition}{Proposition}[section]
\newtheorem{fact}{Fact}[section]
\newcommand{\qed}{\hfill $\Box$ \bigbreak}
\newenvironment{proof}{\noindent {\bf Proof.}}{\qed}
\newcommand{\remove}[1]{}
\newcommand{\diam}[1]{\ensuremath{{diam}(#1)}}
\begin{document}

\baselineskip  0.2in 
\parskip     0.1in 
\parindent   0.0in 

\title{{\bf Time vs. Information Tradeoffs\\ for Leader Election in Anonymous Trees}}
\date{}
\newcommand{\inst}[1]{$^{#1}$}

\author{
Christian Glacet\inst{1}$^,$\footnote{Partially supported by the ANR project DISPLEXITY (ANR-11BS02-014)},
Avery Miller\inst{2},
Andrzej Pelc\inst{2}$^,$\footnote{Partially supported by NSERC discovery grant and by the Research Chair in Distributed Computing at the Universit\'e du Qu\'{e}bec en Outaouais.}\\
\inst{1}CNR -- IEIIT, Torino, Italy\\
\inst{2} Universit\'{e} du Qu\'{e}bec en Outaouais, Gatineau, Canada.\\
E-mails: \url{christian.glacet@gmail.com}, \url{avery@averymiller.ca}, \url{pelc@uqo.ca}\\
}

\date{ }
\maketitle

\begin{abstract}
Leader election is one of the fundamental problems in distributed computing. It calls for all nodes of a network to agree on a single node, called the leader.
If the nodes of the network have distinct labels, then agreeing on a single node means that all nodes have to output the label of the elected leader.
 If the nodes of the network are anonymous,
the task of leader election is formulated as follows: every node $v$ of the network must output a simple path, which is coded as a sequence of port numbers, such that
all these paths 
end at a common node, the leader. In this paper, we study deterministic leader election in anonymous trees. 

Our aim is to establish tradeoffs between the allocated time $\tau$ and the amount of information that has to be given {\em a priori} to the nodes to enable leader election in time $\tau$ in all trees for which leader election in this time is at all possible.
Following the framework of {\em algorithms
with advice}, this information (a single binary string) is provided to all nodes at the start by an oracle knowing the entire tree. The length of this string is called the {\em size of advice}.  
 For a given time $\tau$ allocated to leader election, we give upper and lower bounds on the minimum size
of advice sufficient to perform leader election in time $\tau$.  
 
 For most values of $\tau$, our upper and lower bounds are either tight up to multiplicative constants, or
they differ only by a logarithmic factor.  Let $T$ be an $n$-node tree of diameter $diam \leq D$.  
While leader election in time $diam$ can be performed without any advice, for time $diam-1$ we give
tight upper and lower bounds of $\Theta (\log D)$. For time $diam-2$ we give
tight upper and lower bounds of $\Theta (\log D)$ for even values of $diam$, and
tight upper and lower bounds of $\Theta (\log n)$ for odd values of $diam$.
Moving to shorter time, in the interval $[\beta \cdot diam, diam -3]$ for constant $\beta >1/2$, we prove an upper bound of $O(\frac{n\log n}{D})$ and a lower bound
of $\Omega(\frac{n}{D})$, the latter being valid whenever $diam$ is odd or when the time is at most $diam-4$. Hence, with the exception of the special case
when $diam$ is even and time is exactly $diam-3$, our bounds leave only a logarithmic gap in this time interval. Finally, for time $\alpha \cdot diam$ for any constant $\alpha <1/2$ (except for the case of very small diameters),  we again give tight upper and lower bounds, this time $\Theta (n)$.

\vspace{2ex}

\noindent {\bf Keywords:} leader election, tree, advice, deterministic distributed algorithm, time. 
\end{abstract}

\vfill

\vfill

\thispagestyle{empty}
\setcounter{page}{0}
\pagebreak

\section{Introduction}

{\bf Background.} 
Leader election is one of the fundamental problems in distributed computing \cite{Ly}. It calls for all nodes of a network to agree on a single node, called the leader.
This classic problem was first formulated in \cite{LL} in the study of local area token ring networks, where, at all times, exactly one node (the owner of a circulating token) has the right to initiate
communication. When the token is accidentally lost, a leader is elected as the initial owner of the token.

If the nodes of the network have distinct labels, then agreeing on a single node means that all nodes have to output the label of the elected leader. However, in many
applications, even if nodes have distinct identities, they may  be reluctant to reveal them, e.g., for privacy or security reasons. Hence it is important to design leader election algorithms that do not depend on the knowledge of such labels and that can work in anonymous networks as well. Under this scenario, agreeing on a single node means
that every node has to output a simple path (coded as a sequence of port numbers) to a common node.

\noindent
{\bf Model and Problem Description.} The network is modeled as an undirected connected graph with $n$ nodes and with diameter $diam$ at most $D$.
In this paper, we restrict attention to tree networks, i.e., connected networks without cycles.
We denote by $\diam{T}$ the diameter of tree $T$. Nodes do not have any identifiers.
On the other hand, we assume that, at each node $v$,
each edge incident to $v$ has a distinct {\em port number} from 
$\{0,\dots,d-1\}$, where $d$ is the degree of $v$. Hence each edge has two corresponding port numbers, one at each of its endpoints. 
Port numbering is {\em local} to each node, i.e., there is no relation between
port numbers at  the two endpoints of an edge. Initially each node knows only its own degree.
The task of leader election is formulated as follows. Every node $v$ of the tree must output a sequence $P(v)=(p_1,\dots,p_k)$ of nonnegative integers.
For each node $v$, let $P^*(v)$ be the path starting at $v$ that results from taking the number $p_i$ from $P(v)$ as the outgoing port at the $i^{th}$ node of the path.
All paths $P^*(v)$ must be simple paths in the tree that end at a common node, called the leader. In this paper, we consider deterministic leader election algorithms.
 
Note that, in the absence of port numbers, there would be no way to identify the elected leader by non-leaders, as all
ports, and hence all neighbours, would be indistinguishable to a node.
Security and privacy reasons for not revealing node identifiers are irrelevant in the case of port numbers. 

We use the extensively studied $\cal{LOCAL}$ communication model \cite{Pe}. In this model, communication proceeds in synchronous
rounds and all nodes start simultaneously. In each round, each node
can exchange arbitrary messages with all of its neighbours and perform arbitrary local computations. For any tree $T$,  any $r \geq 0$ and any node $x$ in $T$, we use $V_T(x,r)$ to denote     
the {\em view} acquired in $T$ by $x$ within $r$ communication rounds. This is all the information that $v$ gets about the tree $T$ in $r$ rounds. Thus, the view $V_T(x,r)$ in $T$  consists of the subtree of $T$ induced by all nodes at distance at most $r$
from $x$, together with  all the port numbers at these nodes, and with the degrees of all nodes at distance exactly $r$ from $x$. 
If no additional knowledge is provided {\em a priori} to the nodes, the decisions of $x$ in round $r$ in any deterministic algorithm are a function of $V_T(x,r)$. 
In most cases, a node's view is considered in the underlying tree in which leader election is being solved,
and then the subscript $T$ is omitted.
The {\em time} of leader election is the minimum number of rounds sufficient to complete it by all nodes. 

It is well known that the synchronous process of the $\cal{LOCAL}$  model can be simulated in an asynchronous network. This can be achieved 
by defining for each node separately its asynchronous round $i$;
in this round, a node performs local computations, then sends messages stamped $i$ to all neighbours, and  waits until it gets messages stamped $i$ from all neighbours.
To make this work, every node is required to send at least one (possibly empty) message with each stamp, until termination.
All of our results can be translated for asynchronous networks by replacing ``time of completing a task''  by ``the maximum number of asynchronous rounds  to complete it, taken over all nodes''. 

For anonymous trees, the task of leader election is not always feasible, regardless of the allocated time. This is the case when the tree is {\em symmetric},
i.e., when there exists a non-trivial port-preserving automorphism of it. Such an automorphism is defined as a bijection $f : X \rightarrow X$, where $X$ is the set of  nodes, such that
$\{x, y\}$ is an edge with port numbers $p$ at $x$ and $q$ at $y$ if and only if $\{f(x), f(y)\}$ is an edge with port numbers $p$ at $f(x)$ and $q$ at $f(y)$.
It is easy to see that leader election is possible in a tree only if the tree is not symmetric. Symmetric trees are easy to characterize.
Indeed, every tree has a {\em centre} which is either a node or an edge defined as follows. If the diameter $diam$ is even, then the {\em central node} is the unique node in the middle
of every simple path of length $diam$, and if the diameter $diam$ is odd, then the {\em central edge} is the unique edge in the middle
of every simple path of length $diam$. A tree is symmetric if and only if $diam$ is odd, ports at the central edge are equal, and the two subtrees resulting from the deletion of the central edge are (port-preserving) isomorphic. For symmetric trees, the only non-trivial automorphism is the one switching the corresponding nodes of these subtrees, and this prevents leader election. 

Moreover, even in non-symmetric trees, leader election may be impossible if the allocated time is too short. Consider the line of length 6 with port numbers
0,0,1,1,0,0,1,1,0,1,0,0 (from left to right). If the allocated time is 1, then leader election is impossible even if nodes know {\em a priori} the entire map of the line.
Indeed, neither of the two leaves knows whether it is the left or the right leaf and cannot learn this fact in time 1, and thus, leaves cannot output correct simple paths to a common node
(the formal proof is slightly more complicated). 
Hence, for any non-symmetric tree $T$, it is important to introduce the parameter $\xi(T)$ defined as the minimum time in which leader election is feasible, assuming that each node is given the entire map of $T$ with all port numbers faithfully mapped (but without the position of the node marked in the map). For the line $T$ in the above example, $\xi(T)=2$.

Our aim is to establish tradeoffs between the allocated time and the amount of information that has to be given {\em a priori} to the nodes to enable them to perform
leader election.
Following the framework of {\em algorithms
with advice}, see, e.g.,   \cite{DP,EFKR,FGIP,FKL,FP,IKP,SN}, this information (a single binary string) is provided to all nodes at the start by an oracle knowing the entire tree. The length of this string is called the {\em size of advice}. Of course, since the faithful map of the tree is the total information about it, asking about the minimum size of advice
to solve leader election in time $\tau$ is meaningful only in the class of trees $T$ for which $\xi(T) \leq \tau$, because otherwise, no advice can help. In light of these remarks, we are able to precisely formulate
the central problem of this paper.
\begin{quotation}
\noindent
For a given time $\tau$, what is the minimum size of advice that permits leader election in time $\tau$ for all trees $T$ where $\xi(T) \leq \tau$?
\end{quotation}  

The paradigm of algorithms with advice has a far-reaching significance in the domain of network algorithms. Establishing a tight bound on the minimum size of advice sufficient to accomplish a given task permits to rule out
entire classes of algorithms and thus focus only on possible candidates. For example, if we prove that $\Theta(\log n)$ bits of advice are needed to perform a certain task in $n$-node trees, this rules out all 
potential algorithms that can work using only the diameter $diam$ of the tree, 
as $diam$ can be given
to the nodes using $\Theta(\log (diam))$ bits, and the diameter can be, e.g.,  logarithmic in the size of the tree. Lower bounds on the size of advice
give us impossibility results based strictly on the \emph{amount} of initial knowledge outlined in a model's description.
This more general approach should be contrasted with
traditional results that focus on specific \emph{kinds} of information available to nodes, such as the size, diameter, or maximum node degree.

\noindent
{\bf Our results.} 
Let $T$ be an $n$-node tree of diameter $diam \leq D$. For a given time $\tau$ allocated to leader election, we give upper and lower bounds on the minimum size
of advice sufficient to perform leader election in time $\tau$. An upper bound $U$ means that, for all trees $T$ with $\xi(T) \leq \tau$, leader election in time $\tau$ is possible given advice of size $O(U)$. We prove such a bound by constructing advice of size $O(U)$ together with a leader election algorithm for all trees $T$ with $\xi(T) \leq \tau$ that uses this advice and works in time $\tau$.
 A lower bound $L$ means that there exist trees $T$ with $\xi(T) \leq \tau$ for which leader election in time $\tau$  requires advice of size $\Omega(L)$.
 Proving such a bound means constructing a class consisting of trees $T$ with $\xi(T) \leq \tau$ for which no leader election algorithm running in time $\tau$ with advice of size $o(L)$
 can succeed.

For most values of $\tau$, our upper and lower bounds are either tight up to multiplicative constants, or
they differ only by a logarithmic factor. More precisely, these bounds are the following. 
While leader election in time $diam$ can be performed without any advice, for time $diam-1$ we give
tight upper and lower bounds of $\Theta (\log D)$. For time $diam-2$, we give
tight upper and lower bounds of $\Theta (\log D)$ for even values of $diam$ and
tight upper and lower bounds of $\Theta (\log n)$ for odd values of  $diam$.
Moving to shorter time, in the interval $[\beta \cdot diam, diam -3]$ for constant $\beta >1/2$, we prove an upper bound of $O(\frac{n\log n}{D})$ and a lower bound
of $\Omega(\frac{n}{D})$, the latter valid whenever $diam$ is odd or time is at most $diam-4$. Hence, with the exception of the special case
when $diam$ is even and time is exactly $diam-3$, our bounds leave only a logarithmic gap in this time interval. (See section 7 for a discussion of this special case.)
Finally, for time $\alpha \cdot diam$ for any constant $\alpha <1/2$ (except for the case of very small diameters, namely for $diam \in \omega (\log ^2n)$)  we again give tight upper and lower bounds, this time $\Theta (n)$. 
The above results are summarized in Figure \ref{results}.

\begin{figure}
\begin{center}
\[
\begin{array}{|c|l|}
\hline
\textbf{Time} & \parbox[m]{9cm}{\centering \textbf{Minimum size of advice}}\\

\hline
diam & 0 \\
\hline
\parbox[m]{5cm}{\centering $diam-1$} & \Theta(\log{D})\\
\hline
\multirow{2}{*}{$diam-2$} & \Theta(\log{D}) \textrm{ for even $diam$}\\
 & \Theta(\log{n}) \textrm{ for odd $diam$}\\
\hline
\multirow{3}{*}{\parbox[t][][t]{5cm}{\centering $\beta \cdot diam \leq \textrm{Time} \leq diam-3$\\ $\textrm{for constant } \beta > \frac{1}{2}$ }} & O(\frac{n\log{n}}{D}) \textrm{ upper bound} \\
 & \Omega(\frac{n}{D}) \textrm{ lower bound for odd $diam$ or Time $\leq diam-4$}\\
 & \textrm{? lower bound for even $diam$ and Time $= diam-3$}\\
\hline
\multirow{3}{*}{\parbox[t][][t]{5cm}{\centering $\alpha \cdot diam$ \\ \textrm{for constant $\alpha < \frac{1}{2}$ and $diam \in \omega(\log^2{n})$}}} & \ \\
 
 & \Theta(n)\\
 & \ \\

\hline
\end{array}
\]
\end{center}
\caption{Tradeoffs between time and size of advice in $n$-node trees with diameter $diam \leq D$}
\label{results}
\end{figure}

Our results show that the minimum size of advice sufficient to perform leader election is very sensitive to the amount of time allocated to this task, and that
this sensitivity occurs at different time values depending on the relation between the diameter and the size of the tree. If $diam$ is odd and small
compared to $n$, e.g., $diam \in O(\log n)$, then a difference of one round (between $diam -1$ and $diam -2$) causes an exponential jump of the size of information
required for leader election, and {\em another} exponential jump occurs in this case between time $diam -2$ and $diam -3$. By contrast, for larger diameter, e.g.,
$diam \in \Theta (\sqrt{n})$, the first exponential jump disappears but the second still holds. On the other hand, perhaps surprisingly, an exponential jump
occurs at fixed time $diam -2$ when the diameter is small (e.g., logarithmic in $n$), depending only on the parity of the diameter.

\noindent
{\bf Related work.}
The leader election problem was introduced in \cite{LL}. This problem  was first extensively studied in the scenario 
where all nodes have distinct labels. Initially, it was investigated for rings.
A synchronous algorithm based on label comparisons and using
$O(n \log n)$ messages was given in \cite{HS}.  In \cite{FL} it was proved that
this complexity is optimal for comparison-based algorithms. On the other hand, the authors showed
an algorithm using a linear number of messages but requiring very large running time.
An asynchronous algorithm using $O(n \log n)$ messages was given, e.g., in \cite{P}, and
the optimality of this message complexity was shown in \cite{B}. Deterministic leader election in radio networks has been studied, e.g., 
in \cite{JKZ,KP,NO}, as well as randomized leader election, e.g., in \cite{Wil}. In \cite{HKMMJ}, the leader election problem was
approached in a model based on mobile agents for networks with labeled nodes.

Many authors \cite{An,AtSn,ASW,BSVCGS,BV,YK2,YK3} studied leader election
in anonymous networks. In particular, \cite{BSVCGS,YK3} characterize message-passing networks in which
leader election can be achieved when nodes are anonymous. In \cite{YK2}, the authors study
the problem of leader election in general networks under the assumption that node labels are
not unique. They characterize networks in which this can be done and give an algorithm
which performs election when it is feasible. 
In  \cite{FKKLS},  the authors
study feasibility and message complexity of leader election in rings with possibly
nonunique labels, while, in \cite{DoPe}, the authors provide algorithms for a
generalized leader election problem in rings with arbitrary labels,
unknown (and arbitrary) size of the ring, and for both
synchronous and asynchronous communication. 
Memory needed for leader election in unlabeled networks was studied in \cite{FP}. 
In \cite{FP1}, the authors investigated the time of leader election in anonymous networks
by characterizing this time in terms of the network size, the diameter of the network, and an additional
parameter called the level of symmetry, which measures how deeply nodes have to inspect the network in order to notice differences in their views of it.
In \cite{DP1}, the authors studied the feasibility of leader election among anonymous agents that
navigate in a network in an asynchronous way.

Providing nodes or agents with arbitrary kinds of information that can be used to perform network tasks more efficiently has previously been
proposed in \cite{AKM01,DP,EFKR,FGIP,FIP1,FIP2,FKL,FP,FPR,GPPR02,IKP,KKKP02,KKP05,MP,SN,TZ05}. This approach was referred to as
{\em algorithms with advice}.  
The advice is given either to the nodes of the network or to mobile agents performing some network task.
In the first case, instead of advice, the term {\em informative labeling schemes} is sometimes used if (unlike in our scenario) different nodes can get different information.

Several authors studied the minimum size of advice required to solve
network problems in an efficient way. 
 In \cite{KKP05}, given a distributed representation of a solution for a problem,
the authors investigated the number of bits of communication needed to verify the legality of the represented solution.
In \cite{FIP1}, the authors compared the minimum size of advice required to
solve two information dissemination problems using a linear number of messages. 
In \cite{FKL}, it was shown that advice of constant size given to the nodes enables the distributed construction of a minimum
spanning tree in logarithmic time. 
In \cite{EFKR}, the advice paradigm was used for online problems.
In \cite{FGIP}, the authors established lower bounds on the size of advice 
needed to beat time $\Theta(\log^*n)$
for 3-coloring cycles and to achieve time $\Theta(\log^*n)$ for 3-coloring unoriented trees.  
In the case of \cite{SN}, the issue was not efficiency but feasibility: it
was shown that $\Theta(n\log n)$ is the minimum size of advice
required to perform monotone connected graph clearing.
In \cite{IKP}, the authors studied radio networks for
which it is possible to perform centralized broadcasting in constant time. They proved that constant time is achievable with
$O(n)$ bits of advice in such networks, while
$o(n)$ bits are not enough. In \cite{FPR}, the authors studied the problem of topology recognition with advice given to the nodes. 
In \cite{DP}, the task of drawing an isomorphic map by an agent in a graph was considered, and the problem was to determine the minimum advice that has to be given to the agent
for the task to be feasible. 
To the best of our knowledge, the problem of leader election with advice has never been studied before for anonymous networks.
In \cite{MP}, the authors investigated the minimum size of advice sufficient to find the largest-labelled node in a graph.
The main difference between  \cite{MP} and the present paper is that we consider networks without node labels. This is not a small difference:
from the methodological perspective, breaking symmetry in anonymous networks relies heavily on the structure of the graph, and, as far as results
are concerned, much more advice is needed.

\section{Terminology and preliminaries}

In this paper we use the word {\em path} to mean a simple path in the tree. For nodes $a$ and $b$, we denote by $d(a,b)$ the distance from $a$ to $b$, and by $path(a,b)$ the path $(a,\dots,b)$.
Nodes $a$ and $b$ are called the {\em endpoints} of this path. Let $b$ be a node in $path(a,c)$. We say that $path(a,c)$ is the {\em concatenation} of 
$path(a,b)$ and $path(b,c)$ and we write $path(a,c)=path(a,b)\cdot path(b,c)$.
The length of a path $P$, denoted by  $|P|$,  is the number of edges in it. 
Denote by  $seq(a,b)=(p_1,\dots , p_s)$ the sequence of all ports encountered when moving from $a$ to $b$ on $path(a,b)$. 
Odd-indexed terms in $seq(a,b)$ are called the {\em outgoing ports} of $seq(a,b)$.
We also use the operator $\cdot$ to denote the usual concatenation of sequences of integers, e.g., when concatenating two port sequences.

Let $v$ be a node of a tree $T$ and let $r$ be a non-negative integer. An {\em endless path in $V(v,r)$} is a simple path of length $r$, with endpoints $v$ and $v'$, such that $v'$ is not a leaf in $T$.  A {\em terminated path in $V(v,r)$} is a simple path of length at most $r$, with endpoints $v$ and $v'$, such that $v'$ is a leaf in $T$. See Figure \ref{paths} for examples of terminated and endless paths.

\begin{figure}[!ht]
\begin{center}
\includegraphics[scale=1]{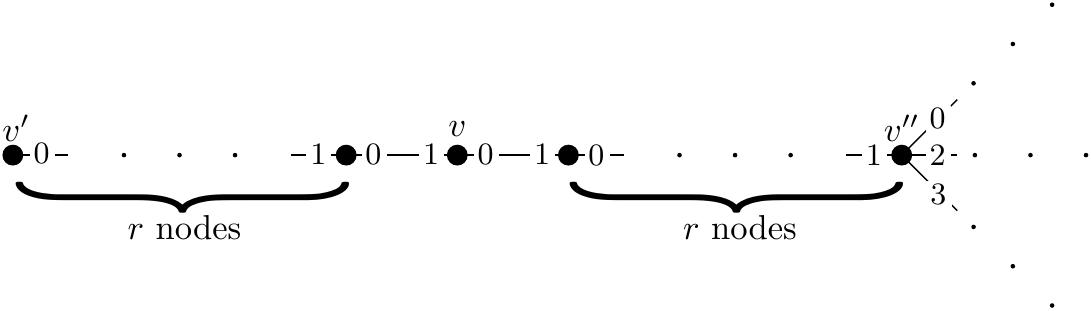}
\end{center}
\caption{An example of a node $v$'s view up to distance $r$, i.e. $V(v,r)$. The path between $v$ and $v'$ is a terminated path in $V(v,r)$, while the path between $v$ and $v''$ is an endless path in $V(v,r)$. All endless paths in $V(v,r)$ starting at $v$ pass through node $v''$.}
\label{paths}
\end{figure}

\section{Time $diam-1$}

In this section, we show tight upper and lower bounds of $\Theta (\log D)$ on the minimum size of advice sufficient to perform leader election in time $diam-1$ in
trees of diameter $diam \leq D$. The upper bound $O(\log D)$ is straightforward: given the value of $diam$, every node $v$ can reconstruct the entire tree from 
$V(v,diam-1)$ as follows. For each endless path with endpoints $v$, $v'$ , where $v'$ has some degree $k$, node $v$ attaches $k-1$ leaves to $v'$.
Hence, by using any centralized algorithm on the entire tree, all nodes can perform leader election whenever the tree is not symmetric. (This also shows that every non-symmetric tree $T$ has $\xi(T) \leq diam -1$.)
The matching lower bound is given by the following theorem.

\begin{theorem}\label{diam -1}
Consider any algorithm $ELECT$ such that, for every non-symmetric tree $T$, algorithm $ELECT$ solves election within $diam(T) - 1$ rounds. For every integer $D \geq 3$, there exists a tree $T$ with diameter at most $D$ and $\xi(T) \leq diam(T) -1$, for which algorithm $ELECT$ requires advice of size $\Omega(\log D)$.
\end{theorem}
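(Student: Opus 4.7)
My plan is to construct a family of $D-2$ non-symmetric trees and argue via pigeonhole that at least one of them requires $\Omega(\log D)$ bits of advice. For each $k \in \{3,\ldots,D\}$, I would take $L_k$ to be the simple path on $k+1$ nodes $v_0, v_1, \ldots, v_k$ with the ``canonical'' port assignment: at every internal $v_i$, port $0$ leads to $v_{i-1}$ and port $1$ leads to $v_{i+1}$, and each leaf uses its only port~$0$. The only non-trivial candidate automorphism of $L_k$ is the reflection $v_i \mapsto v_{k-i}$, which is not port-preserving (it would require swapping $0$ and $1$ at every internal node), so $L_k$ is non-symmetric; and since at radius $\lceil k/2 \rceil \leq k-1$ every node sees at least one leaf together with its direction, $\xi(L_k) \leq k - 1$.

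Next I would assume for contradiction that $ELECT$ uses advice of size $s < \log_2(D-2)$ and invoke pigeonhole to obtain $k < k'$ on which the oracle outputs the same string $\sigma$. The key observation I would establish is a pair of view coincidences: for every $r \leq k-1$, the view $V_{L_k}(v_0, r)$ equals $V_{L_{k'}}(v_0, r)$ as rooted port-labeled graphs (both are a canonical path of $r+1$ nodes terminating at a degree-$2$ boundary node), and $V_{L_k}(v_k, r)$ is isomorphic to $V_{L_{k'}}(v_{k'}, r)$ via the shift $v_{k-i} \mapsto v_{k'-i}$. Since a deterministic algorithm's local computation at a node depends only on its view sequence and the advice, and since $ELECT$ must terminate $v_0$ and $v_k$ in $L_k$ by round $k-1$, the identical computations force $ELECT$ to terminate the corresponding leaves of $L_{k'}$ at exactly the same rounds (no later than $k-1$) and to output the same port sequences $P_0$ at $v_0$ and $Q$ at the far leaf.

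The third step would be to exploit the fact that each $L_k$ is a path. The only simple paths starting at $v_0$ take the form $(0, 1, 1, \ldots, 1)$ and end at $v_{|P_0|}$; the only
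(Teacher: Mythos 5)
Your proposal is correct and takes essentially the same approach as the paper's proof: the same family of canonically port-labelled paths, the same pigeonhole over advice strings, and the same transfer of the two leaves' halting rounds and outputs from $L_k$ to $L_{k'}$ via equality of views. Although your write-up is cut off, the intended conclusion is exactly the paper's final step — both leaf outputs must end at a common node, forcing the sum of their lengths to equal the path length, which cannot be both $k$ and $k'$ — so the completion is immediate.
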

\begin{proof}
Fix any integer $D \geq 3$. We will show a stronger statement: at least $D-1$ different advice strings are needed in order to solve election within $diam(T)-1$ rounds for some trees $T$ with diameter at most $D$ and $\xi (T) \leq diam(T)-2$. To prove this statement, we first construct a class of trees ${\cal T} = \{T_2,\ldots,T_D\}$, where tree $T_k$ is the path of length $k$. For each $k \in \{2,\ldots,D\}$, let $a_k$ and $b_k$ be the endpoints of $T_k$, and label the ports of $T_k$ such that the port sequence $seq(a_k,b_k)$ is equal to $(0,0,1,0,1,0,\ldots,1,0)$. See Figure \ref{logDpaths} for an illustration of $T_k$. We will denote by $P_{a_k}$ and $P_{b_k}$ the sequences of outgoing ports that are outputted by $a_k$ and $b_k$, respectively, at the end of the execution of algorithm $ELECT$ in tree $T_k$. Note that algorithm $ELECT$ is correct  only if, for every $k \in \{2,\ldots,D\}$, there exists a node $\ell_k \in T_k$ such that sequence $P_{a_k}$ corresponds to a simple path from $a_k$ to $\ell_k$, and sequence $P_{b_k}$ corresponds to a simple path from $b_k$ to $\ell_k$.  Hence, algorithm $ELECT$ is correct  only if $|P_{a_k}| + |P_{b_k}| = k$ for each $k \in \{2,\ldots,D\}$.

\begin{figure}[!ht]
\begin{center}
\includegraphics[scale=1]{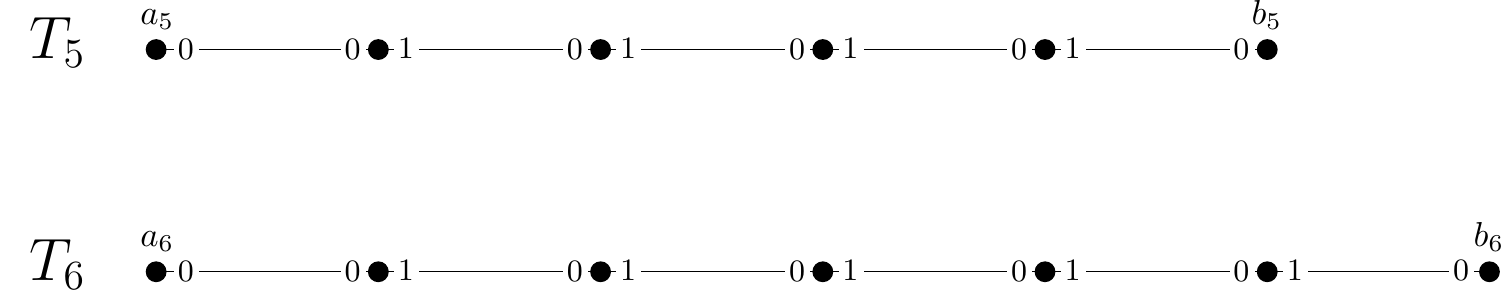}
\end{center}
\caption{Example of trees $T_k$ constructed in Theorem \ref{diam -1}, for $k=5,6$.}
\label{logDpaths}
\end{figure}

Next, to obtain a contradiction, assume that $D-2$ different advice strings are sufficient to solve election within $diam(T)-1$ rounds for each tree $T$ in ${\cal T}$. By the Pigeonhole Principle, there exist $i,j \in \{2,\ldots,D\}$ with $i<j$ such that the same advice string is provided to the nodes of both $T_i$ and $T_j$ when they execute algorithm $ELECT$. When executed at node $a_i$ in $T_i$, algorithm $ELECT$ halts in some round $r_a \leq diam(T_i)-1 = i-1$ and outputs some port sequence $P_{a_i}$. Similarly, when executed at node $b_i$ in $T_i$, algorithm $ELECT$ halts in some round $r_b \leq i-1$ and outputs some port sequence $P_{b_i}$. As noted above, we have that $|P_{a_i}| + |P_{b_i}| = i$. We show that, when executed at node $a_j$ in $T_j$, algorithm $ELECT$ also halts in round $r_a$ and outputs $P_{a_i}$. Indeed, the algorithm is provided with the same advice string for both $T_i$ and $T_j$, and $V_{T_i}(a_i,r_a)=V_{T_j}(a_j,r_a)$. Similarly, when executed at node $b_j$ in $T_j$, algorithm $ELECT$ halts in round $r_b$ and outputs $P_{b_i}$. However, this implies that, in the execution of $ELECT$ in tree $T_j$, we have $|P_{a_j}| + |P_{b_j}| = |P_{a_i}| + |P_{b_i}| =  i < j$, which contradicts the correctness of $ELECT$.

We finally show that, for every $k \in \{2,\ldots,D\}$, we have $\xi(T_k) \leq k-2$. First assume that $k>2$.
For every node $v$ of $T_k$, at least one of the endpoints of $T_k$ is in $V(v, k-2)$. Hence $v$ can identify its position in the map of $T_k$ and
output the sequence of outgoing ports leading from $v$ to $a_k$. For $k=2$, both leaves output the sequence $(0)$ and the central node outputs the empty sequence.
\end{proof}

\section{Time $diam-2$}

In this section, we show tight upper and lower bounds on the minimum size of advice sufficient to perform leader election in time $diam-2$, for
trees of diameter $diam \leq D$.
These bounds depend on the parity of $diam$. They are $\Theta(\log D)$ for even values of $diam$, and $\Theta(\log n)$ for odd values of $diam$. We consider these two cases separately. 

\subsection{Even Diameter}
Consider any tree $T$ with $n$ nodes and even diameter $diam \leq D$. The lower bound $\Omega (\log D)$ on the minimum size of advice sufficient to perform leader election in time $diam-2$ can be proven exactly as  Theorem \ref{diam -1}. We now prove the matching upper bound by  
providing an algorithm {\tt EvenElect} that solves election in time $diam-2$ using $O(\log{D})$ bits of advice. 
The algorithm works by having each node find and elect the central node of the tree, which we denote by $v_c$.
The advice provided to the algorithm is the value of $diam$.  Let $h = diam/2$.  The {\em gateway} $g_v$ of a node $v$ is defined as the node in  $V(v,diam-2)$ furthest from $v$ such that every endless path in $V(v,diam-2)$ passes through $g_v$. We now give the pseudo-code of the algorithm executed at an arbitrary node $v$ in tree $T$ using advice $A$.

\begin{algorithm}[H]
\caption{\texttt{EvenElect}($A$)}
\begin{algorithmic}[1]
\State $v_c \leftarrow \emptyset$
\State $diam \leftarrow$ diameter of $T$, as provided in $A$
\State $h \leftarrow diam/2$
\State Use $diam-2$ rounds of communication to learn $V(v,diam-2)$
\State {\bf If} $V(v,diam-2)$ contains no endless paths starting at $v$:
\State \indent $v_c \leftarrow$ central node of $V(v,diam-2)$
\State {\bf Else}:
\State \indent  $g_v \leftarrow$ the node $w$ in $V(v,diam-2)$ furthest from $v$ such that every endless path in $V(v,diam-2)$ passes through $w$ \label{startcalc}
\State \indent {\bf If} $d(v,g_v) \leq h-1$, or, $V(v,diam-2)$ contains a node $w$ such that $d(w,g_v) > d(v,g_v)$:
\State \indent \indent $\ell \leftarrow h-1$
\State \indent {\bf Else}:
\State \indent \indent $\ell \leftarrow h$ \label{endcalc}
\State \indent $v_c \leftarrow$ the node on $path(v,g_v)$ at distance $\ell$ from $v$ \label{vc}
\State Output the sequence of outgoing ports of $seq(v,v_c)$
\end{algorithmic}
\label{nodeelect}
\end{algorithm}

\begin{theorem}\label{even}
Algorithm {\tt EvenElect} solves leader election in trees of size $n$ and even diameter $diam \leq D$ in time $diam-2$ using $O(\log{D})$ bits of advice.
\end{theorem}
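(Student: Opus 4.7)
The bounds on advice size ($O(\log D)$) and running time ($diam-2$) are immediate from the pseudocode: the oracle transmits only the integer $diam\leq D$, and everything after the $diam-2$ rounds of communication on line~4 is local computation. Correctness --- showing that every node $v$ outputs the port sequence of a path ending at the unique central node $v_c$ of $T$ --- is the substance of the proof. Since $diam$ is even, $v_c$ is unique, and by the earlier characterization of symmetric trees, $T$ is non-symmetric, so electing a single leader is at least well-posed.

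I would split correctness into two cases. In the easy case, $V(v,diam-2)$ contains no endless path starting at $v$, which forces every node of $T$ to lie within distance $diam-2$ of $v$ (any longer path would exit $V(v,diam-2)$ through a non-leaf at distance $diam-2$, producing an endless path). Hence $V(v,diam-2)$ captures all of $T$ with $v$'s position marked, and the central node of this view (computed on line~6) coincides with $v_c$. In the main case, endless paths exist, and the key lemma to establish is: $g_v$ is well-defined, $v_c$ lies on $path(v,g_v)$, and $d(v,v_c)\in\{h-1,h\}$. The upper bound $d(v,v_c)\leq h$ follows from the radius of $T$ being $h$. For the lower bound, any hidden node $u$ satisfies $d(v,u)\geq 2h-1$ and $d(v_c,u)\leq h$, so by the triangle inequality $d(v,v_c)\geq h-1$. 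To show $v_c$ actually lies on $path(v,u)$ for every hidden $u$, I would argue by contradiction: let $M$ be the branching point of $path(v,v_c)$ away from $path(v,u)$, and combine $d(v,M)+d(M,u)\geq 2h-1$, $d(v_c,M)+d(M,u)\leq h$, and $d(v,M)+d(M,v_c)\leq h$ to force $d(M,v_c)=0$. Since every $v$-to-hidden path passes through both $g_v$ and $v_c$, and $d(v,v_c)\leq h\leq d(v,g_v)$ whenever $diam\geq 4$, we conclude $v_c\in path(v,g_v)$.

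Given the key lemma, only the conditional on lines~9--12 remains. If $d(v,g_v)\leq h-1$, then $v_c\in path(v,g_v)$ forces $d(v,v_c)\leq h-1$, pinning $\ell=h-1$. For $d(v,g_v)\geq h$, I would prove the biconditional: $d(v,v_c)=h-1$ iff some $w\in V(v,diam-2)$ satisfies $d(w,g_v)>d(v,g_v)$. The forward direction exploits the fact that the $v$-side subtree of $v_c$ must have depth exactly $h$ (to accommodate a diameter path through $v_c$), so that a leaf $w$ at depth $h$ sharing a child of $v_c$ with $v$ serves as witness, with $d(v,w)\leq 2h-3\leq diam-2$ and $d(w,g_v)=d(v,g_v)+1$. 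The reverse direction uses $d(w,v_c)\leq h$ to show that, when $d(v,v_c)=h$, no $w\in V(v,diam-2)$ can be strictly farther from $g_v$ than $v$. The main technical obstacle will be making this final biconditional fully rigorous across all subtree configurations at $v_c$, and in particular verifying that the natural candidate witness always falls within $V(v,diam-2)$.
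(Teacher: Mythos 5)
Your overall plan is sound and, in substance, runs parallel to the paper's proof: the same easy case (no endless paths means the view is all of $T$), the same two key facts ($v_c$ lies on $path(v,g_v)$, and the witness condition ``some $w$ in the view with $d(w,g_v)>d(v,g_v)$'' decides between $d(v,v_c)=h-1$ and $d(v,v_c)=h$). Your techniques, however, differ in an interesting way: you place $v_c$ on every path from $v$ to a hidden node by a median/triangle-inequality computation, whereas the paper concatenates explicit paths and derives a contradiction with the diameter; and you prove the decision rule as a biconditional with an explicit witness (a depth-$h$ leaf in $v$'s own branch), whereas the paper argues the contrapositive, showing that absence of a witness forces $g_v$ onto $path(v,v_c)$. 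Your reverse direction ($d(v,v_c)=h$ implies no witness, since $d(w,v_c)\leq h$ for all $w$) coincides with the paper's argument for that case.

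There are two places where your written justification would fail as stated, both repairable. First, the inequality $h\leq d(v,g_v)$ is false in general: if $v$ sits at depth $h-1$ in one of three depth-$h$ branches at $v_c$, then endless paths from $v$ enter two different branches and diverge at $v_c$, so $g_v=v_c$ and $d(v,g_v)=h-1$. Worse, you invoke the conclusion $v_c\in path(v,g_v)$ precisely in the case $d(v,g_v)\leq h-1$, where this justification does not apply, so the argument as written is circular there. The fix is one line and already implicit in your setup: every endless path extends to a hidden node, your median argument puts $v_c$ on that extended path at distance $d(v,v_c)\leq h\leq diam-2$ from $v$ (for $diam\geq 4$), so $v_c$ lies on every endless path; since $g_v$ is by definition the \emph{furthest} node common to all endless paths, $d(v,v_c)\leq d(v,g_v)$ and hence $v_c\in path(v,g_v)$, with no lower bound on $d(v,g_v)$ needed. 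Second, the claim that the $v$-side subtree of $v_c$ has depth exactly $h$ does not follow merely from ``accommodating a diameter path through $v_c$'': the two depth-$h$ branches guaranteed by $diam=2h$ could both avoid $v$'s side. It does hold under the hypothesis $d(v,g_v)\geq h$ of your biconditional, because two depth-$h$ branches outside $v$'s component would again make endless paths diverge at $v_c$ and force $d(v,g_v)\leq h-1$; this is exactly the structural point the paper's contrapositive argument makes. With these two local repairs, your witness computation ($d(v,w)\leq 2h-3\leq diam-2$ and $d(w,g_v)=d(v,g_v)+1$) is correct and the proof goes through.
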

\begin{proof}
We begin by proving the correctness of the algorithm. In particular, we must show that each node $v$ correctly computes the central node $v_c$. If $V(v,diam-2)$ contains no endless paths starting at $v$, then $V(v,diam-2)$ consists of the entire tree. In this case, $v$ can find the central node by inspection. Otherwise, it follows that $d(v,v_c) \in \{h-1,h\}$. The following result shows that $v_c$ always lies on the path from $v$ to $g_v$.

\begin{claim}\label{nodeonpath}
If $v$ is a node such that $d(v,v_c) \in \{h-1,h\}$, then $v_c$ is on $path(v,g_v)$.
\end{claim}
To prove the claim, 
let $Q_1 = path(v,v_c)$, and let $Q_2$ be any path starting from $v_c$ of length $h$ such that $Q_2$ and $path(v_c,v)$
are edge-disjoint. 
Let $Q$ be the path $Q_1 \cdot Q_2$. Note that $|Q| \geq (h-1) + h = 2h-1 = diam-1$. In particular, in $V(v,diam-2)$, $Q$ is an endless path starting at $v$. Therefore, by the definition of $g_v$, node $g_v$ is on path $Q$. If $g_v$ is in $Q_2$, then $v_c$ is on $path(v,g_v)$, and we are done. Finally, we show that $g_v$ cannot appear before $v_c$ in $Q$, by way of contradiction. Assume it does. By the definition of $g_v$, there must exist some endless path in $V(v,diam-2)$ that contains $g_v$ but not $v_c$. In particular, there must be a path $Q'$ of length at least $diam-1$ with $v$ as one endpoint that passes through $g_v$ but not through $v_c$. Let $v'$ be the endpoint of $Q'$ not equal to $v$, and let $w$ be the node in $Q'$ that is furthest from $v$ and on $path(v,v_c)$. We consider the length of the path $Q'' = path(v',w) \cdot path(w,v_c) \cdot Q_2$. (See Figure \ref{proofpathseven} for an illustration of the paths defined above.) Since $|Q'| \geq diam-1 = 2h-1$, it follows that $|path(v,w)| + |path(w,v')| \geq 2h-1$. Also, note that $|path(v,v_c)| = |path(v,w)| + |path(w,v_c)|$. So, $|Q''| = |path(v',w)| + |path(w,v_c)| + |Q_2| \geq (2h - 1 - |path(v,w)|) + (|path(v,v_c)| - |path(v,w)|) + h = (3h - 1) - d(v,w) + (d(v,v_c) - d(v,w))$. Finally, since $w$ appears before $v_c$ in $Q$, we have $d(v,w) < d(v,v_c) \leq h$. Therefore, $|Q''| \geq (3h - 1) - (h-1) + 1= 2h + 1 > diam$, a contradiction. This concludes the proof of Claim \ref{nodeonpath}.

\begin{figure}[!ht]
\begin{center}
\includegraphics[scale=0.8]{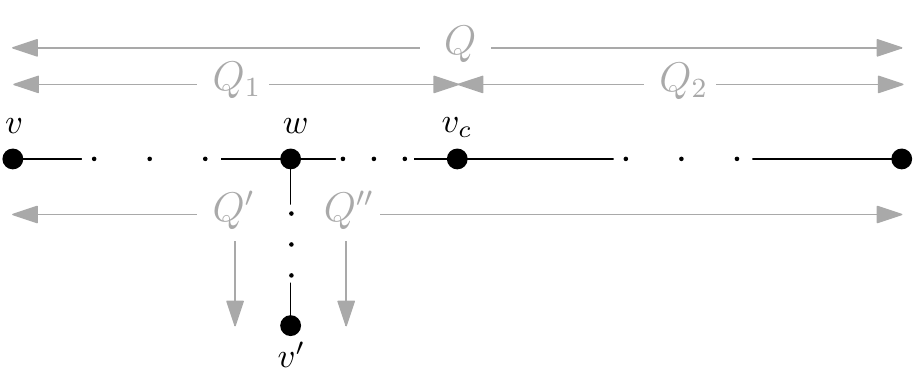}
\end{center}
\caption{Paths $Q_1,Q_2,Q,Q',Q''$ as defined in the proof of Claim \ref{nodeonpath}.}
\label{proofpathseven}
\end{figure}


We now show that $v$ correctly computes $d(v,v_c)$ (which it stores in $\ell$) at lines \ref{startcalc}-\ref{endcalc}.
\begin{claim}\label{case1}
Suppose that $d(v,v_c) \in \{h-1,h\}$. If $d(v,g_v) \leq h-1$, or, $V(v,diam-2)$ contains a node $w$ such that $d(w,g_v) > d(v,g_v)$, then $d(v,v_c) = h-1$.
\end{claim}
To prove the claim, first suppose that $d(v,g_v) \leq h-1$. It follows from Claim \ref{nodeonpath} that $d(v,v_c) \leq d(v,g_v) \leq h-1$, which implies that $d(v,v_c) = h-1$. Next, suppose that $V(v,diam-2)$ contains a node $w$ such that $d(w,g_v) > d(v,g_v)$. It follows from Claim \ref{nodeonpath} that $d(v,g_v) = d(v,v_c) + d(v_c,g_v)$, so $d(v,v_c) + d(v_c,g_v) < d(w,g_v) \leq d(w,v_c) + d(v_c,g_v)$. Hence, $d(v,v_c) < d(w,v_c)$. Note that, by the definition of the central node, we have $d(w,v_c) \leq h$. So, we have shown that $d(v,v_c) \leq h-1$, which implies that $d(v,v_c) = h-1$. This completes the proof of the claim.

\begin{claim}\label{case2}
Suppose that $d(v,v_c) \in \{h-1,h\}$. lf $d(v,g_v)>h-1$, and, for all $w \in V(v,D-2)$, $d(w,g_v) \leq d(v,g_v)$, then $d(v,v_c) = h$.
\end{claim}
We prove the contrapositive of this claim. Namely, we show that, if $d(v,v_c) = h-1$ and, for all $w \in V(v,diam-2)$, $d(w,g_v) \leq d(v,g_v)$, then $d(v,g_v)\leq h-1$. Let $e_v$ be the first edge on the path from $v_c$ to $v$. Let $T_v$ be the subtree induced by all nodes reachable from $v_c$ via a path starting with edge $e_v$.  For each node $w \in T_v$, we have $path(w,g_v) = path(w,v_c)\cdot path(v_c,g_v)$ since $v_c$ lies on the path from $v$ to $g_v$. Hence, $d(w,v_c) = d(w,g_v) - d(v_c,g_v) \leq d(v,g_v) - d(v_c,g_v) = d(v,v_c) = h-1$, i.e., every path from $v_c$ to a node in $T_v$ has length at most $h-1$. By the definition of the central node, there must exist two distinct edges incident to $v_c$ that belong to paths $P_1, P_2$ of length $h$ starting at $v_c$. Neither of these edges is equal to $e_v$, since $T_v$ does not contain a path
of length $h$ starting at $v_c$.
Let $z_1,z_2$ be the leaves of  paths $P_1, P_2$, respectively. Since $d(v,z_1) = d(v,z_2)= d(v,v_c)+h = 2h-1$, it follows that $path(v,z_1)$ and $path(v,z_2)$ are endless paths in $V(v,2h-2) = V(v,diam-2)$. Rewrite $path(v,z_1) = path(v,v_c)\cdot path(v_c,z_1)$ and $path(v,z_2) = path(v,v_c) \cdot path(v_c,z_2)$, and recall that the first edge of $path(v_c,z_1)$ is not equal to the first edge of $path(v_c,z_2)$. By the definition of $g_v$, it follows that $g_v \in path(v,v_c)$. Therefore, $d(v,g_v) \leq h-1$, which proves the claim.

By Claim \ref{nodeonpath}, $v_c$ is precisely the node at distance $d(v,v_c)$ from $v$ on $path(v,g_v)$.  By Claims \ref{case1} and \ref{case2}, we see that the value of $\ell$ computed in the algorithm is equal to $d(v,v_c)$. It follows that $v$ correctly calculates $v_c$ at line \ref{vc}.

Finally, note that the advice consists of the value of $diam$, so the size of advice is $O(\log{D})$.
\end{proof}

{\bf Note.} Theorem \ref{even} implies that, for any tree $T$ of even diameter $diam$, we have $\xi(T) \leq diam -2$.

\subsection{Odd Diameter}

We now provide tight upper and lower bounds of $\Theta( \log n)$ on the minimum size of advice sufficient to perform leader election in time $diam-2$, when $diam$ is odd.
Our lower bound is valid even for the class of trees with fixed diameter $D$. For the upper bound, we can provide the value of the diameter
as part of the advice. 

Consider any tree $T$ with $n$ nodes and odd diameter $D$. We prove our upper bound by providing an algorithm {\tt OddElect} that solves election in time $D-2$ using $O(\log{n})$ bits of advice.

At a high level, our algorithm works as follows. Using $D-2$ communication rounds, each node calculates its simple path to the closest endpoint of the central edge. For each node $v$, this closest endpoint  will be called $v$'s \emph{candidate}. 
The main difficulty of the algorithm is breaking symmetry between the two candidates.
The advice helps the nodes decide which of the two possible candidates should be elected as leader, and provides the port number which leads from the non-elected candidate to the leader. To do this with a small number of bits, the advice succinctly describes a path which exists starting at one of the two endpoints of the central edge but not the other. The nodes that see this path starting from their candidate will elect their candidate, and the nodes that cannot see this path starting from their candidate will use the port number provided in the advice to elect the other candidate.

We now provide the details of the advice and the algorithm. Let $\{c_0,c_1\}$ be the central edge of the tree. For each node $v$, denote by $cand(v)$ the node in $\{c_0,c_1\}$ that is closest to $v$. Let $h = \frac{D-1}{2}$. Note that, by the definition of the central edge, for each node $v$, we have $d(v,cand(v)) \leq h$. Recall that the gateway $g_v$ of a node $v$ is defined as the node in  $V(v,D-2)$ furthest from $v$ such that every endless path in $V(v,D-2)$ passes through $g_v$.

We first construct the advice.
The first part of the advice string is the exact value of $D$, which can be used to calculate the value of $h$. The goal of the rest of the advice construction is to succinctly describe a sequence of port numbers that distinguishes one of the two candidate nodes from the other. 

We divide the set of trees into two classes. We say that a tree is \emph{separated} if, for each node $v$,  $path(v,g_v)$ contains the central edge. One bit of the advice string, called the ``separated bit'', has value 1 if and only if $T$ is separated.

If $T$ is separated, we construct the remainder of the advice string as follows. For each of the two endpoints $c_0,c_1$ of the central edge, we define the list $L_i$ consisting of all port sequences that can be obtained by following simple paths starting at $c_i$ that do not contain the central edge. These port sequences consist of both the outgoing and incoming port numbers encountered, in order, on each path. Each list $L_i$ is sorted in ascending lexicographic order. The following result shows that $L_0$ and $L_1$ must differ.

\begin{claim}\label{diff}
For any separated tree $T$, if $L_0 = L_1$, then $T$ is symmetric.
\end{claim}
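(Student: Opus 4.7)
The plan is to derive from $L_0 = L_1$ a non-trivial port-preserving automorphism $\psi$ of $T$ that interchanges $c_0$ and $c_1$; together with the assumption that $T$ has odd diameter, this is exactly the definition of a symmetric tree recalled in the introduction. I would split the construction of $\psi$ into two ingredients: (a) the ports at the two endpoints of the central edge are equal, and (b) the two components $T_0, T_1$ of $T$ obtained by removing the central edge (with $c_i \in T_i$) are port-preserving isomorphic via a map sending $c_0$ to $c_1$. I do not expect to use the ``separated'' hypothesis; it is present in the statement only because the advice construction uses it.

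For (a), let $p_i$ denote the port at $c_i$ on the central edge and $d_i = \deg(c_i)$. The length-$1$ paths from $c_i$ avoiding the central edge are in one-to-one correspondence with the ports $\{0, \ldots, d_i - 1\} \setminus \{p_i\}$, so the set of first entries of length-$2$ sequences in $L_i$ equals this set. The hypothesis $L_0 = L_1$ forces these two sets to coincide, so comparing cardinalities gives $d_0 = d_1 = d$, and the unique element missing from each $(d-1)$-subset of $\{0, \ldots, d-1\}$ must be the same, giving $p_0 = p_1$.

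For (b), observe that every node $u \in T_i$ is reached from $c_i$ by a unique simple path (automatically avoiding the central edge), and its port sequence $s(u)$ lies in $L_i$; conversely every element of $L_i$ determines a unique path from $c_i$ and hence a unique endpoint by following outgoing ports step by step. Thus $u \mapsto s(u)$ is a bijection $T_i \to L_i$. I would define $\phi : T_0 \to T_1$ by requiring $s(\phi(u)) = s(u)$; by $L_0 = L_1$ this is a well-defined bijection with $\phi(c_0) = c_1$. For port preservation on an edge $\{u, u'\}$ in $T_0$ with ports $p$ at $u$ and $q$ at $u'$, note that whichever of $u, u'$ is farther from $c_0$ has its port sequence obtained from the other's by appending the pair $(p, q)$, and the analogous extension relation must hold in $L_1$, so the edge $\{\phi(u), \phi(u')\}$ carries the same port labels $p$ at $\phi(u)$ and $q$ at $\phi(u')$.

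Finally, I would define $\psi : T \to T$ by $\psi|_{T_0} = \phi$ and $\psi|_{T_1} = \phi^{-1}$. Ingredient (a) ensures that $\psi$ sends the central edge $\{c_0, c_1\}$ to itself with matching port labels at its two endpoints, while (b) handles every other edge. Since $\psi$ swaps $c_0$ and $c_1$, it is non-trivial, and hence $T$ is symmetric. The main technical point I expect to have to be careful with is the port-preservation argument in (b) for edges not incident to $c_0$: this rests on the fact that sequences in $L_i$ record both outgoing and incoming ports at every step, so a sufficiently long sequence fully determines the local port pattern along its entire path, not merely at the root.
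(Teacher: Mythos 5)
Your proof is correct and follows essentially the same route as the paper's: the paper also observes that $L_0=L_1$ forces the two subtrees obtained by deleting the central edge to be port-preserving isomorphic and the ports at the two endpoints of the central edge to be equal, which is exactly the characterization of a symmetric tree (your extra work of explicitly gluing $\phi$ and $\phi^{-1}$ into an automorphism is just the detail the paper leaves implicit). The only cosmetic difference is that you extract the equality $p_0=p_1$ directly from the length-$2$ sequences, while the paper deduces it from the isomorphism of the subtrees; both hinge on the same fact that the non-central port sets at $c_0$ and $c_1$ coincide, and you are also right that the ``separated'' hypothesis enters only through the definition of $L_0,L_1$.
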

To prove the claim, it suffices to note that, if $L_0=L_1$, then the subtrees rooted at $c_0$ and $c_1$ and resulting from the removal of the central edge are
(port-preserving) isomorphic. It follows that the port numbers at the two endpoints of the central edge must be equal. Hence $T$ is symmetric,
which proves the claim.

By Claim \ref{diff}, there exists a port sequence that appears in exactly one of  $L_0$ or $L_1$. Formally,  for some $i \in \{0,1\}$, there is an integer $j$ such that the $j^{th}$ sequence in $L_i$ does not appear in $L_{1-i}$. The remainder of the advice string is a tuple $(j,k,m,p)$ where:
\begin{itemize}
\item $k$ is the largest integer such that the $j^{th}$ sequences in $L_i$ and $L_{1-i}$ have equal prefixes of length~$k$,
\item $m$ is the integer equal to the $(k+1)^{th}$ port number of the $j^{th}$ sequence in $L_i$, and,
\item $p$ is the port number that leads from $c_{1-i}$ to $c_i$.
\end{itemize}

We now describe the advice string in the case where $T$ is not separated. The construction is similar to the case where $T$ is separated, except for a change in the definition of the lists $L_0$ and $L_1$. In particular, for each of the two endpoints $c_0,c_1$ of the central edge, we define list $L_i$ to be all port sequences of length at most $2h-1$ that can be obtained by following simple paths starting at $c_i$. As before, these port sequences consist of both the outgoing and incoming port numbers encountered, in order, on each path. Again, each list $L_i$ is sorted in ascending lexicographic order. The following result shows that, also in the case of non-separated trees, lists $L_0$ and $L_1$ must differ.

\begin{claim}\label{diff2}
For any tree $T$ that is not separated, if $L_0 = L_1$, then $\xi(T) > D-2$.
\end{claim}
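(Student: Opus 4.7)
The plan is to show that the hypothesis $L_0 = L_1$ forces $T$ to be symmetric, from which leader election is infeasible and hence $\xi(T) > D - 2$ as desired. The key observation is a diameter bound: since $\{c_0, c_1\}$ is the central edge of a tree of odd diameter $D = 2h + 1$, every node of $T$ lies within distance $h + 1$ of each of $c_0$ and $c_1$. When $D \geq 5$ we have $h + 1 \leq D - 2 = 2h - 1$, so the length restriction in the definition of $L_i$ is non-binding: $L_i$ is exactly the set of port sequences of all simple paths starting at $c_i$.

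Assuming $D \geq 5$ and $L_0 = L_1$, I would construct a non-trivial port-preserving automorphism $\phi \colon T \to T$ as follows. For each node $w$, let $\sigma_w$ denote the port sequence of the unique simple path from $c_0$ to $w$. Since $\sigma_w \in L_0 = L_1$, there is a unique simple path from $c_1$ with port sequence $\sigma_w$; define $\phi(w)$ to be its endpoint. Bijectivity follows by applying the same construction with the roles of $c_0, c_1$ swapped. Port-preservation holds edge by edge: if $\{w_1, w_2\}$ is an edge with $w_1$ closer to $c_0$ and respective ports $p$ at $w_1$, $q$ at $w_2$, then $\sigma_{w_2}$ extends $\sigma_{w_1}$ by $(p, q)$, so the path from $c_1$ ending at $\phi(w_2)$ also uses port $p$ at $\phi(w_1)$ and port $q$ at $\phi(w_2)$ on its final step. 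Since $\phi(c_0) = c_1 \neq c_0$, the automorphism is non-trivial, so by the characterization recalled in the introduction $T$ is symmetric.

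The case $D = 3$ (hence $D - 2 = 1$) needs a separate argument, because $L_i$ then contains only port sequences of paths of length at most one and the path-sequence construction above cannot define $\phi$ on nodes at distance $2$ from $c_i$. Here the diameter bound forces every non-central neighbor of $c_0$ or $c_1$ to be a leaf, so writing $\pi_i$ for the port at $c_i$ on the central edge and $Q_i$ for its set of leaf-ports, the list $L_i$ consists of the empty sequence, the pair $(\pi_i, \pi_{1-i})$, and the pairs $(q, 0)$ for $q \in Q_i$. A short case analysis on where $(\pi_0, \pi_1) \in L_0$ can appear in $L_1$ rules out $\pi_0 \neq \pi_1$, and the remaining entries then force $Q_0 = Q_1$; this yields a non-trivial port-preserving involution swapping $c_0 \leftrightarrow c_1$ and pairing the leaves via their common port numbers, so again $T$ is symmetric. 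I expect this diameter-$3$ corner case to be the main obstacle, since there $L_i$ is too short to ``see'' past one edge and the path-sequence construction does not directly apply; the rescue is the structural rigidity of diameter-$3$ trees, which makes $L_0, L_1$ small enough to analyze combinatorially.
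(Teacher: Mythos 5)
Your strategy cannot work, because the statement you reduce to --- that $L_0=L_1$ forces $T$ to be symmetric --- is false for non-separated trees, and the first step of your argument already contains the misreading that hides this. The sequences in $L_i$ record \emph{both} the outgoing and the incoming port of every traversed edge, so a path of length $\ell$ contributes a sequence of length $2\ell$; since simple paths starting at $c_i$ may have length up to $h+1$ (in the non-separated case they are allowed to cross the central edge), the truncation at sequence length $2h-1$ is very much binding: $L_i$ only sees paths of length at most $h-1$ plus one further outgoing port. Your inequality ``$h+1\le 2h-1$'' compares a path length with a sequence length. Concretely, here is a non-separated, non-symmetric tree with $L_0=L_1$ (so no repair of the automorphism construction is possible). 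Take $D=5$, $h=2$. Let the central edge $\{c_0,c_1\}$ have port $2$ at $c_0$ and port $0$ at $c_1$. At $c_0$, port $0$ leads to a node $v_0$ (entered at its port $2$) with two leaf children on its ports $0,1$, and port $1$ leads to a node $v_1$ (entered at its port $0$) with one leaf child on its port $1$. At $c_1$, port $1$ leads to $u_1$ (entered at its port $0$) with one leaf child on its port $1$, and port $2$ leads to $u_2$ (entered at its port $0$) with two leaf children on its ports $1,2$. This tree has odd diameter $5$ with central edge $\{c_0,c_1\}$; it is not separated, since for a leaf $x$ below $v_0$ one checks $g_x=c_0$, so $path(x,g_x)$ avoids the central edge; and a direct computation of all sequences of length at most $3$ (crossings allowed) gives $L_0=L_1=\{(0),(1),(2),(0,2),(1,0),(2,0),(0,2,0),(0,2,1),(1,0,1),(2,0,1),(2,0,2)\}$. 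Yet $T$ is not symmetric, because the two ports of the central edge differ; leader election is feasible in this tree given its map, just not in time $D-2$.

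This shows that the conclusion of the claim, $\xi(T)>D-2$, is genuinely weaker than symmetry, and a correct proof has to establish only this weaker fact. The paper's argument does so as follows: non-separatedness yields a path of length $2h$ avoiding the central edge, hence a leaf $w_0$ with $cand(w_0)=c_0$, $d(w_0,c_0)=h$, whose path enters $c_0$ by a port different from the central port at $c_1$; the length-$(2h-1)$ prefix of $seq(c_0,w_0)$ lies in $L_0=L_1$, and following it from $c_1$ stays on $c_1$'s side and reaches a leaf $w_1$ with $cand(w_1)=c_1$ and $seq(w_1,c_1)=seq(w_0,c_0)$. One then shows $V(w_0,D-2)=V(w_1,D-2)$, so any algorithm running in $D-2$ rounds --- with arbitrary advice, in particular with a full map --- makes $w_0$ and $w_1$ output identical sequences; since their candidates lie on opposite sides of the central edge, the two output paths would combine into a simple path of length at least $2(h+1)>D$, a contradiction. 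Finally, your separate $D=3$ analysis rests on the same misreading (with $2h-1=1$ the lists contain only single outgoing ports, not pairs), but this is moot given that the main implication you aim for is false.
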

Our proof of the claim proceeds in three steps. First, we find two leaves $w_0,w_1$ such that $w_0$'s candidate node is $c_0$, $w_1$'s candidate node is $c_1$, and $seq(w_0,c_0) = seq(w_1,c_1)$. We then show that $V(w_0,D-2) = V(w_1,D-2)$. Finally, we show that this implies that $\xi(T) > D-2$. In what follows, for any sequence $s$, we will denote by $\bar{s}$ the reverse of sequence $s$. 

{\bf Finding $w_0$ and $w_1$.} We first note that, since $T$ is not separated, there must be at least two nodes, say $\alpha,\beta$, such that $path(\alpha,\beta)$ does not use the central edge and has length $2h$. It follows that $\alpha$ and $\beta$ have the same candidate node, which, without loss of generality, we assume is $c_0$. Further, it follows that $d(\alpha,c_0) = d(\beta,c_0)=h$, and that the last port numbers in $seq(\alpha,c_0)$ and $seq(\beta,c_0)$ are different. Let $p_1$ be the port number at $c_1$ corresponding to the central edge. Let $w_0$ be a node in $\{\alpha,\beta\}$ such that the last port number in $seq(w_0,c_0)$ is not equal to $p_1$. We now set out to find a node $w_1$ such that $cand(w_1) = c_1$ and $seq(w_1,c_1) = seq(w_0,c_0)$. Since $d(c_0,w_0) = h$, it follows that $w_0$ is a leaf, so $seq(c_0,w_0)$ is a sequence of length $2h$ with last port number equal to 0. Let $\sigma$ be the prefix of length $2h-1$ of $seq(c_0,w_0)$. Since $\sigma$ is a sequence of length $2h-1$ that can be obtained by following a simple path starting at $c_0$, we know that $\sigma$ appears in $L_0$. Since $L_0=L_1$, it follows that $\sigma$ also appears in $L_1$. Let $w_1$ be the node that is reached by following the outgoing ports of $\sigma$ starting at $c_1$. By our choice of $w_0$, the first port number in $\sigma$ is not equal to $p_1$, so $path(c_1,w_1)$ does not use the central edge. It follows that $cand(w_1) = c_1$. Also, since $w_0$ and $w_1$ are leaves, it follows that $seq(w_0,c_0) = (0)\cdot \bar{\sigma} = seq(w_1,c_1)$. This completes the first step of the proof.

In what follows, let $e_0$ be the first edge on the path from $c_0$ to $w_0$, and let $e_1$ be the first edge on the path from $c_1$ to $w_1$. Let $T_0$ be the subtree induced by all nodes that can be reached by a simple path starting with edge $e_0$, and let $T_1$ be the subtree induced by all nodes that can be reached by a simple path starting with edge $e_1$. Since all nodes in $T_0$ (respectively, $T_1$) are at distance at most $h$ from $c_0$ (respectively, $c_1$), the fact that $L_0 = L_1$ implies that there is a port-preserving isomorphism between $T_0$ and $T_1$. Since $seq(w_0,c_0) = seq(w_1,c_1)$, it follows that such an isomorphism maps $w_0$ to $w_1$.

{\bf Showing that $V(w_0,D-2) \subseteq V(w_1,D-2)$} (a symmetric argument proves the reverse inclusion.) Since $D-2 = 2h-1$, it suffices to show that each sequence $\phi$ of at most $4h-1$ port numbers obtained by following a simple path $P_{\phi}$ starting at $w_0$ can also be obtained by following a simple path starting at $w_1$. Let $\phi$ be any such sequence. We consider two cases. First, suppose that $P_{\phi}$ does not contain $c_0$. Then $P_{\phi}$ lies entirely within $T_0$. Since there is a port-preserving isomorphism between $T_0$ and $T_1$ that maps $w_0$ to $w_1$, the same path exists in $T_1$ starting at $w_1$, as desired. Next, suppose that $P_{\phi}$ does contain $c_0$. We re-write $\phi = seq(w_0,c_0) \cdot \phi'$ for some port sequence $\phi'$.  
Since $seq(w_0,c_0) = seq(w_1,c_1)$ and $|seq(w_0,c_0)| = 2h$, it follows that the first $2h$ ports of $\phi$ form $seq(w_1,c_1)$. 
Further, since $|\phi| \leq 4h-1$, it follows that $|\phi'| \leq 2h-1$. Since $\phi'$ corresponds to a path starting at $c_0$, the sequence $\phi'$ appears in $L_0$. Since $L_0 = L_1$, we know that $\phi'$ appears in $L_1$, so $\phi'$ is a sequence of port numbers that can be obtained by following a simple path starting at $c_1$. Hence, $\phi$ can be obtained by following a simple path starting at $w_1$, as desired. This completes the second step of the proof. 

{\bf Showing that $V(w_0,D-2) = V(w_1,D-2)$ implies that $\xi(T) > D-2$.} To obtain a contradiction, suppose that $V(w_0,D-2)= V(w_1,D-2)$ and assume that there is an algorithm that solves election in $T$ within $D-2$ rounds (with any amount of advice). For any such algorithm, nodes $w_0$ and $w_1$ output the same value since $V(w_0,D-2) = V(w_1,D-2)$. In particular, they both output outgoing port sequences of equal length, say $\ell$. Since $w_0$ and $w_1$ have different candidate nodes, and both $w_0$ and $w_1$ must elect the same node, it follows that at least one of the paths obtained by following their outputs must cross the central edge. Hence, $\ell > h$. However, each of their outputs forms a path of length $\ell$ ending at the elected node. These two paths combine to form a simple path of length $2\ell \geq 2(h+1) > D$, a contradiction. This concludes the proof of the claim.

By Claim \ref{diff2}, lists $L_0$ and $L_1$ must differ in the case that $\xi(T) \leq D-2$, i.e., when leader election is possible in time $D-2$. The remainder of the advice string consists of the tuple $(j,k,m,p)$ as defined in the case of separated trees. This concludes the description of the advice. 

We now give the pseudo-code of the algorithm executed at an arbitrary node $v$ in tree $T$ using advice $A$.

\begin{algorithm}[H]
\caption{\texttt{OddElect}($A$)}
\begin{algorithmic}[1]
\State $e \leftarrow \emptyset$
\State $D \leftarrow$ diameter of $T$, as provided in $A$
\State $h \leftarrow (D-1)/2$
\State Use $D-2$ rounds of communication to learn $V(v,D-2)$
\State // Stage 1: compute $cand(v)$
\State {\bf If} $V(v,D-2)$ contains no endless paths starting at $v$:
\State \indent $e \leftarrow$ central edge of $V(v,D-2)$
\State \indent $cand(v) \leftarrow$ endpoint of $e$ closest to $v$
\State {\bf Else}:
\State \indent  $g_v \leftarrow$ the node $w$ in $V(v,D-2)$ furthest from $v$ such that every endless path in $V(v,D-2)$ passes through $w$
\State \indent {\bf If} $V(v,D-2)$ contains a node $w$ such that $d(w,g_v) > d(v,g_v)$:
\State \indent \indent $\ell \leftarrow h-1$
\State \indent {\bf Else}:
\State \indent \indent $\ell \leftarrow h$
\State \indent $cand(v) \leftarrow$ the node on $path(v,g_v)$ at distance $\ell$ from $v$
\State // Stage 2: determine whether or not $cand(v)$ should be elected as leader
\State {\bf If} the ``separated bit'' in $A$ is 1:
\State \indent {\bf If} $e = \emptyset$:
\State \indent \indent $e \leftarrow$ the edge incident to $cand(v)$ that lies on all endless paths starting at $v$ \label{findcentral}
\State \indent Compute $seq(v,v')$ for each $v'$ such that $path(v,v')$ does not contain edge $e$ \label{avoidcentral}
\State \indent $L_v \leftarrow$ the lexicographically-ordered list of all such $seq(v,v')$\label{defineL_v}
\State {\bf Else}:
\State \indent Compute every port sequence of length at most $2h-1$ corresponding to paths starting at $cand(v)$ \label{computepaths}
\State \indent $L_v \leftarrow$ the lexicographically-ordered list of all such sequences
\State Retrieve $j,k,m$ from $A$
\State {\bf If} the $(k+1)^{th}$ port number of the $j^{th}$ sequence in $L_v$ is equal to $m$:
\State \indent $electMyCandidate \leftarrow true$
\State {\bf Else}:
\State \indent $electMyCandidate \leftarrow false$
\State // Stage 3: compute output
\State {\bf If} $electMyCandidate = true$:
\State \indent Output the sequence of outgoing ports of $seq(v,cand(v))$
\State {\bf Else}:
\State \indent Retrieve $p$ from $A$
\State \indent Output the sequence of outgoing ports of $seq(v,cand(v))$ with $p$ appended to the end
\end{algorithmic}
\label{edgeelect}
\end{algorithm}

\begin{theorem}\label{ub-log}
Algorithm {\tt OddElect} solves leader election in trees $T$ with size $n$, odd diameter $D$ and $\xi(T)\leq D-2$, in time $D-2$ using $O(\log{n})$ bits of advice.
\end{theorem}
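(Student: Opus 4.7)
The plan is to verify correctness of \texttt{OddElect} in three stages matching the pseudocode, and then bound the advice length. Throughout, let $\{c_0,c_1\}$ be the central edge of $T$ and suppose the advice encodes the index $i \in \{0,1\}$ such that $c_i$ is the intended leader. The hypothesis $\xi(T) \leq D-2$ is used (via Claim \ref{diff2}) to ensure that the lists $L_0, L_1$ built by the oracle must differ, and hence that the indices $(j,k,m)$ are well defined.

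\textbf{Stage 1 (computing $cand(v)$).} If $V(v,D-2)$ contains no endless path starting at $v$, then $V(v,D-2) = T$ and $v$ reads off the central edge and its closer endpoint directly. Otherwise, I would argue that $d(v, cand(v)) \in \{h-1, h\}$, using $D-2 = 2h-1$ and the fact that every node is within distance $h$ of some endpoint of the central edge. The remainder of Stage~1 is a near-verbatim adaptation of Claims \ref{nodeonpath}, \ref{case1}, and \ref{case2} from the proof of Theorem \ref{even}, with the closer endpoint of the central edge playing the role previously played by the central node: one shows that $cand(v)$ always lies on $path(v, g_v)$, and that the branching test in the pseudocode selects the correct value of $d(v, cand(v))$.

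\textbf{Stage 2 (deciding whether to elect $cand(v)$).} The goal is to show that $electMyCandidate$ becomes true iff $cand(v) = c_i$. In the separated case, I would first verify that line~\ref{findcentral} correctly identifies $e$ as the central edge: by definition of a separated tree, every endless path from $v$ crosses the central edge, so among the edges incident to $cand(v)$ exactly the central edge lies on every endless path from $v$. Once $e$ is known, the simple paths from $v$ avoiding $e$ are precisely the simple paths inside the subtree $T'$ on $v$'s side of the central edge; since $T'$ has radius at most $h$ around $cand(v)$ and $d(v, cand(v)) \leq h$, every node of $T'$ is within $2h-1 = D-2$ of $v$, so $L_v$ faithfully encodes $T'$. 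A parallel argument handles the non-separated case, in which $L_v$ is built directly from $cand(v)$ over paths of length at most $2h-1$, matching the window used by the oracle. In both cases, Claims \ref{diff} and \ref{diff2} provide a lexicographic position $(j, k+1)$ at which $L_{c_i}$ and $L_{c_{1-i}}$ first differ; the test in the pseudocode then evaluates to true precisely when $v$ sits on the $c_i$ side.

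\textbf{Stage 3 and advice size.} When $electMyCandidate$ is true, $v$ outputs $seq(v,cand(v)) = seq(v, c_i)$, which ends at $c_i$. When false, $v$ outputs $seq(v, c_{1-i})$ extended by $p$, and $p$ is by construction the port at $c_{1-i}$ leading to $c_i$, so the extended sequence also ends at $c_i$. All nodes therefore output paths terminating at the common leader $c_i$. For the advice length, $D \leq n$ contributes $O(\log n)$ bits, the separated bit is $O(1)$, the port numbers $m$ and $p$ are each bounded by the maximum degree and hence take $O(\log n)$ bits, the integer $k \leq 2h-1 \leq D$ takes $O(\log n)$ bits, and $j$ is a rank in a list whose size is polynomial in $n$, so $j$ also takes $O(\log n)$ bits; the total is $O(\log n)$.

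\textbf{Main obstacle.} The delicate step is Stage~2 in the separated case. One must reconcile the list $L_v$, whose sequences start at $v$, with the oracle-chosen indices $(j,k,m)$ derived from $L_{cand(v)}$, whose sequences start at $cand(v)$. This requires a careful alignment showing that the distinguishing position survives the change of root, together with a boundary argument confirming that $v$'s depth-$(D-2)$ view suffices to reconstruct all of $T'$ (including the degrees at its deepest nodes), so that $L_v$ is locally computable and correctly sorted.
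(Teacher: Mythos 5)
Your Stages 1 and 3 and the advice-size accounting follow the paper's route, but there is a genuine gap at exactly the point you flag as the ``main obstacle,'' and the resolution you gesture at would not work. You read line \ref{avoidcentral} as building $L_v$ from port sequences that start at $v$, and then propose to ``align'' the oracle's indices $(j,k,m)$ (computed from lists rooted at $c_0$ and $c_1$) with this $v$-rooted list by arguing that the distinguishing position ``survives the change of root.'' It does not: the lexicographic rank $j$ and the prefix length $k$ are meaningless for a list of sequences rooted at a different node, and different nodes $v$ on the same side of the central edge would produce different lists with unrelated ranks. The step that makes the algorithm correct is precisely that $L_v$ must be the list of sequences rooted at $cand(v)$, so that $L_v$ is \emph{identical} to the list $L_i$ built by the oracle for $c_i=cand(v)$; then $(j,k,m)$ applies verbatim and no cross-root alignment is needed. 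What has to be verified instead is computability: in the separated case, that $V(v,D-2)$ contains the entire subtree on $v$'s side of the central edge (your stated justification only gives distance $2h$, not $2h-1$; the bound $2h-1$ needs the separatedness hypothesis itself, since two same-side nodes at distance $2h$ would force a gateway on $v$'s side); and in the non-separated case, that $v$ can reconstruct all candidate-rooted sequences of length at most $2h-1$, which the paper does by noting $v$ knows $V(cand(v),h-1)$ and can extend by one outgoing port using the degrees of nodes at distance $h-1$ from $cand(v)$ (line \ref{computepaths}). You assert a ``parallel argument'' here without this reconstruction step.

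In short, the skeleton (three stages, use of Claims \ref{diff} and \ref{diff2} to guarantee the advice tuple exists, the $O(\log n)$ count) matches the paper, but the heart of Stage 2 --- proving $L_v=L_{cand(v)}$ as candidate-rooted lists and that $v$'s truncated view suffices to compute them --- is left open, and the substitute plan (rank alignment across a change of root) fails. Until that step is carried out as above, the proof is incomplete.
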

\begin{proof}
We begin by proving the correctness of each stage of the algorithm.

{\bf First stage.} We must show that each node $v$ correctly computes $cand(v)$. If $V(v,D-2)$ contains no endless paths starting at $v$, then $V(v,D-2)$ consists of the entire tree. In this case, $v$ can find the central edge by inspection and calculate $cand(v)$ as the closest endpoint of the central edge. Otherwise, it follows that $d(v,cand(v)) \in \{h-1,h\}$. The following result shows that $cand(v)$ always lies on the path from $v$ to $g_v$.

\begin{claim}\label{onpath}
If $v$ is a node such that $d(v,cand(v)) \in \{h-1,h\}$, then $cand(v)$ is on $path(v,g_v)$.
\end{claim}
To prove the claim, assume, without loss of generality, that $cand(v) = c_0$. Let $Q_1 = path(v,c_0)$, and let $Q_2$ be any path starting from $c_1$ of length $h$ that does not use the central edge. Let $Q$ be the path $Q_1\cdot (c_0,c_1) \cdot Q_2$. Note that $|Q| \geq (h-1) + 1 + h = 2h = D-1$. In particular, in $V(v,D-2)$, $Q$ is an endless path. Therefore, by the definition of $g_v$, node $g_v$ is on path $Q$. If $g_v $ is in $(c_0,c_1)\cdot Q_2$, then $c_0$ is on $path(v,g_v)$, and we are done. Finally, we show that $g_v$ cannot appear before $c_0$ in $Q$, by way of contradiction. Assume it does. By the definition of $g_v$, there must exist some endless path in $V(v,D-2)$ that contains $g_v$ but not $c_0$. In particular, there must be a path $Q'$ of length at least $D-1$ with $v$ as one endpoint that passes through $g_v$ but not $c_0$. Let $v'$ be the endpoint of $Q'$ not equal to $v$, and let $w$ be the node in $Q'$ that is furthest from $v$ and on $path(v,c_0)$. We consider the length of the path $Q'' = path(v',w) \cdot path(w,c_0) \cdot (c_0,c_1) \cdot Q_2$. (See Figure \ref{proofpathsodd} for an illustration of the paths defined above.) Since $|Q'| \geq D-1 = 2h$, it follows that $|path(v,w)| + |path(w,v')| \geq 2h$. Also, note that $|path(v,c_0)| = |path(v,w)| + |path(w,c_0)|$. So, $|Q''| \geq (2h - |path(v,w)|) + (|path(v,c_0)| - |path(v,w)|) + 1 + h = 3h + 1 + d(v,cand(v)) - 2\cdot d(v,w)$. Finally, since $w$ appears before $c_0$
in $Q$, we have $d(v,w) < d(v,c_0) \leq h$. Therefore, $|Q''| \geq 3h + 1 + (h-1) - 2(h-1) = 2h + 2 > D$, a contradiction. This concludes the proof of Claim \ref{onpath}.

\begin{figure}[!ht]
\begin{center}
\includegraphics[scale=0.8]{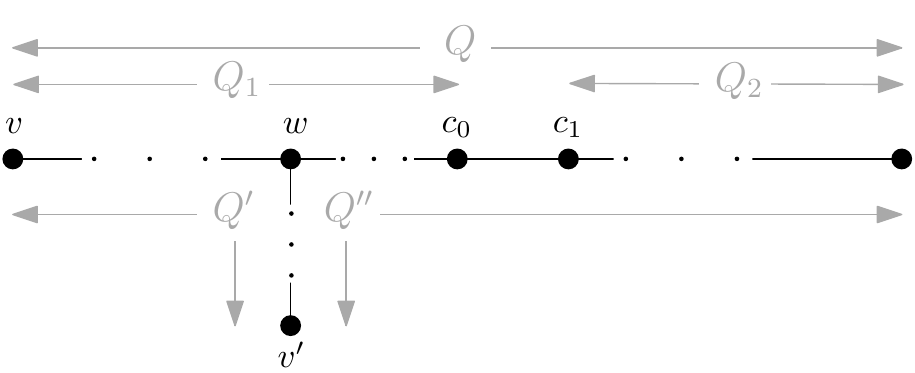}
\end{center}
\caption{Paths $Q_1,Q_2,Q,Q',Q''$ as defined in the proof of Claim \ref{onpath}.}
\label{proofpathsodd}
\end{figure}

We will use the following claim to show that each $v$ can determine its exact distance from $cand(v)$.

\begin{claim}\label{finddist}
Consider any node $v$ such that $d(v,cand(v)) \in \{h-1,h\}$. There exists a node $w$ in $V(v,D-2)$ such that $d(w,g_v) > d(v,g_v)$ if and only if $d(v,cand(v)) = h-1$.
\end{claim}

We first prove the `` if'' direction. Suppose that $d(v,cand(v)) = h-1$. Let $w$ be a node such that $cand(w) = cand(v)$ and $d(w,cand(w)) = h$. The existence of $w$ is guaranteed by the definition of the central edge. Thus, we have $d(v,cand(v)) < d(w,cand(v))$. Further, we have $d(v,w) \leq d(v,cand(v)) + d(cand(v),w) = 2h-1 = D-2$, so $w$ is a node in $V(v,D-2)$. By Claim \ref{onpath}, node $cand(v)$ is on $path(v,g_v)$, so $d(v,g_v) = d(v,cand(v)) + d(cand(v),g_v) < d(w,cand(v)) + d(cand(v),g_v) = d(w,cand(w)) + d(cand(w),g_v) \leq d(w,g_v)$, as required. 

Next we prove the ``only if'' direction. Suppose that $d(v,cand(v)) = h$. Let $w$ be any node in $V(v,D-2)$. In particular, this means that $d(v,w) \leq D-2 = 2h-1$. There are two cases to consider based on whether or not $cand(v)=cand(w)$, i.e., whether or not $v$ and $w$ are on the same side of the central edge. In the first case, suppose that $cand(v)=cand(w)$. It follows that $d(w,cand(v)) \leq h = d(v,cand(v))$, so $d(w,g_v) \leq d(w,cand(v)) + d(cand(v),g_v) \leq d(v,cand(v)) + d(cand(v),g_v) = d(v,g_v)$ (where the last equality follows from Claim \ref{onpath}). Hence, $d(v,g_v) \geq d(w,g_v)$, as required. In the second case, suppose that $cand(v) \neq cand(w)$. It follows that $path(v,cand(v))$  and $path(w,cand(v))$ intersect only at node $cand(v)$ since the shortest path from $w$ to $cand(v)$ passes through $cand(w)$ first. Therefore, $d(v,w) = d(v,cand(v)) + d(cand(v),w)$. Since $w$ is in $V(v,D-2)$, it follows that $d(v,w) \leq D-2 = 2h-1$, so $d(cand(v),w) = d(v,w) - d(v,cand(v)) \leq (2h-1) - h = h-1$. Thus, $d(w,g_v) \leq d(w,cand(v)) + d(cand(v),g_v) \leq (h-1) + d(cand(v),g_v) < d(v,cand(v))+ d(cand(v),g_v) = d(v,g_v)$ (where the last equality follows from Claim \ref{onpath}). Hence, $d(v,g_v) \geq d(w,g_v)$, as required. This concludes the proof of the claim.

By Claim \ref{onpath}, $cand(v)$ is precisely the node at distance $d(v,cand(v))$ from $v$ on $path(v,g_v)$.  By Claim \ref{finddist}, we see that the value of $\ell$ computed in the first stage of the algorithm is equal to $d(v,cand(v))$. It follows that $v$ correctly calculates $cand(v)$ during stage 1.

{\bf Second stage.}
To prove the correctness of this stage of the algorithm, we show that if $cand(v) = c_i$ for some $i \in \{0,1\}$, then $L_v = L_i$ (as defined in the advice construction). In the case where $T$ is a separated tree, we see that the construction on lines \ref{avoidcentral} and \ref{defineL_v} matches the definition of $L_a$, as long as
$e$ is the central edge.
If the central edge was assigned to $e$ during stage 1, then $e$ is still the central edge at line \ref{avoidcentral}. Otherwise, we must show that the central edge is assigned to $e$ at line \ref{findcentral}. The following result confirms that this is the case.

\begin{claim}
Suppose that $T$ is separated and consider any node $v$ such that $V(v,D-2)$ contains at least one endless path starting at $v$. The central edge of $T$ is the edge incident to $cand(v)$ that lies on all endless paths starting at $v$.
\end{claim}
To prove the claim, recall that, since $T$ is separated, $path(v,g_v)$ contains the central edge. Further, by the definition of $g_v$, every endless path starting at $v$ passes through $g_v$. It follows that every endless path starting at $v$ contains the central edge, which proves the claim.

In the case where $T$ is not a separated tree, it suffices to show that line \ref{computepaths} is possible to carry out, i.e., that each node $v$ can compute every port sequence of length at most $2h-1$ corresponding to paths starting at $cand(v)$. To see why this is the case, note that $D-2 = 2h-1$, so each node $v$ knows $V(v,2h-1)$. Since $d(v,cand(v)) \leq h$ and $v$ has computed $cand(v)$, it follows that $v$ also knows $V(cand(v),h-1)$. From $V(cand(v),h-1)$, $v$ can compute all port sequences of length at most $2h-2$ corresponding to paths of length $h-1$ starting at $cand(v)$, and, using the degrees of nodes at distance $h-1$ from $cand(v)$, $v$ can compute all port sequences of length at most $2h-1$ corresponding to paths starting at $cand(v)$.
This matches the definition of $L_i$ in the case when the tree is not separated. Therefore, regardless of whether $T$ is separated or not, the list $L_v$ is identical to the list $L_i$ corresponding to $v$'s candidate $c_i$, as produced in the advice construction.

{\bf Third stage.}
To prove the correctness of this stage of the algorithm, we show that all nodes output a sequence of outgoing ports leading to the same node. Without loss of generality, assume that, in the advice construction, the $(k+1)^{th}$ port of the $j^{th}$ sequence in $L_0$ is equal to $m$, while the $(k+1)^{th}$ port of the $j^{th}$ sequence in $L_1$ is not equal to $m$. Hence, $p$ is defined as the port leading from $c_1$ to $c_0$. We will show that the output of every node leads to node $c_0$. We showed above that, in stage 2 of the algorithm, each node $v$ with $cand(v) = c_0$ sets $L_v$ to $L_0$, so each such $v$ sets $electMyCandidate = true$. Moreover, each node $v$ with $cand(v) = c_1$ sets $L_v$ to $L_1$, so each such $v$ sets $electMyCandidate = false$. It follows that each node $v$ with $cand(v) = c_0$ outputs a sequence of outgoing ports leading to $c_0$, while each node $v$ with $cand(v) = c_1$ outputs a sequence of outgoing ports leading to $c_1$, with port $p$ appended. However, $p$ leads from $c_1$ to $c_0$, so the output of each node $v$ with $cand(v) = c_1$ leads to $c_0$. This concludes the proof of correctness.


Finally, note that the advice consists of 1 ``separated bit'', the value of $D$, and four integers whose values are each bounded above by $n$. Thus, the size of advice is $O(\log{n})$.
\end{proof}

We finish this section by proving the matching lower bound of $\Omega (\log n)$ on the minimum size of advice sufficient to perform leader election in time $diam-2$, for odd values of $diam$. As a first step, we analyze the following abstract {\em pair breaking} problem with parameter $Z$, where $Z$ is a positive integer. Denote by $X$ the set of pairs $(a,b)$,
where $a,b \in \{1,\dots , Z\}$ and $a < b$. The set $X$ is coloured with $c$ colours by a colouring function ${\cal C} : X \longrightarrow \{1, \dots ,c\}$. 
Knowing $Z$ and ${\cal C}$, the goal is to find a function ${\cal B}: \{1, \dots , Z\} \times \{1, \dots ,c\} \longrightarrow \{0,1\}$ with the following property:
${\cal B}(a,\gamma)\neq {\cal B}(b,\gamma)$, where $\gamma = {\cal C}((a, b))$. Such functions will be called {\em  symmetry breaking} functions.  What is the minimum integer $c$ for which there exists a
 colouring function ${\cal C} : X \longrightarrow \{1, \dots ,c\}$  such that this goal is attainable?
 
 We can interpret this problem as a game, in which for an instance $(a,b)$ players get each one part of the instance ($a$ or $b$), together with the colour of the instance,
 and each of them has to output ``I'' or ``you'', in such a way that they agree on who is the winner.

In what follows, we will consider the number of colours used by a fixed colouring function ${\cal C}$ on certain subsets of $X$. To this end, we define the following notation. For any $S \subseteq \{1,\ldots,Z\}$, define $X_S = \{(a,b)\ |\ a,b \in S, a < b\}$, and define $c_S$ to be the number of different colours used by ${\cal C}$ on the elements of $X_S$.

\begin{lemma}\label{lbcs}
Consider the pair breaking problem for parameter $Z$.
Suppose that there exists a colouring function ${\cal C} : X \longrightarrow \{1, \dots, c\}$ for which there exists a symmetry breaking function ${\cal B}$. For any positive integer $k \geq 2$ and any $S \subseteq \{1,\ldots,Z\}$, if $|S| \geq \frac{3}{2}k! + \sum_{i=0}^{k-3} \frac{k!}{(k-i)!}$, then $c_S \geq k$.
\end{lemma}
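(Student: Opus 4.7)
The plan is to prove the bound by induction on $k$. Let $f(k) = \frac{3}{2}k! + \sum_{i=0}^{k-3}\frac{k!}{(k-i)!}$ for brevity; the goal is to show $|S| \geq f(k) \Rightarrow c_S \geq k$. The base case $k=2$ has $f(2)=3$: if $|S| \geq 3$ but only one colour were used on its pairs, then $\mathcal{B}(\cdot, 1)$ restricted to $S$ would have to 2-colour a triangle so that every pair is bichromatic, which is impossible. Hence $c_S \geq 2$.

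For the inductive step, fix $k \geq 3$, assume the lemma for $k-1$, and suppose toward a contradiction that $|S| \geq f(k)$ while $c_S \leq k-1$. Since $|S| \geq 3$, at least one colour $\gamma$ appears on $X_S$. Partition $S = S_0 \cup S_1$ according to $\mathcal{B}(\cdot,\gamma)$. The defining property of $\mathcal{B}$ forces every pair of colour $\gamma$ to have endpoints in different parts, so colour $\gamma$ appears in neither $X_{S_0}$ nor $X_{S_1}$. Therefore $c_{S_0} \leq c_S - 1 \leq k-2$ and likewise $c_{S_1} \leq k-2$. The contrapositive of the inductive hypothesis then gives $|S_0|, |S_1| \leq f(k-1) - 1$ (noting that $f(k-1)$ is an integer for $k \geq 3$), whence $|S| \leq 2f(k-1) - 2$.

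The remaining task is the arithmetic inequality $f(k) > 2f(k-1) - 2$ for $k \geq 3$. Substituting $j = k-i$ to rewrite the inner sum as $\sum_{j=3}^{k}\frac{k!}{j!}$ and using $\frac{k!}{j!} = k\cdot\frac{(k-1)!}{j!}$ reduces the computation to
\[
f(k) - 2f(k-1) \;=\; \frac{3(k-2)}{2}(k-1)! + (k-2)\sum_{j=3}^{k-1}\frac{(k-1)!}{j!} + 1,
\]
which is at least $1$ for every $k \geq 3$. In particular $|S| \leq 2f(k-1) - 2 < f(k) \leq |S|$ is the required contradiction, closing the induction.

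The main obstacle is not the combinatorial step, which is the clean ``pick any colour $\gamma$ used on $X_S$ and split $S$ using $\mathcal{B}(\cdot,\gamma)$'' argument familiar from bipartite-covering lower bounds, but rather the algebraic bookkeeping that tailors the specific expression in the lemma's hypothesis to survive each halving step. The particular shape of the extra sum $\sum_{i=0}^{k-3}\frac{k!}{(k-i)!}$ is chosen precisely so that the gap $f(k) - 2f(k-1)$ remains strictly positive, which is exactly what the doubling bound $|S| \leq 2(f(k-1)-1)$ demands.
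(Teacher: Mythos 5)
Your proof is correct, but it follows a genuinely different route from the paper's. The paper's induction steps from $k$ to $k+1$ and uses $\mathcal{B}$ only in the base case: it fixes the element $s_m$, applies the Pigeonhole Principle to the pairs $(s_i,s_m)$ to extract a subset $S'$ of size at least $(|S|-1)/c_S$ all of whose pairs with $s_m$ share one colour $\alpha$, invokes the inductive hypothesis on $S'$, and then uses the triangle (base-case) argument to show that $\alpha$ cannot occur inside $X_{S'}$, so $c_S \geq c_{S'}+1$; the set shrinks by a factor $c_S \leq k$ per step, which is what forces the factorial-type threshold. You instead exploit $\mathcal{B}$ inside the inductive step: choosing any colour $\gamma$ present on $X_S$ and splitting $S$ by the bit $\mathcal{B}(\cdot,\gamma)$, every $\gamma$-coloured pair is separated, so each part omits $\gamma$ and uses at most $c_S-1$ colours, and the set shrinks only by a factor of $2$. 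Your bookkeeping is sound: the identity $f(k)-2f(k-1)=\frac{3(k-2)}{2}(k-1)!+(k-2)\sum_{j=3}^{k-1}\frac{(k-1)!}{j!}+1\geq 1$ holds for $k\geq 3$, the integrality remark is valid (and not even essential, since $|S|<2f(k-1)<f(k)$ already gives the contradiction), and the contrapositive use of the inductive hypothesis is legitimate even when one part is empty. In fact your halving recursion proves more than Lemma \ref{lbcs} asks: it yields $c_S\geq k$ already when $|S|\geq 2^{k-1}+1$, which would strengthen Corollary \ref{lb-colour} to $c\in\Omega(\log Z)$ rather than $\Omega(\sqrt{\log Z})$ (this does not change the asymptotic $\Omega(\log n)$ bound of Theorem \ref{lb-log}, whose bottleneck is the $\log\log$ coming from the reduction in Lemma \ref{reduct}). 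The only thing the paper's route buys is an inductive step that is purely about the colouring; yours leans on the existence of $\mathcal{B}$, which the hypothesis supplies in any case, so your argument is both simpler and quantitatively sharper.
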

\begin{proof}
We prove the result by induction on $k$. In the base case, i.e., when $k=2$, suppose that $|S| \geq 3$, and, to obtain a contradiction, assume that $c_S < 2$. Then, for any $a,b,c \in S$ along with the single colour $\gamma$ used to colour $X_S$, we require all of the following:
\begin{enumerate}
\item ${\cal B}(a,\gamma) \neq {\cal B}(b,\gamma)$
\item ${\cal B}(a,\gamma) \neq {\cal B}(c,\gamma)$
\item ${\cal B}(b,\gamma) \neq {\cal B}(c,\gamma)$
\end{enumerate}

From (2), we see that ${\cal B}(a,\gamma) \neq {\cal B}(c,\gamma)$, so, by (3), it follows that ${\cal B}(a,\gamma) = {\cal B}(b,\gamma)$. This contradicts (1), so our assumption that $c_S < 2$ was incorrect.

As induction hypothesis, assume that, for some $k \geq 2$, if $|S| \geq \frac{3}{2}k! + \sum_{i=0}^{k-3} \frac{k!}{(k-i)!}$, then $c_S \geq k$. We now consider any set $S = \{s_1,\ldots,s_m\}$ with $m \geq \frac{3}{2}(k+1)!+ \sum_{i=0}^{k-2} \frac{(k+1)!}{(k+1-i)!}$, and prove that $c_S \geq k+1$. To obtain a contradiction, assume that $c_S \leq k$.

Consider the subset of $X_S$ consisting of $(s_i,s_m)$ for all $i \in \{1,\ldots,m-1\}$. By the Pigeonhole Principle, at least $\frac{m-1}{c_S}$ of these pairs are assigned the same colour $\alpha$ by ${\cal C}$. Let $S' = \{s_i\ |\ {\cal C}((s_i,s_m)) = \alpha\ \}$ and note that $|S'| \geq \frac{m-1}{c_S}$. We first show that $c_{S'} \geq k$, and then we will prove that this leads to a contradiction. From our assumption that $c_S \leq k$, note that $|S'| \geq \frac{m-1}{c_S} \geq \frac{m-1}{k} \geq \frac{m-1}{k+1}$. Further $m-1 \geq  \frac{3}{2}(k+1)!+ \sum_{i=0}^{k-2} \frac{(k+1)!}{(k+1-i)!} - 1 = \frac{3}{2}(k+1)!+ \sum_{i=1}^{k-2} \frac{(k+1)!}{(k+1-i)!} = \frac{3}{2}(k+1)!+ \sum_{i=0}^{k-3} \frac{(k+1)!}{(k-i)!}$. It follows that $|S'| \geq \frac{m-1}{k+1} \geq \frac{3}{2}k!+ \sum_{i=0}^{k-3} \frac{k!}{(k-i)!}$. So, by the induction hypothesis, $c_{S'} \geq k$.

Finally, we show that $c_{S'} \geq k$ contradicts our assumption that $c_S \leq k$. To do this, we first show that ${\cal C}((a,b)) \neq \alpha$ for every $(a,b) \in X_{S'}$. Consider an arbitrary $(a,b) \in X_{S'}$, and let $S'' = \{a,b,s_m\}$. Note that $X_{S''} = \{(a,b),(a,s_m),(b,s_m)\}$, and, from the base case above, note that $c_{S''} \geq 2$. Since $a,b \in S'$, it follows that $(a,s_m)$ and $(b,s_m)$ are assigned $\alpha$ by ${\cal C}$. Therefore, the remaining element of $X_{S''}$, i.e., $(a,b)$, must be assigned a colour other than $\alpha$, as desired. However, this implies that $c_S > k$. Indeed, at least $k$ colours, all different from $\alpha$ 
(which ${\cal C}$ uses to colour the pairs $(s_i,s_m)$ for each $s_i \in S'$) are used
 to colour $X_{S'}$ (see Figure \ref{grid} for an example.) This completes the induction. 
\end{proof}

\begin{figure}[!ht]
\begin{center}
\includegraphics[scale=0.8]{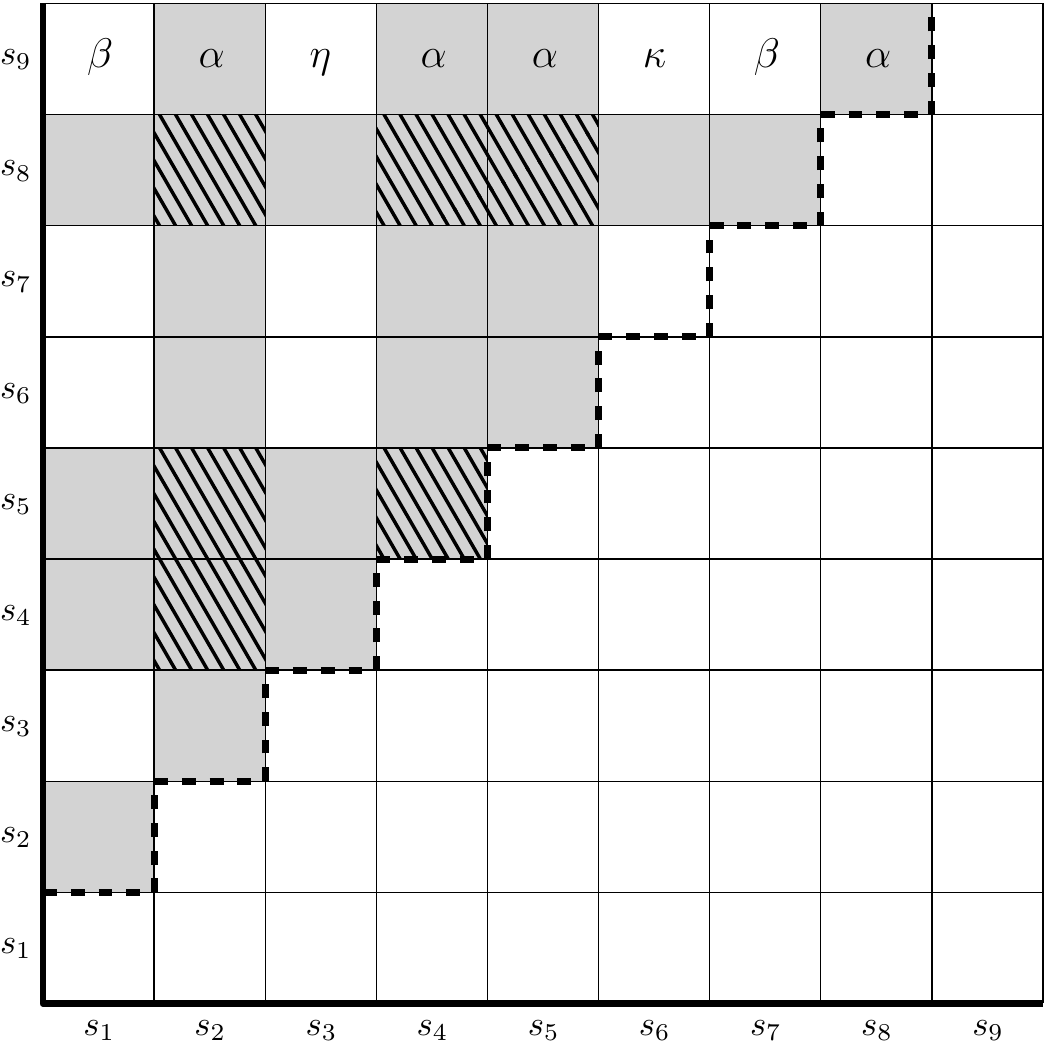}
\end{center}
\caption{An illustration of the colouring argument in Lemma \ref{lbcs}. In this example, $S = \{s_1,\ldots,s_9\}$ and $X_S$ is the section of the grid that lies above the dotted line. Assume that ${\cal C}$ has assigned an integer colour to every pair in $X_S$, and that the entries along row $s_9$ are coloured as indicated.  Then, $S' = \{s_2,s_4,s_5,s_8\}$, and we have highlighted the corresponding rows and columns. The set $X_{S'}$ consists of the entries where these rows and columns intersect, i.e., the union of the patterned entries. As demonstrated in the proof, for every $(a,b) \in X_{S'}$, entry $(a,b)$ is not coloured $\alpha$, since otherwise the three entries $(a,b)$, $(a,s_9)$, and $(b,s_9)$ would contradict the base case. Therefore, the total number of colours needed to colour $X_S$ is at least one greater than the number needed to colour $X_{S'}$.}
\label{grid}
\end{figure}

\begin{corollary}\label{lb-colour}
For any colouring function ${\cal C} : X \longrightarrow \{1, \dots ,c\}$ for which there exists a symmetry breaking function, we have $c \in \Omega(\sqrt{\log{Z}})$.
\end{corollary}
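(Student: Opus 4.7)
The plan is to instantiate Lemma \ref{lbcs} with the full set $S = \{1,\ldots,Z\}$, in which case $c_S = c$. The lemma then guarantees that $c \geq k$ for every integer $k \geq 2$ satisfying $Z \geq \frac{3}{2}k! + \sum_{i=0}^{k-3} \frac{k!}{(k-i)!}$. So the corollary reduces to finding how large $k$ may be taken while keeping the right-hand side below $Z$.

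First I would simplify the right-hand side. Re-indexing the sum via $j = k - i$ gives $\sum_{i=0}^{k-3}\frac{k!}{(k-i)!} = k!\sum_{j=3}^{k}\frac{1}{j!} < k!\bigl(e - 1 - 1 - \tfrac{1}{2}\bigr) < k!$, so the full expression $\frac{3}{2}k! + \sum_{i=0}^{k-3}\frac{k!}{(k-i)!}$ is bounded above by $\frac{5}{2}k!$.

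Next I would use the crude bound $k! \leq k^k$ so that $\log_2\bigl(\tfrac{5}{2}k!\bigr) \leq k\log_2 k + O(1)$. Choosing $k = \lfloor \sqrt{\log_2 Z}\rfloor$ gives $k\log_2 k = O\bigl(\sqrt{\log Z}\,\log\log Z\bigr) = o(\log Z)$, so $\frac{5}{2}k! \leq Z$ for all sufficiently large $Z$. Lemma \ref{lbcs} then yields $c \geq \lfloor\sqrt{\log_2 Z}\rfloor$, which is $\Omega(\sqrt{\log Z})$.

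There is no real obstacle here: the corollary is a routine quantitative consequence of Lemma \ref{lbcs}, obtained by inverting the factorial growth in its hypothesis. The only care needed is to check that the tail sum $\sum_{i=0}^{k-3}\frac{k!}{(k-i)!}$ does not inflate the bound beyond a constant multiple of $k!$, which it does not.
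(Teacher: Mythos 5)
Your proposal is correct and follows essentially the same route as the paper: instantiate Lemma \ref{lbcs} with $S=\{1,\ldots,Z\}$, bound the factorial threshold (the paper uses $\sum_{i=0}^{k-3}\frac{k!}{(k-i)!}\leq k^k$ and $Z\leq 3k^k$, you use the tail of the exponential series to get $\frac{5}{2}k!$), and invert via $k!\leq k^k$ to obtain $c\geq k\in\Omega(\sqrt{\log Z})$. The only difference is bookkeeping: you pick $k=\lfloor\sqrt{\log_2 Z}\rfloor$ explicitly and verify the hypothesis, while the paper solves the threshold equation for $k$; both yield the same bound.
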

\begin{proof}
We apply Lemma \ref{lbcs} with $S = \{1,\ldots,Z\}$, with $Z \geq 3$. We set $Z = \frac{3}{2}k! + \sum_{i=0}^{k-3} \frac{k!}{(k-i)!}$ and find a lower bound for $k$ in terms of $Z$. Clearly $k \geq 2$. 
Note that $k! \leq k^k$, and that $\sum_{i=0}^{k-3} \frac{k!}{(k-i)!} = 1 + k + \ldots + \frac{k!}{3!} \leq (k-2)k^{k-3} \leq k^k$. Hence, $Z \leq 3k^k$, so $\log{Z} \leq 3k\log{k} \leq 3k^2$. Therefore, $c \geq k \geq \sqrt{\frac{1}{3}\log{Z}}$, as required.
\end{proof}

We will now use the above result on the pair breaking problem to obtain a lower bound on the size of advice for leader election in time $D-2$, for trees with odd diameter $D$.
To this end, we define the following class of trees of odd diameter $D$, called {\em double brooms}. These trees will have size $m$ which satisfies 
$\delta (\delta +1)=(m-(D-3)-2)/2$ for some integer $\delta$. Consider any such integers $m$ and $\delta$.
Let $f:  \mathbb{Z}^+ \longrightarrow
(\mathbb{Z}^+)^{\delta}$ be any (computable) bijection from positive integers to $\delta$-tuples of positive integers. 
We define the double broom $DB_{\delta}(a,b)$ of size $m$, with positive integer parameters $a<b\leq \delta^{\delta}$ as follows.  The {\em handle} of $DB_{\delta}(a,b)$ is a path of length $D-4$, with endpoints $v_a$ and $v_b$, such that the port sequence $seq(v_a,v_b)=(0,0,1,1,0,0,\dots ,1,1,0,0)$ is a palindrome. 
To each endpoint of the handle, attach the following tree of height 2 rooted at this endpoint. The first level of the tree $T_a$ attached to $v_a$ consists of $\delta +1$ nodes
$w_1,\dots ,w_{\delta +1}$. The port at $v_a$ corresponding to the edge $\{v_a, w_i\}$ is $i$ and the port at $w_i$ corresponding to this edge is 0.
Let $f(a)=(a_1,\dots a_{\delta})$ and let $a_{\delta +1}=(m-(D-3)-2)/2-\delta -\sum_{i=1}^k a_i$. (The term $a_{\delta +1}$ is defined in this way to ensure that the entire
double broom has size exactly $m$.) For each $i \in \{1,\dots ,\delta+1\}$, attach 
$a_i$ leaves to node $w_i$.  The tree $T_b$ attached to the endpoint $v_b$ is defined analogously. This concludes the description of the double broom $DB_{\delta}(a,b)$, see Fig. \ref{doublebroom}.

\begin{figure}[!ht]
\begin{center}
\includegraphics[scale=1]{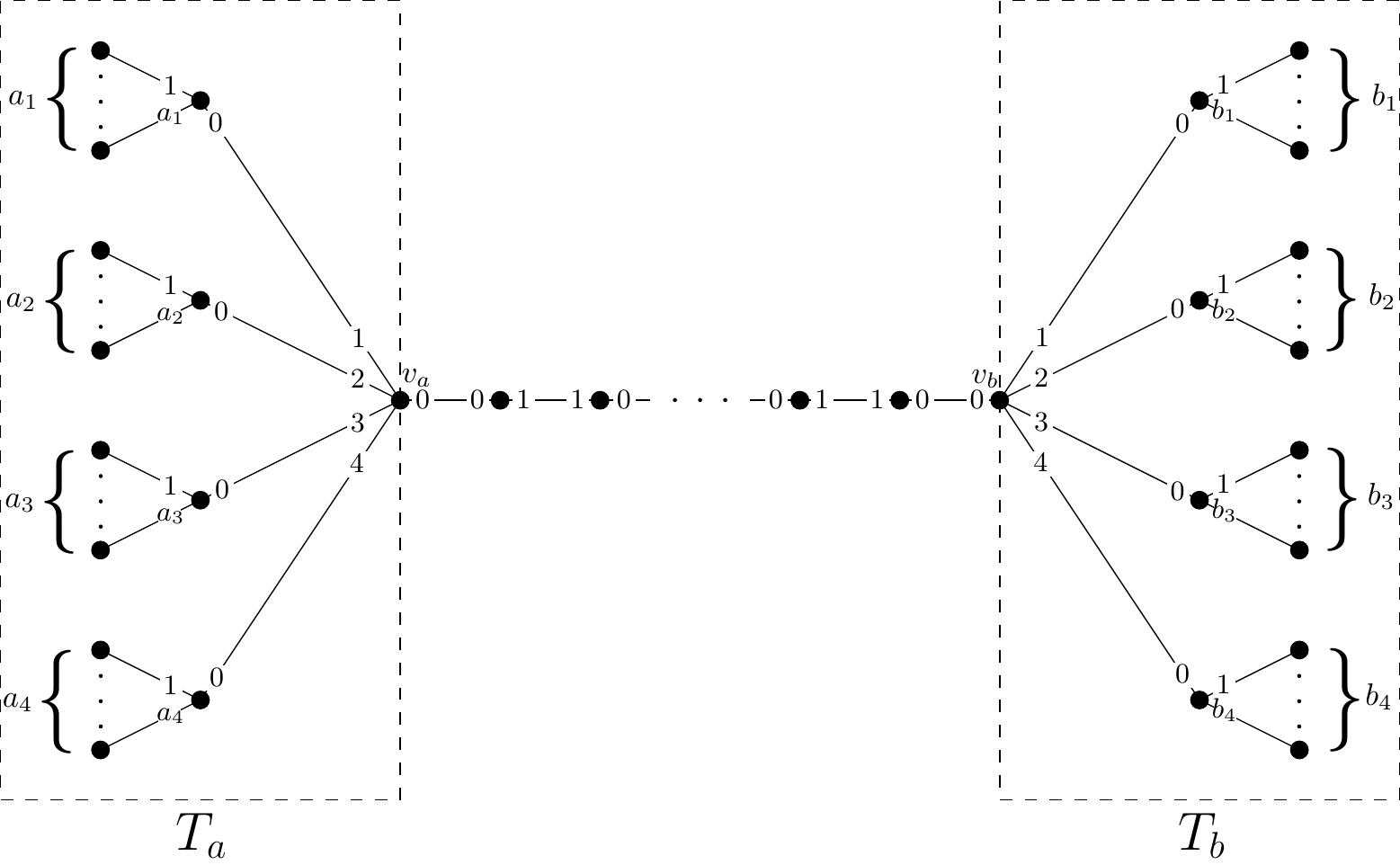}
\end{center}
\caption{Double broom $DB_{3}(a,b)$}
\label{doublebroom}
\end{figure}

The following lemma gives a reduction from the pair breaking problem to leader election in time $D-2$ for double brooms.

\begin{lemma}\label{reduct}
Suppose that there exists an algorithm $ELECT$ solving leader election in time $D-2$ for the class of double brooms $DB_{\delta}(a,b)$ with odd diameter $D$, fixed $\delta$, and all positive integer parameters $a<b\leq \delta^{\delta}$, which uses advice of size $o(\log\log \delta^{\delta})$. Then, for the pair breaking problem with
parameter $Z= \delta^{\delta}$, there exists a colouring function that uses $o(\sqrt{\log Z})$ colours for which there exists a symmetry breaking function.
\end{lemma}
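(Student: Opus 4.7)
The plan is a direct encoding of $ELECT$ into a solution of the pair breaking problem: the colour of a pair $(a,b)$ will be the advice string produced by the oracle on $DB_{\delta}(a,b)$, and each player's ``symmetry breaking'' verdict will be produced by simulating $ELECT$ at a canonical leaf on its own side of the broom. Formally, for each pair $(a,b) \in X$, let $\gamma_{a,b}$ denote the advice string produced by the oracle for $DB_{\delta}(a,b)$, and set ${\cal C}((a,b)) = \gamma_{a,b}$. Since every $\gamma_{a,b}$ has length at most $s = o(\log\log \delta^{\delta}) = o(\log\log Z)$, the number of distinct colours used is at most $2^{s+1} = 2^{o(\log\log Z)} = (\log Z)^{o(1)} = o(\sqrt{\log Z})$, as required.

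The geometric heart of the reduction is a visibility statement. Let $\ell_a$ be the leaf of $T_a$ reached from $v_a$ by following port $1$ to $w_1$ and then port $1$ to a specific child of $w_1$; this leaf exists because $f(a)$ is a tuple of positive integers, so $a_1 \geq 1$. Since $\ell_a$ lies at distance $2$ from $v_a$, its view $V(\ell_a, D-2)$ reaches $v_b$ exactly at the boundary of radius $D-2$ and goes no further into $T_b$: only the degree $\delta + 2$ of $v_b$ (the same for every $b$) becomes visible, while the $w$-nodes of $T_b$ are at distance $D-1$ and the leaves of $T_b$ at distance $D$. Hence $V(\ell_a, D-2)$ is determined entirely by $a$, together with the global parameters $\delta$, $D$ and the palindromic port sequence of the handle; the analogous statement holds for $\ell_b$. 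We now define ${\cal B}(a,\gamma) = 1$ if the simulation of $ELECT$ on input $V(\ell_a, D-2)$ with advice $\gamma$ outputs a port sequence of length at most $(D-1)/2$, and ${\cal B}(a,\gamma) = 0$ otherwise; this is well-defined because the input data depend only on $a$ and $\gamma$.

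To verify the symmetry-breaking property with $\gamma = \gamma_{a,b}$, run $ELECT$ in $DB_{\delta}(a,b)$ with advice $\gamma_{a,b}$: correctness implies that the actual output of $\ell_a$ is a simple path of length $d(\ell_a, u)$ to the unique leader $u$, and similarly for $\ell_b$. The path going from $\ell_a$ through $v_a$, along the handle, and through $v_b$ to $\ell_b$ has length $2 + (D-4) + 2 = D$, so $d(\ell_a, \ell_b) = D$, and the triangle inequality gives $d(\ell_a, u) + d(\ell_b, u) \geq D$. Consequently, $d(\ell_a, u) \leq (D-1)/2$ forces $d(\ell_b, u) \geq (D+1)/2$. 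A short case analysis over the location of $u$ (leaf of $T_a$, $w$-node of $T_a$, $v_a$, interior handle node, and the symmetric cases on the $T_b$ side), valid for $D \geq 9$, shows the converse: at least one of $d(\ell_a, u)$, $d(\ell_b, u)$ is at most $(D-1)/2$; the oddness of $D$ rules out any tie at the midpoint of the diameter. Therefore ${\cal B}(a, \gamma_{a,b}) \neq {\cal B}(b, \gamma_{a,b})$, and the pair $({\cal C}, {\cal B})$ solves the pair breaking problem with parameter $Z = \delta^{\delta}$ using $o(\sqrt{\log Z})$ colours.

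The main obstacle is the visibility claim: if $V(\ell_a, D-2)$ revealed any information about $b$, the symmetry breaker could not be computed locally from $a$ alone, and the whole reduction would collapse. This is precisely why $\ell_a$ is chosen at depth $2$ in $T_a$ rather than $1$, since a $w$-node would see into the $w$-layer of $T_b$ and its degrees $b_i + 1$ would leak bits of $b$. Once the visibility is pinned down, the correctness of ${\cal B}$ reduces to a routine triangle-inequality-plus-cases calculation on the double broom.
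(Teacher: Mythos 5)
Your reduction is essentially the paper's: you colour each pair $(a,b)$ by the advice string of $DB_{\delta}(a,b)$ and break symmetry by exploiting the fact that a leaf of $T_a$ at distance $2$ from $v_a$ has a $(D-2)$-view that does not depend on $b$ (the paper phrases this as $V_{G(a)}(v,D-2)=V_{DB_{\delta}(a,b)}(v,D-2)$ for an auxiliary broom $G(a)=DB_{\delta}(1,a)$, which is the same observation as your direct simulation on the view). The only substantive deviation is your decision rule: you threshold the \emph{length} of the simulated output at $(D-1)/2$, and, as you yourself note, the case analysis then only works for $D\geq 9$ --- indeed for $D=7$ the leader could be a leaf of $T_a$ below $w_2$, giving $d(\ell_a,u)=4$ and $d(\ell_b,u)=7$, so both verdicts would be $0$ and no single length threshold can work; the lemma (and the theorem it feeds) carries no such restriction on the odd diameter. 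The paper avoids this by deciding according to which endpoint of the handle the elected node is closer to, and you can import that fix into your framework: since the central edge lies at distance $(D-1)/2\leq D-2$ from $\ell_a$ along its output path, the simulated node can trace its own output inside $V(\ell_a,D-2)$ and set ${\cal B}$ according to whether the path crosses the central edge, which restores the argument for all admissible odd $D$.
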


\begin{proof}
We define a function $F$ which maps instances of the pair breaking problem to double brooms. More specifically, $F$ maps each pair $(a,b)$ to the double broom $DB_{\delta}(a,b)$. Let ${\cal A}$ be the advice function for algorithm $ELECT$ that maps double brooms $DB_{\delta}(a,b)$ to binary advice strings. Let $A$ be the the range of ${\cal A}$, i.e., the set of all advice strings needed by the algorithm. Let $g : A \longrightarrow \{1,\ldots,|A|\}$ be any bijection from binary strings to positive integers.

For the pair breaking problem with parameter $Z$, we define the colouring function ${\cal C} : X \longrightarrow \{1,\ldots,c\}$ that takes each instance $(a,b)$ of the pair breaking problem and maps it to $g({\cal A}(F(a,b)))$. Intuitively, the colour of an instance $(a,b)$ is set to the advice given for the corresponding double broom. 
We now show that $c \in o(\sqrt{\log Z})$. Since the size of advice for algorithm $ELECT$ is $o(\log\log \delta^{\delta})$, it follows that every advice string provided to the algorithm has size less than $\frac{1}{4}\log\log \delta^{\delta}$, for sufficiently large $\delta$. It follows that the number of different advice strings needed by the algorithm is at most $(\log \delta^\delta)^{1/4}$. Therefore, the range of $g$ has size at most $(\log \delta^\delta)^{1/4} \in o(\sqrt{\log Z})$, as required.

Next, we show that there is a symmetry breaking function ${\cal B}$ that uses this colouring function. For any integer $z \in \{1,\ldots,Z\}$, we map $z$ to a double broom with one parameter equal to $z$ and the other parameter equal to the smallest positive integer not equal to $z$. In particular, let $G$ be a function such that $G(1) = DB_{\delta}(1,2)$ and $G(z) = DB_{\delta}(1,z)$ for all $z \in \{2,\ldots,Z\}$. Define the binary function ${\cal B}$ as follows. It takes integer inputs $z,\gamma$, runs the ELECT algorithm on $G(z)$ with advice $g^{-1}(\gamma)$, and outputs 0 if and only if the elected node is within distance $(D-5)/2$ from $v_z$ (i.e., closer to the endpoint of the handle to which $T_z$ is attached). We now prove that ${\cal B}$ is indeed a symmetry breaking function. For any instance $(a,b)$ of the pair breaking problem, consider the values of ${\cal B}(a,{\cal C}(a,b))$ and ${\cal B}(b,{\cal C}(a,b))$. First, note that $g^{-1}({\cal C}(a,b))$ is the advice string, say $s$, that is provided to the $ELECT$ algorithm for the tree $DB_{\delta}(a,b)$. So, on the input pair $(a,{\cal C}(a,b))$, our function ${\cal B}$ runs $ELECT$ on $G(a)$ with advice $s$. Note that, in the construction of both $G(a)$ and $DB_{\delta}(a,b)$, the same tree $T_a$ is attached to node $v_a$. Since the $ELECT$ algorithm uses time $D-2$, it follows that, for any leaf $v$ in $T_a$, we have  $V_{G(a)}(v,D-2)=V_{DB_{\delta}(a,b)}(v,D-2)$  (since the handle has length $D-4$ and $T_a$ has height 2). Hence, $v$ elects the same node in $G(a)$ as it does when $ELECT$ is executed on $DB_{\delta}(a,b)$. Similarly, on the input pair $(b,{\cal C}(a,b))$, the function ${\cal B}$ runs $ELECT$ on $G(b)$ with advice $s$, and elects the same node in $G(b)$ as it does when $ELECT$ is executed on $DB_{\delta}(a,b)$. However, for any node $w$ in $DB_{\delta}(a,b)$, exactly one of $v_a$ or $v_b$ is within distance $(D-5)/2$ from $w$. Therefore, ${\cal B}(a,{\cal C}(a,b)) \neq {\cal B}(b,{\cal C}(a,b))$, and hence $\cal B$ is symmetry breaking, as required.
\end{proof}

Corollary \ref{lb-colour} and Lemma \ref{reduct} imply the following theorem.

\begin{theorem}\label{lb-log}
Let $D<n$ be positive integers, where $D$ is odd. There exists a class $\cal T$ of trees $T$ with size $\Theta(n)$, diameter $D$ and $\xi(T) \leq D-2$, such that
every leader election algorithm working in time $D-2$ on the class $\cal T$ requires advice of size $\Omega(\log n)$.
\end{theorem}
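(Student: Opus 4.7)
The plan is to derive the lower bound from Corollary \ref{lb-colour} via the reduction established in Lemma \ref{reduct}. I would take ${\cal T}$ to be the class of double brooms defined just before Lemma \ref{reduct}, instantiated with the largest positive integer $\delta$ satisfying $2\delta(\delta+1) + D - 1 \leq n$; since $D < n$, this gives $\delta = \Theta(\sqrt{n-D}) = \Theta(\sqrt{n})$. Concretely, set ${\cal T} = \{DB_{\delta}(a,b) : 1 \leq a < b \leq \delta^{\delta}\}$. By construction every tree in ${\cal T}$ has exactly $m = 2\delta(\delta+1) + D - 1 = \Theta(n)$ nodes and diameter $D$.

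Before invoking the reduction, I would verify that $\xi(T) \leq D-2$ for each $T \in {\cal T}$. For every node $v$ the view $V(v, D-2)$ reveals enough local structure that any two nodes with the same depth-$(D-2)$ view are related by a port-preserving automorphism of $T$ fixing a designated endpoint of the handle, say $v_a$. Indeed, the only non-trivial port-preserving automorphisms of $DB_{\delta}(a,b)$ with $a \neq b$ permute equivalent leaves attached to a common internal node $w_i$ of $T_a$ or $T_b$, and all such automorphisms fix both $v_a$ and $v_b$. Hence, given the entire map, each node can identify its automorphism class from $V(v, D-2)$ and output a consistent sequence of outgoing ports to $v_a$ in time $D-2$.

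Now suppose, for contradiction, that some leader election algorithm on ${\cal T}$ runs in time $D-2$ using advice of size $s(n) \in o(\log n)$. Our choice $\delta = \Theta(\sqrt{n})$ yields $\log \delta^{\delta} = \delta \log \delta = \Theta(\sqrt{n}\log n)$, hence $\log\log \delta^{\delta} = \Theta(\log n)$, so $s(n) \in o(\log\log \delta^{\delta})$. Lemma \ref{reduct} then produces, for the pair breaking problem with parameter $Z = \delta^{\delta}$, a colouring function using $o(\sqrt{\log Z})$ colours that admits a symmetry breaking function, directly contradicting Corollary \ref{lb-colour}.

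The main obstacle I anticipate is the parameter bookkeeping: the identity $\log n = \Theta(\log\log \delta^{\delta})$ that makes the two bounds align depends precisely on the two-level structure of double brooms (a handle of length $\Theta(D)$ with two height-$2$ subtrees carrying $\Theta(\delta^{2})$ leaves encoded by $b \leq \delta^{\delta}$), so the choice $\delta = \Theta(\sqrt{n})$ has to be coordinated carefully with the range of the encoding. Once the parameters are lined up, the argument is essentially a concatenation of the two previously established results.
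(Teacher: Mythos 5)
Your overall route is exactly the paper's: the same class of double brooms $DB_{\delta}(a,b)$ with $\delta=\Theta(\sqrt{n})$, a verification that $\xi(T)\leq D-2$, and the lower bound obtained by combining Lemma \ref{reduct} with Corollary \ref{lb-colour} via $\log\log \delta^{\delta}=\Theta(\log n)$. Your justification of $\xi(T)\leq D-2$ through automorphism classes is correct in substance and at roughly the paper's level of detail; the paper argues instead that every $(D-2)$-view contains $v_a$, $v_b$ and at least one full subtree $T_a$ or $T_b$, so each node can locate itself on the map (up to the irrelevant permutations of sibling leaves) and output the outgoing ports of $seq(v,v_{\mu})$ with $\mu=\min\{a,b\}$; your ``designated endpoint'' must likewise be fixed canonically from the map, e.g.\ as this $v_{\mu}$.

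There is, however, one concrete slip in precisely the parameter bookkeeping you flagged: you take the largest $\delta$ with $2\delta(\delta+1)+D-1\leq n$ and then assert $\delta=\Theta(\sqrt{n-D})=\Theta(\sqrt{n})$. The second equality does not follow from the hypothesis $D<n$: if, say, $D=n-\Theta(\log n)$, then $\delta=\Theta(\sqrt{\log n})$, so $\log\log \delta^{\delta}=\Theta(\log\log n)$ and your reduction only excludes advice of size $o(\log\log n)$, not $o(\log n)$ --- the claimed $\Omega(\log n)$ bound fails for your class in that regime. The theorem only requires trees of size $\Theta(n)$, not size at most $n$, so the fix is simply to drop the constraint $m\leq n$ and set $\delta=\Theta(\sqrt{n})$ outright, as the paper does; then $m=2\delta(\delta+1)+D-1=\Theta(n)$ because $D<n$, and the rest of your argument goes through verbatim.
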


\begin{proof}
Choose integers $m$ and $\delta$, such that $\delta (\delta +1)=(m-(D-3)-2)/2$  and $\delta \in \Theta(\sqrt{n})$. Hence $m \in \Theta(n)$.
Let $\cal T$ be the class of double brooms $DB_{\delta}(a,b)$, for all positive integers $a<b\leq \delta^{\delta}$.
Corollary \ref{lb-colour} and Lemma \ref{reduct} imply that the size of advice required by any leader election algorithm on this class is
$\Omega(\log\log \delta^{\delta}) \subseteq \Omega(\log n)$.

For each node $v$ in any $DB_{\delta}(a,b)$, the view $V_{DB_{\delta}(a,b)}(v,D-2)$ includes both $v_a$ and $v_b$, as well as at least one of the subtrees $T_a$ or $T_b$. Therefore, 
given the entire map of  $DB_{\delta}(a,b)$, node $v$ can locate itself on the map. Further, node $v$ can 
compute values of $a$ and $b$ and output the sequence of outgoing ports of $seq(v,v_{\mu})$, where
$\mu=\min\{a,b\}$. Hence $\xi(T) \leq D-2$.
\end{proof}

Theorems \ref{ub-log} and \ref{lb-log} imply the following corollary.

\begin{corollary}
The minimum size of advice sufficient to do leader election in time $D-2$ in the class of $\Theta(n)$-node trees $T$ with odd diameter $D$ and  $\xi(T) \leq D-2$ is $\Theta(\log n)$.
\end{corollary}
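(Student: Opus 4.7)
The corollary follows directly by combining the two preceding theorems, so my plan is to simply check that the statements line up and that nothing is lost when matching the upper bound in Theorem~\ref{ub-log} with the lower bound in Theorem~\ref{lb-log}.

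First, I would invoke Theorem~\ref{ub-log}, which exhibits algorithm \texttt{OddElect} solving leader election in time $D-2$ on any tree $T$ with odd diameter $D$ and $\xi(T)\le D-2$, using advice of size $O(\log n)$. Since this holds for all trees in the stated class (including those of size $\Theta(n)$), it gives the upper bound on the minimum sufficient advice size.

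Second, I would invoke Theorem~\ref{lb-log}, which for any odd $D<n$ constructs a class $\mathcal{T}$ of trees of size $\Theta(n)$, diameter $D$, and $\xi(T)\le D-2$ on which any leader election algorithm running in time $D-2$ needs $\Omega(\log n)$ bits of advice. Since $\mathcal{T}$ is a subclass of the trees described in the corollary, this lower bound transfers immediately: a protocol for the full class must in particular work on $\mathcal{T}$, and hence must use $\Omega(\log n)$ bits on some instance.

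The only thing to verify is that the two bounds refer to the same quantity, namely the worst-case advice size over the class of trees in question, and this is indeed the standard meaning used throughout the paper. There is no hard step; the corollary is a clean combination of matching $\Omega(\log n)$ and $O(\log n)$ bounds, so the proof amounts to a one-line citation of Theorems~\ref{ub-log} and~\ref{lb-log} observing that their hypotheses and conclusions match the statement of the corollary.
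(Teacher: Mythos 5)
Your proposal is correct and matches the paper exactly: the paper states this corollary as an immediate consequence of Theorems~\ref{ub-log} and~\ref{lb-log}, which provide the matching $O(\log n)$ upper bound and $\Omega(\log n)$ lower bound over the stated class. Nothing further is needed.
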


\section{Time $\beta \cdot diam \leq \tau \leq diam -3$, for any constant $\beta >1/2$}

For the time interval $[\beta \cdot diam,   diam -3]$, for any constant $\beta >1/2$, we provide upper and lower bounds on the minimum size of advice sufficient to perform leader election. Our bounds are separated by a gap of only $O(\log  n)$, except for the special case when $diam$ is even and time is $diam-3$.

\subsection{Upper bound}

Consider a tree $T$ with diameter $diam = D$, and let $h = \lfloor D/2 \rfloor$.
Let $\epsilon = \frac{\tau}{D} - \frac{1}{2}$, and note that $\epsilon < \frac{1}{2}$.

At a high level, our algorithm first partitions the set of leaves into $k$ classes such that leaves in the same class belong to the same subtree of height $\lfloor \epsilon D \rfloor - 1$. Let $R_1,\ldots,R_k$ be the roots of these subtrees of height $\lfloor \epsilon D \rfloor-1$. Next, each node in $T$ chooses one of these $R_i$ as its \emph{representative}. The advice provided to the algorithm consists of $k$ pieces, one piece for each $R_i \in \{R_1,\ldots,R_k\}$. Each piece instructs how to reach the leader starting at node $R_i$. Therefore, to solve leader election, each node $v$ can compute a path to the leader using the path to its representative, along with the advice. The main difficulty in designing the algorithm is to ensure that each node finds its representative and determines which piece of advice corresponds to it.

In what follows, we will make use of an injective function $F$ that maps rooted trees with at most $n$ nodes to binary strings of fixed length in $O(n)$. (One example of such a function is discussed in Section \ref{alphadiam}.) We will apply this function to views of nodes in order to distinguish them. 
When $F$ is applied to a node's full view of the tree (i.e., to $V(v,n)$), then the resulting binary string will be called the node's \emph{signature}. Given two distinct nodes $v,w \in T$, we will say that \emph{$v$ has a smaller signature than $w$} if $F(V(v,n))$ is lexicographically smaller than $F(V(w,n))$.

To aid in the description and analysis of our algorithm, we carefully choose a node $c$  as the root  of $T$. This is the node that the algorithm will elect. In the case where $D$ is even, the central node of $T$ is chosen as the root. In the case where $D$ is odd, the node on the central edge of $T$ that has the smaller signature is chosen as the root. In what follows, the depth of a node is defined as its distance from the root $c$.

\paragraph{The Representatives.} We define the \emph{representatives} of an arbitrary tree $K$ using the following greedy subroutine that takes as input the map of $K$ and a designated root $r$ on the map. 

\begin{algorithm}[H]
\caption{\texttt{ComputeReps}($K,r$)}
\begin{algorithmic}[1]
\State $i \leftarrow 0$
\State $UncoveredLeaves \leftarrow $ set of all leaves of $K$ that have depth at least $\lfloor \epsilon D \rfloor-1$
\State {\bf While} $UncoveredLeaves \neq \emptyset$
\State \indent $i \leftarrow i+1$
\State \indent $x_i \leftarrow$ deepest leaf in $UncoveredLeaves$
\State \indent $r_i \leftarrow $ ancestor of $x_i$ at distance $\lfloor \epsilon D \rfloor - 1$
\State \indent Remove from $UncoveredLeaves$ any descendants of $r_i$
\State Output $(r_1,\ldots,r_i)$
\end{algorithmic}
\label{computereps}
\end{algorithm}


Since the size of $UncoveredLeaves$ decreases by at least one in every iteration of the while loop, there exists some positive integer $k$ such that the procedure terminates after $k$ iterations of the while loop. Note that $path(x_1,r_1),\ldots,path(x_k,r_k)$ are $k$ disjoint paths, each containing $\lfloor \epsilon D \rfloor$ nodes of $K$. It follows that $k \in O(|K|/D)$.

\paragraph{The Advice.} We now describe the advice provided to the algorithm for tree $T$. First, the oracle computes the  representatives $R_1,\ldots,R_k$ of $T$ by executing \texttt{ComputeReps}($T,c$). Then, for each $i \in \{1,\ldots,k\}$, the oracle computes a list $L_i$ consisting of the sequences $seq(R_i,w)$ for every $w$ at distance at most $h$ from $R_i$. 
 Each $L_i$ is sorted lexicographically, and $z_i$ is defined to be the index of $seq(R_i,c)$ in this list. The purpose of the integers $z_1,\ldots,z_k$ is to enable each $R_i$ (and each node that has representative $R_i$) to compute the path from $R_i$ to $c$. However, we cannot assume that nodes know the index of their representative, i.e., nodes may not know which $z_i$ should be used to compute the path from their representative to $c$. To remedy this, the oracle includes in the advice a \emph{trie} \cite{AHU} that enables each node to determine which piece $z_i$ of advice is intended
 for it to use. More specifically, the oracle first computes a list $S$ consisting of $F(V(R_i,h))$ for each $i \in \{1,\ldots,k\}$. Then, a trie is computed for $S$ using the following recursive procedure \texttt{BuildTrie}.

\begin{algorithm}[H]
\caption{\texttt{BuildTrie}($S$)}
\begin{algorithmic}[1]
\State {\bf If} $S$ contains only one string $s = F(V(R_i,h))$ for some $i \in \{1,\ldots,k\}$:
\State \indent Return a single node labeled $z_i$
\State {\bf Else}:
\State \indent 	$j \leftarrow$ the largest index such that all strings in $S$ have the same prefix of length $j$
\State \indent	$S_0 \leftarrow$ the list of strings in $S$ that have a 0 at index $j+1$, each with prefix of length $j+1$ removed
\State \indent	$S_1 \leftarrow$ the list of strings in $S$ that have a 1 at index $j+1$, each with prefix of length $j+1$ removed
\State \indent	Return a node labeled $j$ with left child equal to $\mathtt{BuildTrie}(S_0)$ and with right child equal to $\mathtt{BuildTrie}(S_1)$
\end{algorithmic}
\label{buildtrie}
\end{algorithm}


The advice provided to the algorithm is the value of $D$, the value of the allocated time $\tau$, as well as the trie computed by \texttt{BuildTrie}($S$). It remains to show that the advice is well-defined, i.e., that \texttt{BuildTrie}($S$) terminates. This is the case if and only if the strings in $S$ are all distinct.
So, the following lemma proves that  \texttt{BuildTrie}($S$) terminates for all trees in which leader election is solvable in time $\tau$. 

\begin{lemma}
If  $\xi(T) \leq \tau$ then, for all distinct $i,j \in \{1,\ldots,k\}$, we have $F(V(R_i,\tau)) \neq F(V(R_j,\tau))$.
\end{lemma}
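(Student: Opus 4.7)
The plan is to prove the contrapositive: if $V(R_i,\tau) = V(R_j,\tau)$ for some distinct $i,j$, then $\xi(T) > \tau$. Assume such a coincidence and let $\phi \colon V(R_i,\tau) \to V(R_j,\tau)$ be the port-preserving isomorphism with $\phi(R_i) = R_j$; suppose, for contradiction, that $\xi(T) \leq \tau$.

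The first step is to observe that, since $\xi(T) \leq \tau$, there is a deterministic $\tau$-round leader-election algorithm $\mathcal{A}$ that uses the entire map of $T$ as common advice. Because $\mathcal{A}$'s output at a node depends only on that node's view at depth $\tau$ together with the (identical) advice, $\mathcal{A}$ emits the same outgoing-port sequence $s$ at $R_i$ and at $R_j$; the two paths traced by $s$, starting at $R_i$ and at $R_j$ respectively, must both terminate at the common elected leader $\ell$.

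The second step is to show that $\ell$ is a fixed point of $\phi$. Because $\phi$ is port-preserving, it carries the path from $R_i$ following $s$ step-by-step onto the path from $R_j$ following the same $s$, as long as the trace stays inside $V(R_i,\tau)$; hence the two endpoints coincide under $\phi$, i.e.\ $\phi(\ell) = \ell$. The greedy construction places each $R_r$ at depth at most $O(D-\tau)$ from the centre of $T$, and the hypothesis $\tau \geq \beta D > D/2$ forces $V(R_i,\tau)$ and $V(R_j,\tau)$ to cover a wide central belt of $T$ that in particular contains $\ell$, so the preceding argument applies.

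The third step, which I expect to be the main obstacle, is to extend the local symmetry encoded by $\phi$ and the fixed point $\ell$ to a non-trivial \emph{global} port-preserving automorphism of $T$ by propagating $\phi$ outward along tree paths from $\ell$, using the port labels at each boundary step to enforce a unique and consistent continuation. The quantitative hypothesis $\tau > D/2$ is precisely what guarantees that $V(R_i,\tau) \cup V(R_j,\tau)$ covers enough of $T$ around $\ell$ to pin down the extension uniquely. Once this extension is carried out, the resulting non-trivial port-preserving automorphism of $T$ contradicts the non-symmetry of $T$ that is implied by $\xi(T) < \infty$, completing the proof; the injectivity of $F$ then upgrades the inequality $V(R_i,\tau) \neq V(R_j,\tau)$ to the stated $F(V(R_i,\tau)) \neq F(V(R_j,\tau))$.
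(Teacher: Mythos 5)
There is a genuine gap, and it lies exactly where you flag it: step 3 cannot be carried out, because the implication you need there is false. A port-preserving isomorphism between the two balls $V(R_i,\tau)$ and $V(R_j,\tau)$ does not, in general, extend to a non-trivial port-preserving automorphism of the whole tree: the union of the two balls need not cover $T$ (nodes at distance more than $\tau$ from both $R_i$ and $R_j$ are seen by neither), and even on the overlap nothing forces a consistent global extension. More fundamentally, you are aiming at the wrong contradiction. Non-symmetry of $T$ is a much weaker property than $\xi(T)\leq\tau$; a non-symmetric tree can easily contain two nodes with identical views at radius $\tau$ (the paper's own lower-bound constructions, e.g.\ the trees $G_L$ of Theorem 6.2, are non-symmetric trees deliberately packed with nodes having equal bounded-radius views). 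So ``equal views of $R_i,R_j$ $\Rightarrow$ $T$ symmetric'' is simply not true, and no amount of propagation of $\phi$ from the fixed point $\ell$ will establish it. A secondary, smaller issue is in step 2: you assert that $\ell$ lies in $V(R_i,\tau)$ via a ``central belt'' argument, but the hypothetical algorithm witnessing $\xi(T)\leq\tau$ may elect any node, and $d(R_i,\ell)$ can a priori exceed $\tau$ (up to $D>\tau$); this containment would need a proof.

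The paper's proof stays entirely inside the views and derives a self-contradiction rather than a symmetry contradiction. Assume $V(R_i,\tau)=V(R_j,\tau)$ and let a $\tau$-round algorithm elect a common leader. Equal views force equal output sequences, hence two simple paths $P_i,P_j$ of equal length from $R_i$ and $R_j$ to the same leader. In a tree these paths merge at a first common node $v$ and share the suffix $path(v,\ell)$, so $d(R_i,v)=d(R_j,v)$; since $path(R_i,v)$ and $path(R_j,v)$ are edge-disjoint, their union is a simple path, forcing $d(R_i,v)\leq h\leq\tau$. But the two paths enter $v$ along different edges, hence with different port numbers at $v$, so the same outgoing-port prefix leads from $R_i$ and from $R_j$ to different incoming ports within radius $h$. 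That contradicts $V(R_i,\tau)=V(R_j,\tau)$ directly, with no appeal to global symmetry. If you want to repair your write-up, replace steps 2--3 with this merging-point/port argument; injectivity of $F$ then finishes the statement as you note.
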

\begin{proof}
Suppose that there is an algorithm $ELECT$ that solves leader election in time $\tau$ (with any amount of advice). To obtain a contradiction, assume that, for some distinct $i,j \in \{1,\ldots,k\}$, we have $F(V(R_i,\tau)) = F(V(R_j,\tau))$. Since $F$ is injective, it follows that $V(R_i,\tau) = V(R_j,\tau)$. This implies that nodes $R_i$ and $R_j$ output the same value in the execution of $ELECT$ on $T$. Let $P_i$ be the simple path in $T$ starting at $R_i$ that corresponds to $R_i$'s output, and let $P_j$ be the simple path in $T$ starting at $R_j$ that corresponds to $R_j$'s output. Note that $|P_i| = |P_j|$. Since $R_i$ and $R_j$ must elect the same node, there exists a unique node $v$ such that the paths $P_i$ and $P_j$ first intersect at $v$. Since $T$ is a tree, we can re-write $P_i = path(R_i,v) \cdot Q$ and $P_j = path(R_j,v) \cdot Q$ for some (possibly empty) simple path $Q$. Since $|P_i| = |P_j|$, it follows that $|path(R_i,v)| = |path(R_j,v)|$. We now observe that $|path(R_i,v)|$ and $|path(R_j,v)|$ are bounded above by $h$. Indeed, if both paths had length greater than $h$, then their union would be a simple path of length at least $2h+2 > D$, a contradiction. So, $path(R_i,v)$ and $path(R_j,v)$ are edge disjoint paths of length at most $h$. However, this means that the port at $v$ corresponding to the final edge of $path(R_i,v)$ is different than the port at $v$ corresponding to the final edge of $path(R_j,v)$. It follows that $seq(R_i,v) \neq seq(R_j,v)$. This implies that $V(R_i,h) \neq V(R_j,h)$, and, since $h \leq \tau$, it follows that $V(R_i,\tau) \neq V(R_j,\tau)$, a contradiction.
\end{proof}

We also provide a retrieval procedure $\mathtt{Retrieve}(TR,s)$ which takes as input a trie $TR$ and a string $s$. If $s$ belongs to the set $S$ used in the construction of the trie, then the procedure returns the value stored in the label of the leaf node corresponding to the string $s$. In our case, each $s$ is some $F(V(R_i,h))$, and the value stored in the corresponding leaf is $z_i$. 


\begin{algorithm}[H]
\caption{\texttt{Retrieve}($TR,s$)}
\begin{algorithmic}[1]
\State $root \leftarrow$ root of $TR$
\State {\bf If} $TR$ consists of a single node:
\State \indent Return label of $root$
\State {\bf Else}:
\State \indent $j \leftarrow$ label of $root$
\State \indent {\bf If} $(j+1)^{th}$ bit of $s$ is 0:
\State \indent \indent $TR' \leftarrow$ left subtree of $root$
\State \indent {\bf Else}:
\State \indent \indent $TR' \leftarrow$ right subtree of $root$
\State \indent $s' \leftarrow$ string $s$ with prefix of length $j+1$ removed
\State \indent Return \texttt{Retrieve}($TR',s'$)
\end{algorithmic}
\label{retrieve}
\end{algorithm}

\paragraph{The Algorithm.} We now define our leader election algorithm \texttt{ElectWithTrie} executed by each node $v$ in $T$ given an advice string $A$. We start by defining a procedure \texttt{FindRep}. For each node $v$ with sufficiently large depth, \texttt{FindRep} computes the representative of $v$ given its view $V(v,\tau)$ and
the values of $\epsilon$ and $D$. 
(Nodes with small depth will not need representatives to perform election.)
At a high level, the procedure picks an ancestor $w$ of $v$, finds the subtree rooted at $w$ consisting of all of $w$'s descendants, then executes \texttt{ComputeReps} on this tree. Of the representatives returned by \texttt{ComputeReps}, $v$ picks one that is either its descendant or ancestor.

\begin{algorithm}[H]
\caption{\texttt{FindRep}($v,V(v,\tau),\epsilon,D$)}
\begin{algorithmic}[1]
\State $w \leftarrow$ the node in $V(v,\tau)$ at distance $\lfloor \epsilon D \rfloor-1$ from $v$ such that every endless path in $V(v,\tau)$ starting at $v$ passes through $w$ \label{definew}
\State $up_w \leftarrow$ the edge incident to $w$ s.t. every endless path starting at $v$ passing through $w$ uses $up_w$
\State $T_w \leftarrow $ subtree of $V(v,\tau)$ induced by all nodes $x$ such that $path(x,v)$ does not pass through $up_w$
\State $R \leftarrow \mathtt{ComputeReps}(T_w,w)$ \label{vComputeReps}
\State $\ell \leftarrow$ any leaf in $V(v,\tau)$ such that $v$ lies on $path(\ell,w)$ \label{defineleaf}
\State Return any node $r$ in $R$ such that $r$ lies on $path(\ell,w)$ \label{definerep}
\end{algorithmic}
\label{findrep}
\end{algorithm}

Below is the pseudocode of the algorithm \texttt{ElectWithTrie} that is executed by each node $v$ in $T$.

\begin{algorithm}[H]
\caption{\texttt{ElectWithTrie}($A$)}
\begin{algorithmic}[1]
\State $D \leftarrow$ diameter of $T$ provided in $A$
\State $h \leftarrow \lfloor D/2 \rfloor$
\State $\tau \leftarrow$ the value of allowed time provided in $A$
\State Use $\tau$ rounds to learn $V(v,\tau)$
\State {\bf If} $V(v,\tau)$ contains no endless paths starting at $v$: \label{smalldepth}
\State \indent {\bf If} $D$ is even:
\State \indent \indent $c \leftarrow$ the central node of $V(v,\tau)$
\State \indent {\bf Else}:
\State \indent \indent $c \leftarrow$ the node on the central edge of $V(v,\tau)$ with smaller signature
\State \indent Output sequence of outgoing ports in $seq(v,c)$ \label{shallowelect}
\State {\bf Else}:
\State \indent $\epsilon \leftarrow\frac{\tau}{D} -\frac{1}{2}$
\State \indent $r \leftarrow \mathtt{FindRep}(v, V(v,\tau),\epsilon,D)$ \label{pickrep}
\State \indent $s \leftarrow F(V(r,h))$ \label{getsig}
\State \indent $TR \leftarrow$ the trie provided in $A$
\State \indent $z \leftarrow \mathtt{Retrieve}(s,TR)$ \label{definez}
\State \indent $L \leftarrow$ lexicographically-ordered list of port sequences $seq(r,w)$ where $d(r,w) \leq h$ \label{defineL}
\State \indent $P \leftarrow$ path corresponding to $z^{th}$ sequence in $L$ \Comment{path from $r$ to $c$} \label{getpath}
\State \indent $W \leftarrow$ the walk in $T$ consisting of $path(v,r)$ followed by $P$
\State \indent $Q \leftarrow$ simple path from $v$ to $c$ obtained from $W$ by removing any non-simple subwalk
\State \indent Output sequence of outgoing ports obtained from $Q$'s port sequence
\end{algorithmic}
\label{ElectWithDT}
\end{algorithm}

\begin{theorem}\label{ub-DT}
Consider any fixed $D, \beta$ such that $\beta > 1/2$ and $\beta D \leq D-3$. For $\tau \in [ \beta D, D-3]$, Algorithm {\tt ElectWithTrie} solves leader election in trees $T$ with size $n$, diameter $D$ and $\xi(T)\leq \tau$, in time $\tau$ using $O(\frac{n\log{n}}{D})$ bits of advice.
\end{theorem}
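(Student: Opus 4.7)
The plan splits the proof into three parts: running time, advice size, and correctness.

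The running time is $\tau$, since the only communication step learns $V(v,\tau)$ in $\tau$ rounds; the remaining work is purely local. For the advice size, I first bound $k$: the paths $path(x_i,R_i)$ produced by $\mathtt{ComputeReps}(T,c)$ are pairwise disjoint and each contains $\lfloor \epsilon D\rfloor$ nodes, so $k \leq n/\lfloor \epsilon D\rfloor \in O(n/D)$ because $\epsilon \geq \beta - 1/2$ is a positive constant. The trie has $k$ leaves, each labelled by a $z_i$ of $O(\log n)$ bits, and at most $k-1$ internal nodes, each labelled by a split position bounded by the common $F$-string length $O(n)$ and hence of $O(\log n)$ bits. Adding the encoding of $D$ and $\tau$ yields total size $O((n\log n)/D)$.

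For correctness, consider the if-branch first: the absence of an endless path of length $\tau$ from $v$ forces every path of length $\tau$ leaving $v$ to end at a leaf, so $V(v,\tau)=T$ and $v$ recovers $c$ by inspection. For odd $D$ the two endpoints of the central edge have distinct signatures because $\xi(T)\leq\tau$ rules out a port-preserving automorphism swapping them, so $c$ is unambiguous. In the else-branch the central claim is that the $r$ returned by $\mathtt{FindRep}$ lies in $\{R_1,\ldots,R_k\}$. Establishing this requires: (a) $w$ equals $v$'s ancestor in $T$ rooted at $c$ at distance $\lfloor \epsilon D\rfloor-1$, with $up_w$ pointing toward $c$, which holds because in the else-case $v$'s eccentricity exceeds $\tau$ (forcing depth $d>\epsilon D$), and for every ancestor $a_k$ of $v$ with $k<\lfloor\epsilon D\rfloor-1$ no sibling subtree of $a_k$ has the length $\tau-k$ needed to reach a non-leaf; (b) the $T_w$ computed locally equals the subtree of $T$ below $w$ on $v$'s side, because every such node sits at distance at most $\lfloor\epsilon D\rfloor-1+h\leq\tau$ from $v$; and (c) $\mathtt{ComputeReps}(T_w,w)$ returns precisely $\{R_i\}\cap T_w$.

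Step (c) is the main obstacle. I will argue it by observing that depths from $w$ within $T_w$ equal depths from $c$ shifted by $d(c,w)$, so the local and global greedy runs pick leaves of $T_w$ in the same relative order; furthermore, any $R_j$ chosen from a leaf outside $T_w$ removes only its descendants in $T$ rooted at $c$, none of which lie in $T_w$, so the residual uncovered leaves of $T_w$ agree throughout both executions, yielding identical representatives inside $T_w$. Given the claim, $d(v,r)\leq\lfloor\epsilon D\rfloor-1$, hence $v$ can reconstruct $V(r,h)$ from $V(v,\tau)$ and compute $s=F(V(r,h))$ correctly. By the injectivity of $F$ and the preceding lemma (whose proof in fact gives the stronger statement $V(R_i,h)\neq V(R_j,h)$), the signatures stored in the trie are pairwise distinct, so $\mathtt{Retrieve}(TR,s)$ returns the correct $z_i$. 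Node $v$ then reconstructs $L_i$ from $V(r,h)$, reads its $z_i$-th entry as $seq(r,c)$ and hence the path $P=path(r,c)$, concatenates $path(v,r)\cdot P$ into a walk $W$ visiting $c$, and trims the overlapping prefix to obtain the simple path $Q$ from $v$ to $c$. Since every node outputs a path terminating at $c$, the algorithm correctly performs leader election in time $\tau$ with $O((n\log n)/D)$ bits of advice.
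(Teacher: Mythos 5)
Your overall architecture matches the paper's proof (small-depth nodes see the whole tree; deep nodes run \texttt{FindRep}, retrieve $z_i$ from the trie, and concatenate; the same $O(k\log n)$ accounting for the trie), and you correctly observe that the distinctness lemma is really needed, and available, at radius $h$ rather than $\tau$. However, your step (c) — the crux you yourself identify — has a genuine gap. You assert that ``any $R_j$ chosen from a leaf outside $T_w$ removes only its descendants \ldots none of which lie in $T_w$.'' This is false in general: if the leaf $x_j$ chosen by the global run lies outside $T_w$ but its ancestor at distance $\lfloor\epsilon D\rfloor-1$ is a \emph{strict ancestor of} $w$ (e.g.\ $x_j$ hangs off the parent of $w$ at exactly that distance), then the chosen representative is an ancestor of $w$ and the removal step deletes \emph{all} remaining leaves of $T_w$ from the global run's uncovered set. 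Moreover, the two runs do not even start from matching sets restricted to $T_w$: a leaf of $T_w$ at distance less than $\lfloor\epsilon D\rfloor-1$ from $w$ but at least $\lfloor\epsilon D\rfloor-1$ from $c$ is initially uncovered only in the global run, and when processed it again yields an ancestor-of-$w$ representative. So your invariant ``the residual uncovered leaves of $T_w$ agree throughout both executions'' does not hold as argued. The statement can be rescued, but only via the deepest-first comparison that is the heart of the paper's argument: any iteration that selects an ancestor-of-$w$ representative selects a leaf of depth at most $depth(w)-1+\lfloor\epsilon D\rfloor-1$, strictly shallower than every leaf of $T_w$ at distance at least $\lfloor\epsilon D\rfloor-1$ from $w$, so by the greedy rule all such leaves are already covered at that moment; the paper carries this out as a careful proof by contradiction (comparing $d(R_i,x_j)$ and $d(r_j,x_i)$ for the first divergent local iteration) to obtain $\mathtt{ComputeReps}(T_w,w)\subseteq\mathtt{ComputeReps}(T,c)$, which is all that is needed — your stronger equality claim is not required.

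A second, smaller gap: you justify that $v$ can compute $F(V(r,h))$ by claiming $d(v,r)\leq\lfloor\epsilon D\rfloor-1$. This does not follow from $r\in\{R_1,\ldots,R_k\}\cap T_w$: when $r$ is a \emph{descendant} of $v$ (which line \ref{definerep} of \texttt{FindRep} allows), $d(v,r)$ can exceed $\lfloor\epsilon D\rfloor-1$ when $\epsilon$ is small. What is actually needed is the containment $V(r,h)\subseteq V(v,\tau)$, and it must be proved by the ancestor/descendant case analysis of the paper's Claim \ref{repview}: if $v$ is a descendant of $r$ one uses the bound $\lfloor\epsilon D\rfloor$ on the distance from a representative to its descendants, and if $v$ is an ancestor of $r$ one observes that every node of $V(r,h)$ is either a descendant of $v$ or lies in $V(v,h-d(r,v))$. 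Aside from these two points, your time and advice-size arguments, and your treatment of the shallow-node branch (including the remark that non-symmetry makes the odd-$D$ root choice well defined), are sound and in line with the paper.
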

\begin{proof}
For any node $v$, denote by $depth(v)$ the depth of $v$ with respect to $c$. To prove the correctness of our algorithm, we show that every node elects $c$. We first consider the nodes that have small depth.

\begin{claim}
For any node $v$ with depth less than $\lfloor \epsilon D \rfloor$, $v$ elects $c$ at line \ref{shallowelect}.
\end{claim}
To prove the claim, note that $\tau \geq \frac{1}{2}D + \lfloor \epsilon D \rfloor \geq \frac{1}{2}D + (depth(v) + 1) \geq h+1+depth(v)$. Since the distance from $c$ to any node in $T$ is at most $h+1$, it follows that the distance from $v$ to any node in $T$ is at most $h+1+depth(v)$. Therefore, $V(v,\tau)$ is equal to $T$. Hence, $V(v,\tau)$ contains no endless paths starting at $v$, and, at line \ref{shallowelect}, $v$ elects $c$. This proves the claim.

In what follows, we consider the nodes with large depth, i.e., nodes $v$ with $depth(v) \geq \lfloor \epsilon D \rfloor$. We first show that each such node picks one of the representatives $R_1,\ldots,R_k$ at line \ref{pickrep}.

\begin{claim}\label{repsubset}
$\mathtt{FindRep}(v, V(v,\tau),\epsilon,D)$ outputs a node in $\{R_1,\ldots,R_k\}$.
\end{claim}
To prove the claim, we show that the set $R$ computed at line \ref{vComputeReps} of Algorithm \texttt{FindRep} is a subset of $\{R_1,\ldots,R_k\}$, i.e., a subset of the output of \texttt{ComputeReps}$(T,c)$. The proof proceeds in three steps. First, we prove that, in the execution of Algorithm \texttt{FindRep}$(v, V(v,\tau),\epsilon,D)$, $w$ is an ancestor of $v$ in $T$ such that $w$ has at least one descendant in $T$ at distance $\lfloor \epsilon D \rfloor - 1$. Next, we show that $T_w$ is induced by the descendants of $w$. Finally, we show that, for such $w$ and $T_w$, we have $\mathtt{ComputeReps}(T_w,w) \subseteq \mathtt{ComputeReps}(T,c)$.

{\bf Showing that $w$ is an ancestor of $v$ and $w$ has at least one descendant in $T$ at distance $\lfloor \epsilon D \rfloor - 1$.}  We consider $w$'s definition at line \ref{definew} of \texttt{FindRep}. Recall that $v$ has depth at least $\lfloor \epsilon D \rfloor$ in $T$, so $v$ has an ancestor $a$ at distance $\lfloor \epsilon D \rfloor-1$. At least one endless path in $V(v,\tau)$ starting at $v$ passes through $a$ since there is at least one endless path starting at $v$ that passes through $c$ (which is an ancestor of $a$). Hence, $a$ is a node that satisfies the definition on line \ref{definew}. Moreover, $a$ is the only such node, since we can show that every path in $V(v,\tau)$ starting at $v$ that avoids $a$ is a terminated path. Indeed, consider any path from $v$ to a leaf $b$ such that the path avoids $a$. The distance from $v$ to the penultimate node on $path(v,a)$ is $\lfloor \epsilon D \rfloor-2$ and the distance from this node to any of its descendants is at most $h+1$. Therefore, $d(v,b) \leq h+\lfloor \epsilon D \rfloor - 1 < (\frac{1}{2} + \epsilon)D \leq \tau$. Thus, we have shown that $a$ is the unique node that satisfies the definition of $w$, so $w$ is an ancestor of $v$. Also, by definition, $d(v,w) = \lfloor \epsilon D \rfloor-1$, so $w$ has at least one descendant in $T$ at distance $\lfloor \epsilon D \rfloor - 1$.

{\bf Showing that  $T_w$ is induced by the descendants of $w$ in $T$.}  We show that $up_w$ is the edge between $w$ and $w$'s parent. As we observed above, there is at least one endless path in $V(v,\tau)$ starting at $v$ that passes through $w$, so there is at least one edge that satisfies the definition of $up_w$. Moreover, since the distance from $w$ to any of its descendants $b$ is at most $h+1$, and the distance from $v$ to $w$ is $\lfloor \epsilon D \rfloor - 1$, it follows that $d(v,b) \leq h + \lfloor \epsilon D \rfloor \leq (\frac{1}{2} + \epsilon)D \leq \tau$, so no endless path starting at $v$ has descendants of $w$ as both of its endpoints. It follows that $up_w$ cannot be an edge on a path from $w$ to a descendant of $w$, and, that every descendant of $w$ is in $V(v,\tau)$. This implies that $T_w$ is induced by the descendants of $w$ in $T$.

{\bf Showing that $\mathtt{ComputeReps}(T_w,w) \subseteq \mathtt{ComputeReps}(T,c)$.} The proof is by contradiction.
First note that, for any node $x$ in $T_w$, $d(x,w)$ is bounded above by $d(x,c)$. Next, to obtain a contradiction, consider the first iteration $j$ of $\mathtt{ComputeReps}(T_w,w)$ during which a node $r_j$ is chosen as one of the outputs and $r_j \not\in \{R_1,\ldots,R_k\}$. By the specification of $\mathtt{ComputeReps}$, $r_j$ was added to the output because it was the ancestor of the deepest node $x_j$ in $UncoveredLeaves$, and, further, we have $d(r_j,x_j) = \lfloor \epsilon D \rfloor - 1$. Note that $d(x_j,w)\geq \lfloor \epsilon D \rfloor - 1$, so  $d(x_j,c)\geq \lfloor \epsilon D \rfloor - 1$. We now consider the execution of $\mathtt{ComputeReps}(T,c)$. Since $d(x_j,c)\geq \lfloor \epsilon D \rfloor - 1$, it follows that $x_j$ is initially in $UncoveredLeaves$. This means that, in some iteration of the while loop, $x_j$ is removed from $UncoveredLeaves$ as the descendant of some chosen $R_i \neq r_j$. 
Since $R_i \neq r_j$ and  both $r_j$ and $R_i$ are ancestors of $x_j$, it follows that either $R_i$ is an ancestor of $r_j$ or vice-versa.
We now show that  $R_i$ is not an ancestor of $r_j$.
If this were the case, then $d(R_i,x_j) > \lfloor \epsilon D \rfloor - 1$. In particular, $x_j$ would be a leaf in $T$ such that $x_j$'s distance to $R_i$ is greater than $\lfloor \epsilon D \rfloor - 1$, which contradicts the choice of $R_i$ as the ancestor at distance exactly $\lfloor \epsilon D \rfloor - 1$ from the deepest leaf $x_i$ in $UncoveredLeaves$. So, we have that $r_j$ is an ancestor of $R_i$. However, this means that $d(r_j,x_i) > d(R_i,x_i) = \lfloor \epsilon D \rfloor - 1 = d(r_j,x_j)$, which we will use later to obtain the desired contradiction. We now reconsider the $j^{th}$ iteration of execution $\mathtt{ComputeReps}(T_w,w)$. Note that $R_i$ was not added to the output before this iteration (since, as $R_i$ is an ancestor of $x_j$, this would imply that $x_j$ was already removed from $UncoveredLeaves$, contradicting our choice of $x_j$.) By assumption, $r_i \in \{R_1,\ldots,R_k\}$ in all iterations $i < j$, so it follows that $x_i$ is in $UncoveredLeaves$ at the start of iteration $j$. But, recall that $d(r_j,x_i) > d(r_j,x_j)$, so we have a leaf with depth greater than $x_j$ in $UncoveredLeaves$ at the start of iteration $j$, which contradicts the definition of $x_j$. This concludes the proof that $\mathtt{ComputeReps}(T_w,w) \subseteq \mathtt{ComputeReps}(T,c)$, which completes the proof of the claim.

Let $R_i$ be the representative picked by $v$. It remains to show that $v$ computes $seq(R_i,c)$ in lines \ref{getsig}-\ref{getpath} of \texttt{ElectWithTrie}. First, we show that $v$ is able to compute a sufficiently large part of $R_i$'s view.

\begin{claim}\label{repview}
$V(R_i,h) \subseteq V(v,\tau)$
\end{claim}
To prove the claim, we first show that $v$ is either a descendant of,  an ancestor of, or equal to $R_i$. This is the case since, by lines \ref{defineleaf} and \ref{definerep} of \texttt{FindRep}, there is a leaf $\ell$ such that $v$ is an ancestor of $\ell$, and $R_i$ lies on a path from $\ell$ to an ancestor of $v$. 

Next, we show that $V(R_i,h) \subseteq V(v,\tau)$ regardless of whether $v$ is a descendant or ancestor of $R_i$. (The claim is obvious for $v=R_i$.)
If $v$ is a descendant of $R_i$, we note that, by the definition of $R_i$ in \texttt{ComputeReps}, the distance from $R_i$ to any of its descendants is at most $\lfloor \epsilon D \rfloor$. It follows that $\tau = (\frac{1}{2}D + \epsilon)D \geq h + d(v,R_i)$, so $V(v,\tau)$ contains $V(R_i,h)$, as desired.
If $v$ is an ancestor of $R_i$, we note that $V(v,h)$ contains all descendants of $v$. The only nodes in $V(R_i,h)$ that are not descendants of $v$ are contained in $V(v,h-d(R_i,v)) \subseteq V(v,h)$, as desired. This completes the proof of the claim.

We now prove that $v$ correctly computes the sequence of ports from its representative to $c$.

\begin{claim}\label{goodchoice}
At line \ref{getpath}, the $z^{th}$ sequence in $L$ is equal to $seq(R_i,c)$.
\end{claim}
To prove the claim, note that, in the advice construction, the $(z_i)^{th}$ sequence in $L_i$ is equal to $seq(R_i,c)$. We show that $z=z_i$ and $L=L_i$.

To prove that $z=z_i$, note that $z$ is assigned the output of $\mathtt{Retrieve}(s,TR)$, where $s = F(V(R_i,h))$ is one of the strings in $S$ used to build $TR$. It follows that $\mathtt{Retrieve}(s,TR)$ returns~$z_i$.

To prove that $L=L_i$, note that, on line \ref{defineL}, $L$ is defined as the lexicographically-sorted list of sequences $seq(r,w)$ for all $w$ with $d(r,w) \leq h$. Since $r = R_i$, this matches the definition of $L_i$. Further, by Claim \ref{repview}, we have that $V(R_i,h) \subseteq V(v,\tau)$, so the computation of $L$ can indeed be carried out by $v$. This concludes the proof of the claim.

By Claims \ref{repsubset} and \ref{goodchoice} every node with depth at least $ \lfloor \epsilon D \rfloor$ chooses a representative $R_i$ within its view
and computes a path from $R_i$ to $c$. Hence it computes a path from itself to $c$. This concludes the proof of correctness.

Finally, we consider the size of the advice. In the advice construction, the list $S$  consists of $k$ strings (one for each representative), and these strings have some fixed length, say $\lambda$, in $O(n)$ (by our choice of $F$.) Consider the trie $TR$ constructed by \texttt{BuildTrie}$(S)$, as described in Algorithm \ref{buildtrie}. 

\begin{claim}\label{boundleaves}
The number of leaves in $TR$ is at most $k$.
\end{claim}
To prove the claim, we provide a one-to-one correspondence $f$ from the leaves of $TR$ to the strings in $S$. First, for each node $w \in TR$, let $S_w \subseteq S$ be the list of strings that was provided as the parameter to \texttt{BuildTrie} in the execution where $w$ was created. Let $j_w$ be the label of node $w$, and let $pre_w$ be the common prefix of length $j_w$ of all strings in $S_w$ (if $S_w$ contains only one string $s$, then $pre_w$ is defined to be $s$.) Next, for an arbitrary leaf $x \in T$, let $(x_1,\ldots,x_m)$ be the root-to-leaf path of vertices ending at $x$ (i.e., $x_1$ is the root of $TR$ and $x_m=x$.) For every $\alpha \in \{1,\ldots,m-1\}$, we define $b_{x,\alpha}$ to be 0 if $x_{i+1}$ is the left child of $x_i$, and 1 if $x_{i+1}$ is the right child of $x_i$. Finally, we define $f(x) = pre_{x_1} \cdot b_{x,1} \cdot pre_{x_2} \cdot b_{x,2} \cdots b_{x,m-1} \cdot pre_{x_m}$, where $\cdot$ is the string concatenation operator. To see why $f$ is one-to-one, consider any distinct leaves $x,y \in TR$, and let $x_a = y_a=z$ be their deepest common ancestor. By the maximality of $a$, leaves  $x$ and $y$ are descendants of different children of $z$, so we have $b_{x,c} \neq b_{y,c}$. It follows that the bit at position $(j_{v_1} + 1) +( j_{v_2} + 1) + \cdots +( j_{v_c} + 1)$ in $f(x)$ differs from the bit at the same position in $f(y)$, so $f(x) \neq f(y)$. This concludes the proof of the claim.

\begin{claim}\label{numnodes}
The number of nodes in $TR$ is at most $2k$.
\end{claim}
To prove the claim, observe that every node in the trie is either a leaf or has two children. Consider the mapping $g$ that maps each internal node $w$ to the rightmost leaf in $w$'s left subtree. Since $g$ is one-to-one, we get that the number of internal nodes is bounded above by the number of leaves. Therefore, by Claim \ref{boundleaves}, the number of nodes in $TR$ is at most $2k$, which proves the claim.

\begin{claim}\label{labelsize}
The label of each node in $TR$ has size $O(\log{n})$.
\end{claim}
To prove the claim, note that each internal node of $TR$ is labeled with an integer corresponding to an index within a string of length at most $\lambda \in O(n)$. Therefore, the size of each such label is $O(\log{n})$. Next, each leaf is labeled with an integer $z_i$ corresponding to an index within a list of sequences, all of which correspond to simple paths in $T$ originating at representative $R_i$. Since the number of such simple paths is bounded above by $n-1$ (one for each node in $T$ other than $R_i$) it follows that $z_i \leq n-1$. Therefore, the size of $z_i$ is $O(\log{n})$. This concludes the proof of the claim.

By Claims \ref{numnodes} and \ref{labelsize}, the total number of bits needed to represent $TR$ is $O(k\log{n})$. Recall that $k \in O(n/D)$, so $TR$ can be represented using $O(\frac{n\log{n}}{D})$ bits. Since providing the diameter of $T$ and the value of $\tau$ require only $O(\log{n})$ additional bits, we are done.
\end{proof}

\subsection{Lower bound}

The lower bound holds even for a slightly larger time interval than we need, namely starting from $\lfloor diam /2 \rfloor $.
We split the argument into two cases: when the diameter is odd and when it is even.

\subsubsection{Odd diameter}

\begin{theorem}\label{lb2odd}
Let $7 \leq D<n$ be positive integers, where $D$ is odd. 
Fix any value $\lfloor \frac{D}{2} \rfloor \leq \tau \leq D -3$.
There exists a class $\cal T$ of trees $T$ with size $\Theta(n)$, diameter $D$, and $\xi(T) \leq \tau$, such that
every leader election algorithm working in time $\tau$ on the class $\cal T$ requires advice of size $\Omega(n/D)$.
\end{theorem}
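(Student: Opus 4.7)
I will use the standard pigeonhole strategy. For odd $D\ge 7$ and each $\tau\in[\lfloor D/2\rfloor, D-3]$, I construct a family $\mathcal{T}$ of $2^{\Omega(n/D)}$ non-isomorphic trees of size $\Theta(n)$, diameter $D$, and $\xi(T)\le\tau$, and I show that no two trees in $\mathcal{T}$ can share an advice string under any leader-election algorithm running in time $\tau$. The $\Omega(n/D)$ bound then follows immediately from $\log_2 |\mathcal{T}|$.

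\textbf{Construction.} Set $k=\Theta(n/D)$ and $h=(D-1)/2$. For each $s\in\{0,1\}^{k}$, the tree $T_s$ consists of a fixed \emph{anchor} part, in the spirit of the handle-plus-bush used in Theorem~\ref{lb-log} (a long path together with a small rigid bush on one side, forcing diameter exactly $D$ together with asymmetry), and a \emph{signature gadget} $G_s$ attached on the side opposite the bush. The gadget has size $\Theta(n)$ and height at most $h-1$ so that the diameter of $T_s$ stays $D$; it is the concatenation of $k$ \emph{bit-blocks} of size $\Theta(D)$, the $i$-th encoding the bit $s_i$ by a small structural twist (for instance a palindromic versus non-palindromic port-labeling of a local segment, or a reordering of pendant leaves). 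Critically, the $s$-dependent portions of the bit-blocks are placed \emph{deep inside} $G_s$, at distance from the anchor's far end strictly greater than $\tau$, so that they are invisible to nodes outside $G_s$ within time $\tau \le D-3$.

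\textbf{Lower-bound argument.} Checking diameter, size, asymmetry, and $\xi(T_s)\le\tau$ is routine once the bit-blocks carry unique local identifiers, which I supply via distinct port numbers at the gadget's attachment point. For the lower bound, suppose for contradiction that an algorithm $\mathcal{A}$ solves election in time $\tau$ on $\mathcal{T}$ using advice of size $o(n/D)<\log_2|\mathcal{T}|$. Pigeonhole yields distinct $s\ne s'$ receiving the same advice. I then locate a \emph{witness node} $w$ on the anchor, far from the gadget, whose radius-$\tau$ view in $T_s$ is identical to its radius-$\tau$ view in $T_{s'}$ by construction. Hence $w$ outputs the same port sequence in both trees. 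Following that common sequence in $T_s$ ends at a node $\ell$, and following it in $T_{s'}$ ends at a possibly different node $\ell'$; since the gadget is designed so that the algorithm's canonical leader is forced to depend nontrivially on $s$ (no single ``centre-like'' node can serve as leader for every choice of $s$), we obtain $\ell\neq\ell'$ in a way that contradicts the correctness of $\mathcal{A}$ on at least one of $T_s, T_{s'}$.

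\textbf{Main obstacle.} The hardest step is engineering the bit-blocks to satisfy simultaneously three opposing requirements: (i) the \emph{shallow} part of each bit-block (the only part visible from outside $G_s$ in time $\tau$) must be identical for $s_i=0$ and $s_i=1$, so that the witness view is truly $s$-independent; (ii) the \emph{deep} part must force the algorithm's canonical leader to shift with $s$, ruling out any ``elect the centre'' strategy that would succeed with only $O(\log D)$ advice; and (iii) the whole $T_s$ must still admit $\xi(T_s)\le\tau$. Without (ii), the tree centre could always be elected using $O(\log D)$ bits and the $\Omega(n/D)$ bound would collapse to the logarithmic bound; constructing bit-blocks that meet (i) and (ii) at the same time, while respecting the diameter constraint of height $<h$, is the genuinely new technical content of the proof.
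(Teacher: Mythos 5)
There is a genuine gap, and it sits exactly where you flag it yourself as the ``main obstacle'': your construction has no mechanism that turns the pigeonhole collision into a contradiction. Your witness node $w$ sits on the rigid anchor at the \emph{same} position in $T_s$ and $T_{s'}$; if its radius-$\tau$ views coincide and the advice is the same, it outputs the same port sequence and therefore elects the same physical node in both trees --- which is perfectly consistent and contradicts nothing. To rescue this you appeal to requirement (ii), that the gadget ``forces the algorithm's canonical leader to depend nontrivially on $s$''. But no tree structure can force where an algorithm places the leader: the leader is whatever common endpoint the algorithm chooses, and since the $s$-dependent parts are buried deep on one side of an otherwise rigid, asymmetric tree, an adversarial algorithm can simply elect a fixed node of the anchor. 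Worse, the unique local identifiers you add at the attachment points to guarantee $\xi(T_s)\le\tau$ would let every node (including those deep in the gadget) locate itself on the map, so that fixed anchor node could be elected with $O(\log n)$ bits of advice, collapsing your claimed $\Omega(n/D)$ bound. So the proposal is not a proof: the one step carrying all the difficulty is left as an unconstructed gadget, and the requirement you impose on that gadget is not achievable as stated.

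The idea your sketch is missing is the mirror/swap mechanism the paper uses. The paper builds a tree that is \emph{almost} symmetric about its central edge $\{c_0,c_1\}$: $k$ identical arms of length $h-2$ hang off each endpoint, and the two small distinguishing subtrees $T_{0,i}$ (attached at $p_i$ on the $c_0$ side) and $T_{1,i}$ (attached at $q_i$ on the $c_1$ side) are swapped across the edge for each $i$ in a subset $\sigma\subseteq\{2,\ldots,k\}$, giving $2^{k-1}$ trees. The crucial fact is that $V_H(p_i,D-5)=V_H(q_i,D-5)$, so a leaf of a swapped subtree has the \emph{same} $(D-3)$-view whether it hangs on the $c_0$ side or the $c_1$ side; with identical advice it outputs the same sequence in $G_\sigma$ and $G_{\sigma'}$, hence elects nodes on \emph{opposite} sides of the central edge in the two trees, while an unmoved reference leaf (in $T_{0,1}$) elects the same node in both --- a contradiction no matter which node the algorithm tries to elect. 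The lower bound thus comes not from controlling where the leader is, but from exhibiting, for each colliding pair of trees, a node that occupies two mirror-image positions yet cannot distinguish them in time $\tau$. Your one-sided ``deep gadget'' construction produces no such node, and without it (or an equivalent symmetry device) the pigeonhole count of $2^{\Omega(n/D)}$ trees does not translate into a bound on the advice.
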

\begin{proof}
Let $h = \lfloor \frac{D}{2} \rfloor$ and let $k  = \left\lceil \frac{n}{D} \right\rceil$. Let $m = Dk + 2 \in \Theta(n)$. We define a class of trees $T$ with size $m \in \Theta(n)$,
odd diameter $D$, and  $\xi(T) \leq \lfloor \frac{D}{2} \rfloor $ such that the minimum size of advice needed by an arbitrary algorithm $ELECT$ solving leader election in time $\tau$  for this class is $\Omega(n/D)$.

We start with a single tree $G$ of size $m$, defined as follows. The edge $\{c_0,c_1\}$ is the central edge of $G$, and the port numbers corresponding to this edge at $c_0$ and $c_1$ are both 0. Next, for each $i \in \{1,\ldots,k\}$, there is a path $P_i$ of length $h-2$ with $c_0$ as one endpoint. The other endpoint of each of these paths will be denoted by $p_i$. Further, the port sequence $seq(c_0,p_i)$ is equal to $(i,0,1,0,1,0,\ldots,1,0)$. The same paths appear with $c_1$ as one endpoint, and, for each $i \in \{1,\ldots,k\}$, we will refer to each of these paths, and their corresponding endpoint other than $c_1$, as $Q_i$ and $q_i$, respectively. The subtree of $G$ described so far is denoted by $H$. Finally, for each $i \in \{1,\ldots,k\}$, there is a tree $T_{0,i}$ with root $p_i$, where $T_{0,i}$ is a path of length 2. The port sequence from each $p_i$ to the other endpoint of $T_{0,i}$ is $(1,0,1,0)$. Further, for each $i \in \{1,\ldots,k\}$, there is a tree $T_{1,i}$ of height 2 with $q_i$ as the root. More specifically, $T_{1,i}$ is a path of length 2 with an additional edge incident to the middle node. The port sequences from each $q_i$ to the leaves of $T_{1,i}$ are $(1,0,1,0)$ and $(1,0,2,0)$. This completes the definition of $G$. Figure \ref{lbdiagodd} illustrates tree $G$.

\begin{figure}[!ht]
\begin{center}
\includegraphics[scale=0.8]{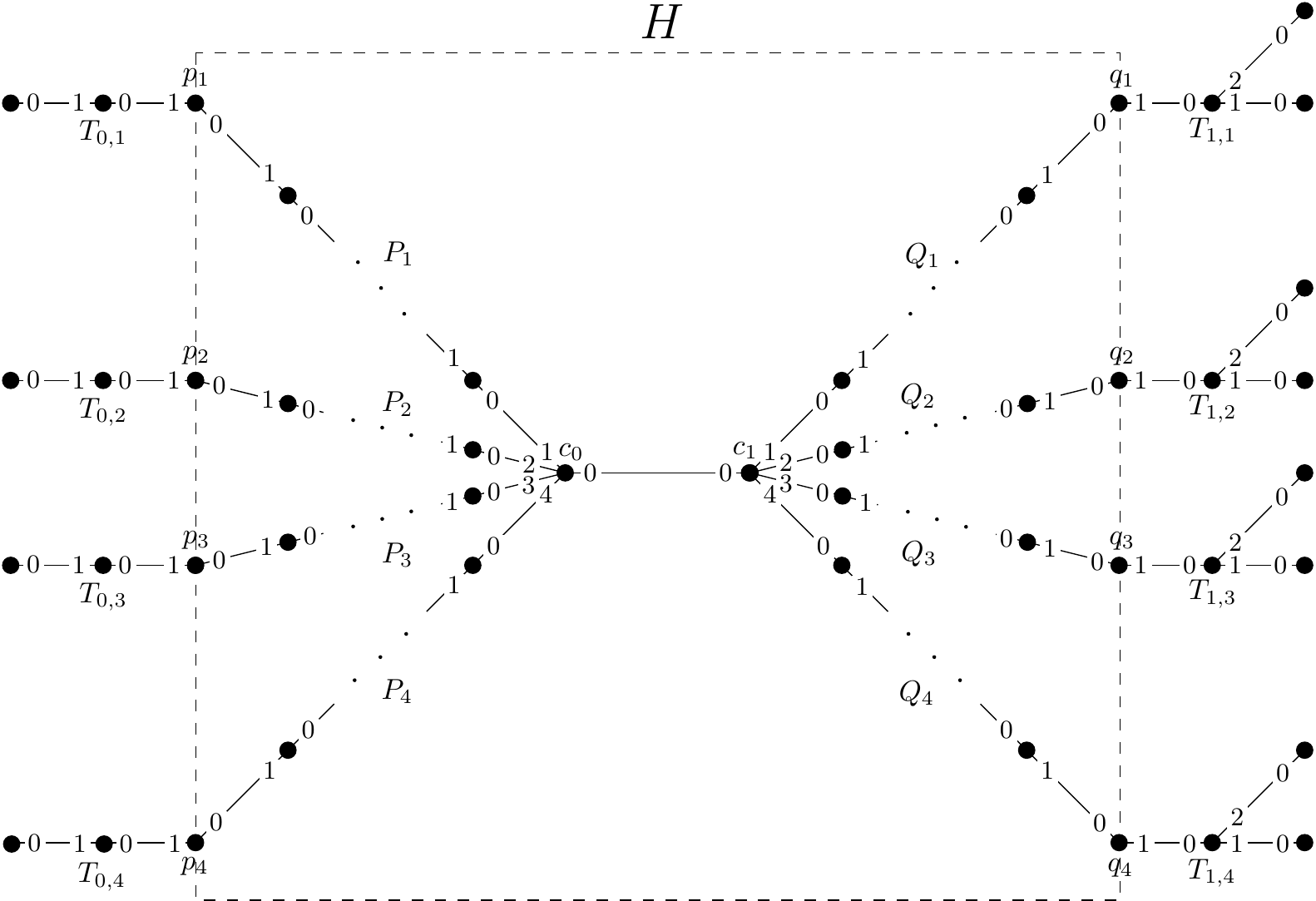}
\end{center}
\caption{Tree $G$ constructed in the proof of Theorem \ref{lb2odd}, with $k=4$.}
\label{lbdiagodd}
\end{figure}

Next, for every subset $\sigma$ of $\{2,\ldots,k\}$, we define a tree $G_{\sigma}$. At a high level, $G_{\sigma}$ is obtained from $G$ by swapping the subtrees rooted at $p_i$ and $q_i$, for each $i \in \sigma$. More specifically, the definition of $G_{\sigma}$ is similar to the definition of $G$ above, except that, for each $i \in \sigma$, tree $T_{0,i}$ has $q_i$ as its root and tree $T_{1,i}$ has $p_i$ as its root. See Figure \ref{swapped} for an example of $G_{\sigma}$. Note that $G_{\emptyset} = G$. Further, for any $\sigma \neq \sigma'$, we have $G_{\sigma} \neq G_{\sigma'}$. However, note that for any $\sigma$, since the differences between $G$ and $G_{\sigma}$ are only at the leaves or neighbours of leaves, we have that the subtree $H$ of $G$ is also a subtree of $G_{\sigma}$. The following result about $H$ follows from the symmetry of $H$ with respect to the central edge
and  from the fact that $p_i$ and $q_i$ are images of each other under this symmetry.

\begin{figure}[!ht]
\begin{center}
\includegraphics[scale=0.8]{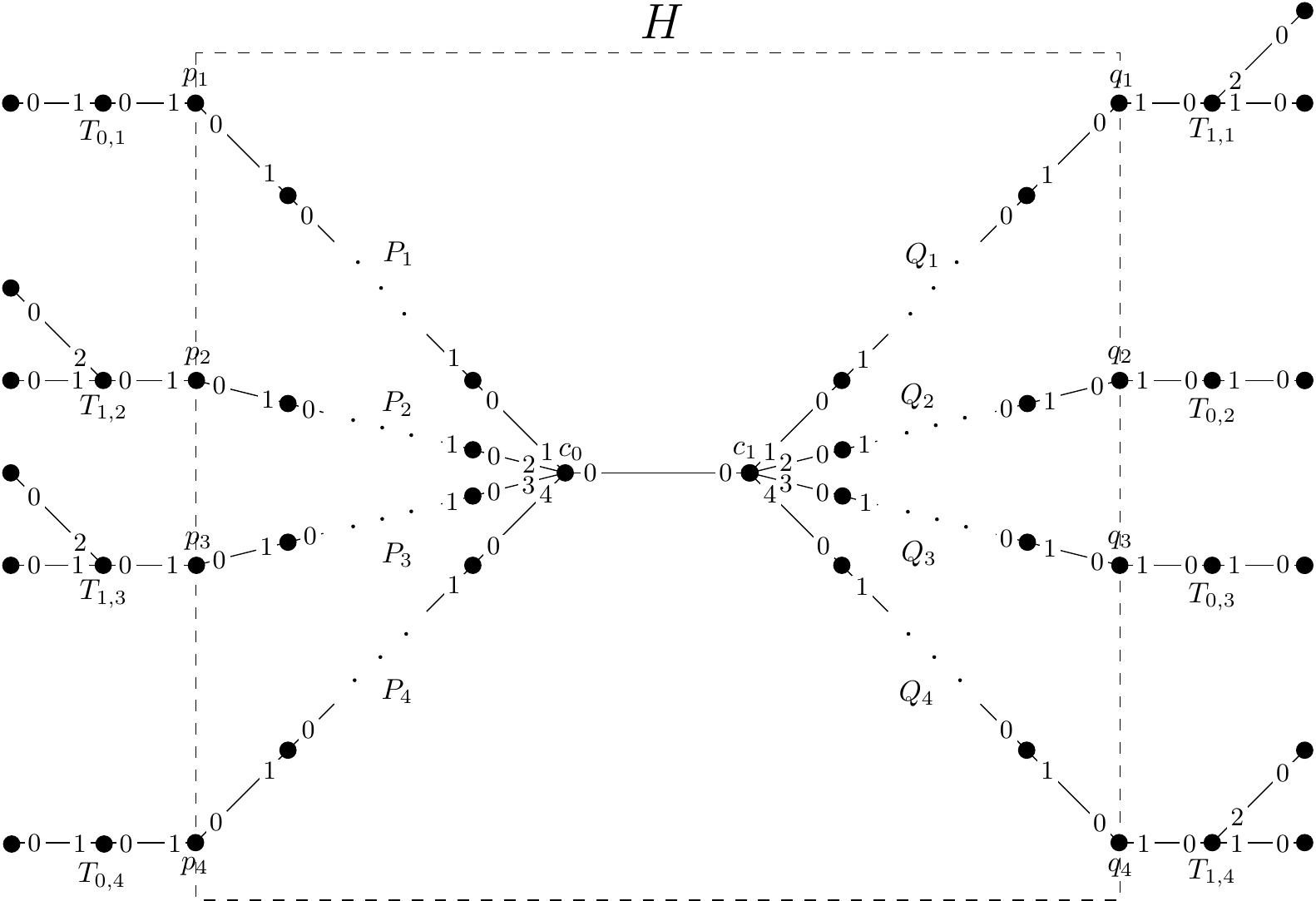}
\end{center}
\caption{Tree $G_{\sigma}$ constructed in the proof of Theorem \ref{lb2odd}, with $k=4$ and $\sigma = \{2,3\}$.}
\label{swapped}
\end{figure}

\begin{fact}\label{sameviewH}
For any $i \in \{2,\ldots,k\}$, the views $V_H(p_i,D-5)$ and $V_H(q_i,D-5)$ are identical.
\end{fact}

The class $\cal T$ is defined as the class of trees $G_{\sigma}$ for all subsets $\sigma$ of $\{2,\ldots,k\}$.
We now set out to prove a lower bound on the number of different advice strings needed to solve leader election for all trees in $\cal T$.

\begin{claim}
 For any $\sigma \neq \sigma'$, the advice strings provided to algorithm $ELECT$ for trees $G_{\sigma}$ and $G_{\sigma'}$ must be different.
.
\end{claim}

We prove the claim by contradiction. Assume that the advice strings assigned to $G_{\sigma}$ and $G_{\sigma'}$ are the same. Since $\sigma \neq \sigma'$, without loss of generality, assume that there is an integer $i \in \sigma$ such that $i \not\in \sigma'$. Consider a leaf $v$ of tree $T_{1,i}$, which, in $G_{\sigma}$, is rooted at node $p_i$. Note that,  $V_{G_{\sigma}}(v,D-3) = V_{T_{1,i}}(v,2) \cup V_H(p_i,D-5)$. In $G_{\sigma'}$, tree $T_{1,i}$ is rooted at node $q_i$, so $V_{G_{\sigma '}}(v,D-3) = V_{T_{1,i}}(v,2) \cup V_H(q_i,D-5)$. By Fact \ref{sameviewH}, it follows that $V_{G_{\sigma}}(v,D-3)=V_{G_{\sigma '}}(v,D-3)$. So, $v$ outputs the same sequence of outgoing ports after executing $ELECT$ in both $G_{\sigma}$ and $G_{\sigma'}$. Assume, without loss of generality, that in $G_{\sigma}$, the node elected by $v$ after executing $ELECT$ is closer to $c_0$ than to $c_1$. Since the length of $v$'s output is the same for both $G_{\sigma}$ and $G_{\sigma'}$, the node elected by $v$ after executing $ELECT$ in $G_{\sigma'}$ is closer to $c_1$ than to $c_0$. However, in both $G_{\sigma}$ and $G_{\sigma'}$, tree $T_{0,1}$ is rooted at node $p_i$, and the leaf $v'$ of tree $T_{0,1}$ outputs the same sequence of outgoing ports. Hence, $v'$ elects the same node in both $G_{\sigma}$ and $G_{\sigma'}$. Thus, in at least one of $G_{\sigma}$ or $G_{\sigma'}$, nodes $v$ and $v'$ do not elect the same node, a contradiction. This concludes the proof  of the claim.

Since there are $2^{k-1}$ different subsets of $\{2,\ldots,k\}$, the number of different advice strings is at least $2^{k-1}$. It follows that the size of advice is $\Omega(k) = \Omega(n/D)$.

Finally, for any $\sigma$, we prove that $\xi(G_{\sigma}) \leq h$, which implies that $\xi(G_{\sigma}) \leq \tau$. It is sufficient to show that, using $h$ rounds of communication, an arbitrary node $v$
given a map of $G_{\sigma}$  can compute where it is located in the map. The distance from any node $v$ to a node on the central edge is at most $h$. Let $c(v)$ be the endpoint of the central edge that is closest to $v$. It follows that $V(v,h)$ contains $c(v)$ (which can be identified by finding the closest node to $v$ that has degree $k+1$).
Consider two cases. If $v = c(v)$, then the subtree of $V(v,h)$ induced by the nodes that can be reached from $v$ via a path starting with port 1 can be used to uniquely identify whether $v=c_0$ or $v=c_1$. Indeed, $T_{0,1}$ is rooted at the node at distance $h-2$ from $v$ in this subtree if and only if $v=c_0$. If $v \neq c(v)$, let $i$ be the final port number in the port sequence $seq(v,c(v))$. The subtree of $V(c(v),h)$ induced by the nodes that can be reached from $c(v)$ via a path starting with port $i$ either has $T_{0,i}$ or $T_{1,i}$ rooted at the node at distance $h-2$ from $c(v)$. By identifying which of these two trees appears, $v$ can identify its position on a map of $G_{\sigma}$.
\end{proof}

\subsubsection{Even diameter}

The lower bound argument for even diameter closely resembles that for odd diameter, as given in the previous theorem. 
However, in this case, it holds only for $ \tau \leq D - 4$. At a high level, we construct trees of even diameter by increasing by 1 the diameter of trees constructed in the previous case. On the other hand, we decrease the time by 1, so that the views of certain nodes do not change. 
Rather than providing a list of small but numerous changes in the proof, 
we give the entire modified construction and argument for the reader's convenience.

\begin{theorem}\label{lb2even}
Let $8 \leq D<n$ be positive integers, where $D$ is even. 
Fix any value $D/2 \leq \tau \leq D - 4$.
There exists a class $\cal T$ of trees $T$ with size $\Theta(n)$, diameter $D$, and $\xi(T) \leq \tau$, such that
every leader election algorithm working in time $\tau$ on the class $\cal T$ requires advice of size $\Omega(n/D)$.
\end{theorem}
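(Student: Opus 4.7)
The plan is to adapt the double-broom construction of Theorem \ref{lb2odd} by inflating its odd diameter $D-1$ to the target even diameter $D$ with a single extra edge, and to compensate for this one extra round of reach by shortening the allowed time by one round (from what would have been $D-3$ in the odd case to $\tau = D-4$), so that the extra edge stays outside every relevant witness view. Concretely, I would start from the base double broom $G$ of Theorem \ref{lb2odd} at diameter $D-1$---where the arms from $c_0$ to $p_i$ and from $c_1$ to $q_i$ each have length $h-3$ with $h = D/2$, and the height-two subtrees $T_{0,i}$ (a path of length $2$) and $T_{1,i}$ (a depth-$2$ tree with two leaves at depth $2$) are attached at $p_i$ and $q_i$ respectively---and produce $\widetilde{G}$ by extending $T_{0,2}$ from a path of length $2$ to a path of length $3$. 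The unique longest path in $\widetilde{G}$ now has length exactly $D$, from the new extra leaf through $p_2, c_0, c_1$ to any $c_1$-side leaf, and $c_0$ becomes the unique central node. For every $\sigma \subseteq \{3,\ldots,k\}$ let $G_\sigma$ be obtained from $\widetilde{G}$ by swapping the subtrees at $p_i$ and $q_i$ for $i \in \sigma$; index $1$ remains the control (as in Theorem \ref{lb2odd}) and index $2$ is reserved to keep the extension---and hence the center---in place.

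The crucial point is that the backbone $H$ (all paths, without the height-two subtrees) is unchanged by the modification, so the port-preserving swap automorphism of $H$ used in Theorem \ref{lb2odd} still exists and still yields $V_H(p_i,r) = V_H(q_i,r)$ for every $r$. In particular, for a witness leaf $v \in T_{1,i}$ with $i \in \sigma \triangle \sigma'$,
\[
V_{G_\sigma}(v, D-4) \,=\, V_{T_{1,i}}(v,2) \cup V_H(p_i, D-6) \,=\, V_{T_{1,i}}(v,2) \cup V_H(q_i, D-6) \,=\, V_{G_{\sigma'}}(v, D-4).
\]
Two distance checks make this rigorous: (i) every $p_j$ or $q_j$ with $j \neq i$ on the same side as $v$ lies at distance exactly $D-4$ from $v$ and enters the view only through its degree $2$ and port set $\{0,1\}$, which are identical no matter which height-two subtree is attached; and (ii) every node of the extension at $p_2$ lies at distance strictly greater than $D-4$ from $v$ (the closest one, the new interior node of $T_{0,2}$, sits at distance $D-3$ in the worst case), so the extension is completely invisible. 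This is precisely where the drop from $D-3$ to $D-4$ is used: at time $D-3$ the new interior node would enter the view and break the matching, which is exactly why the theorem excludes $\tau = D-3$. The same two checks apply verbatim to the control leaf $v' \in T_{0,1}$.

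From here the contradiction is the verbatim odd-case argument: matched advice forces matched outputs at both $v$ and $v'$, but $v'$ elects a single physical node (same position, same output, output path staying in the common backbone-plus-$T_{0,1}$ region), while the backbone automorphism forces $v$'s length-$L$ output walk to end on opposite sides of $\{c_0,c_1\}$ in $G_\sigma$ and $G_{\sigma'}$, contradicting agreement with $v'$. This gives $2^{k-2} = 2^{\Omega(n/D)}$ distinct trees and advice size $\Omega(n/D)$. The bound $\xi(G_\sigma) \leq h \leq \tau$ is handled as in Theorem \ref{lb2odd}: within $h = D/2$ rounds every node sees $c_0$ (the unique node of eccentricity $h$ in $\widetilde{G}$), reads off the port label of its arm at $c_0$, and, with the map in hand, uniquely matches its rooted view to its position in $G_\sigma$, after which it outputs the port sequence from its position to $c_0$. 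The main obstacle I expect is the joint distance bookkeeping in step (ii) above: one must simultaneously pin the diameter to exactly $D$, keep the central node uniquely determined in every $G_\sigma$, and keep the extension outside every witness view at time $D-4$; it is this tightness that makes the $\tau = D-3$ case genuinely different and leaves it as the open sub-case noted in the introduction.
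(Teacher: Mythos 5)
Your proposal is correct and follows essentially the same route as the paper: the paper also obtains the even-diameter family by taking the odd-diameter swap construction, adding one edge to raise the diameter to $D$ (there, by making the bare arms $P_1$, $Q_1$ have lengths $h$ and $h+1$ rather than by extending the gadget at $p_2$), and dropping the time to $D-4$ precisely so that the asymmetry stays outside the radius-$(D-4)$ views of the witness and anchor leaves, after which the swap/indistinguishability/side-flip counting argument is repeated verbatim. The only difference is the placement of the diameter-adjusting edge, which does not change the structure of the argument.
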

\begin{proof}
Let $h = \frac{D-2}{2}$ and let $k  = \left\lceil \frac{n}{D-1} \right\rceil$. Let $m = (D-1)k + 2 \in \Theta(n)$. We define a class of trees $T$ with  size $m \in \Theta(n)$,
even diameter $D$, and  $\xi(T) \leq \frac{D}{2}$ such that the minimum size of advice needed by an arbitrary algorithm $ELECT$ solving leader election in time $\tau$  for this class is $\Omega(n/D)$.

We start with a single tree $G$ of size $m$, defined as follows. Consider a single edge $\{c_0,c_1\}$, and let the port numbers corresponding to this edge at $c_0$ and $c_1$ be both 0. Add a path $P_1$ of length $h$ with $c_0$ as one endpoint, and denote the other endpoint of this path by $p_1$. Add a path $Q_1$ of length $h+1$ with $c_1$ as one endpoint, and denote the other endpoint of this path by $q_1$. Let the port sequences $seq(c_0,p_1)$ and $seq(c_1,q_1)$ be $(1,0,\ldots,1,0)$ (where the length of the latter sequence is two greater than the former.) 

Next, for each $i \in \{2,\ldots,k\}$, add a path $P_i$ of length $h-2$ with $c_0$ as one endpoint. The other endpoint of each of these paths will be denoted by $p_i$. Further, let the port sequence $seq(c_0,p_i)$ is equal to $(i,0,1,0,1,0,\ldots,1,0)$. Add the same paths with $c_1$ as one endpoint. For each $i \in \{2,\ldots,k\}$, we will refer to each of these paths, and their corresponding endpoint other than $c_1$, as $Q_i$ and $q_i$, respectively. The subtree of $G$ described so far is denoted by $H$. Finally, for each $i \in \{2,\ldots,k\}$, add a tree $T_{0,i}$ with root $p_i$, where $T_{0,i}$ is a path of length 2. Let the port sequence from each $p_i$ to the other endpoint of $T_{0,i}$ be $(1,0,1,0)$. Further, for each $i \in \{2,\ldots,k\}$, add a tree $T_{1,i}$ of height 2 with $q_i$ as the root. More specifically, $T_{1,i}$ is a path of length 2 with an additional edge incident to the middle node. Let the port sequences from each $q_i$ to the leaves of $T_{1,i}$ be $(1,0,1,0)$ and $(1,0,2,0)$. This completes the definition of $G$. Figure \ref{lbdiageven} illustrates tree $G$. 

\begin{figure}[!ht]
\begin{center}
\includegraphics[scale=0.8]{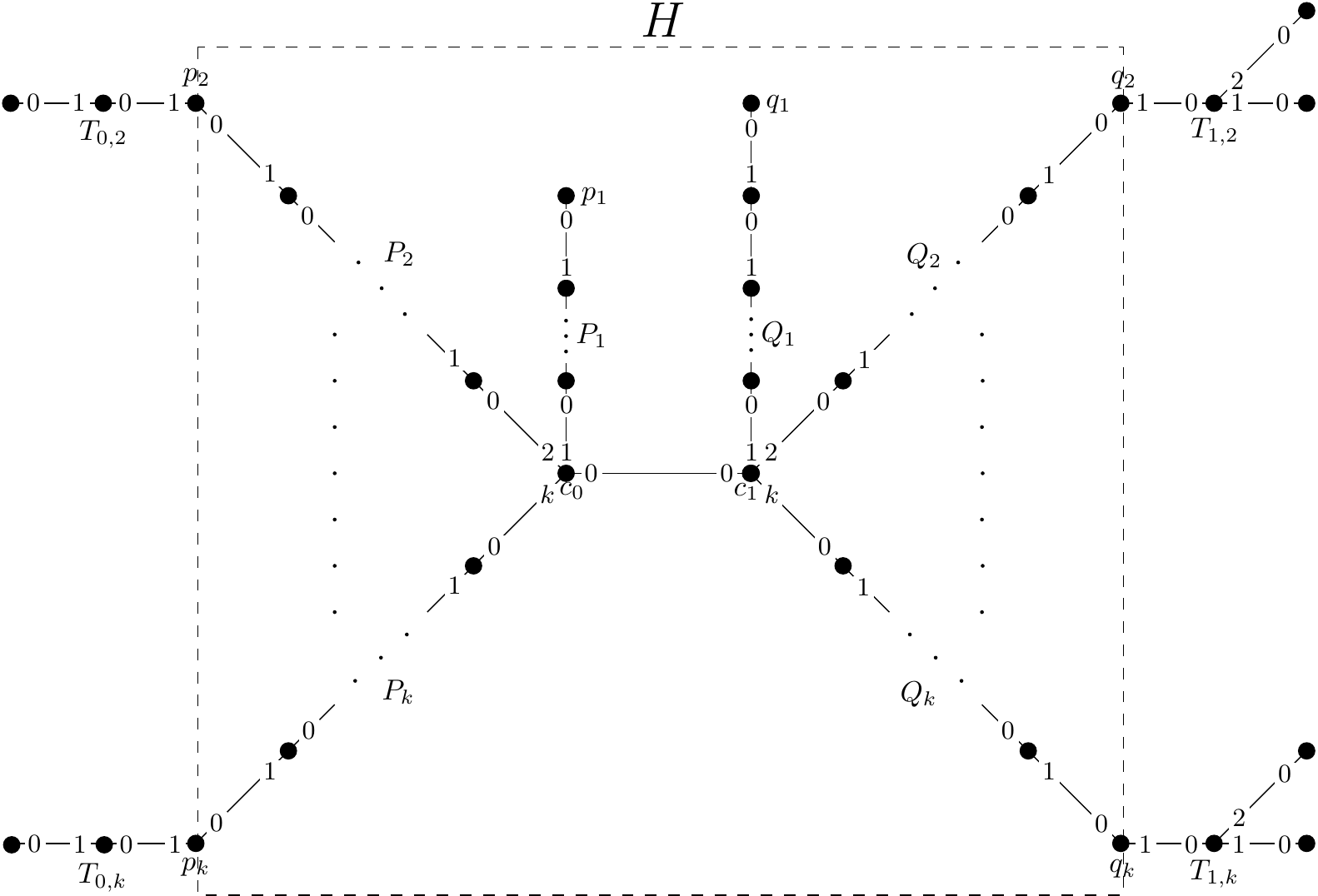}
\end{center}
\caption{Tree $G$ constructed in the proof of Theorem \ref{lb2even}}
\label{lbdiageven}
\end{figure}

Next, for every subset $\sigma$ of $\{2,\ldots,k\}$, we define a tree $G_{\sigma}$. At a high level, $G_{\sigma}$ is obtained from $G$ by swapping the subtrees rooted at $p_i$ and $q_i$, for each $i \in \sigma$. More specifically, the definition of $G_{\sigma}$ is similar to the definition of $G$ above, except that, for each $i \in \sigma$, tree $T_{0,i}$ has $q_i$ as its root and tree $T_{1,i}$ has $p_i$ as its root. Note that $G_{\emptyset} = G$. Further, for any $\sigma \neq \sigma'$, we have $G_{\sigma} \neq G_{\sigma'}$. However, note that for any $\sigma$, since the differences between $G$ and $G_{\sigma}$ are only at the leaves or neighbours of leaves, we have that the subtree $H$ of $G$ is also a subtree of $G_{\sigma}$. The following result about $H$ follows from: the symmetry of $H$, with $P_1$ and $Q_1$ removed, with respect to the central edge,
the fact that, for $i \in \{2,\ldots,k\}$, $p_i$ and $q_i$ are images of each other under this symmetry, and the fact that $p_i$ and $q_i$ cannot deduce the lengths
of $P_1$ and $Q_1$ in time $D-6$.

\begin{fact}\label{sameviewHeven}
For any $i \in \{2,\ldots,k\}$, the views $V_H(p_i,D-6)$ and $V_H(q_i,D-6)$ are identical.
\end{fact}

The class $\cal T$ is defined as the class of trees $G_{\sigma}$ for all subsets $\sigma$ of $\{2,\ldots,k\}$.
We now set out to prove a lower bound on the number of different advice strings needed to solve leader election for all trees in $\cal T$.

\begin{claim}
For any $\sigma \neq \sigma'$, the advice strings provided to algorithm $ELECT$ for trees $G_{\sigma}$ and $G_{\sigma'}$ must be different.
\end{claim}

We prove the claim by contradiction. Assume that the advice strings assigned to $G_{\sigma}$ and $G_{\sigma'}$ are the same. Since $\sigma \neq \sigma'$, without loss of generality, assume that there is an integer $i \in \sigma$ such that $i \not\in \sigma'$. Consider a leaf $v$ of tree $T_{1,i}$, which, in $G_{\sigma}$, is rooted at node $p_i$. Note that, in $G_{\sigma}$, $V_{G_{\sigma}}(v,D-4) = V_{T_{1,i}}(v,2) \cup V_H(p_i,D-6)$. In $G_{\sigma'}$, tree $T_{1,i}$ is rooted at node $q_i$, so $V_{G_{\sigma '}}(v,D-4) = V_{T_{1,i}}(v,2) \cup V_H(q_i,D-6)$. By Fact \ref{sameviewHeven}, it follows that $V_{G_{\sigma}}(v,D-4)=V_{G_{\sigma '}}(v,D-4)$. 
So, $v$ outputs the same sequence of outgoing ports after executing $ELECT$ in both $G_{\sigma}$ and $G_{\sigma'}$. Assume, without loss of generality, that in $G_{\sigma}$, the node elected by $v$ after executing $ELECT$ is closer to $c_0$ than to $c_1$. Since the length of $v$'s output is the same for both $G_{\sigma}$ and $G_{\sigma'}$, the node elected by $v$ after executing $ELECT$ in $G_{\sigma'}$ is closer to $c_1$ than to $c_0$. However, in both $G_{\sigma}$ and $G_{\sigma'}$, tree $T_{0,1}$ is rooted at node $p_i$, and the leaf $v'$ of tree $T_{0,1}$ outputs the same sequence of outgoing ports. Hence, $v'$ elects the same node in both $G_{\sigma}$ and $G_{\sigma'}$. Thus, in at least one of $G_{\sigma}$ or $G_{\sigma'}$, nodes $v$ and $v'$ do not elect the same node, a contradiction. This concludes the proof  of the claim.

Since there are $2^{k-1}$ different subsets of $\{2,\ldots,k\}$, the number of different advice strings is at least $2^{k-1}$. It follows that the size of advice is $\Omega(k) = \Omega(n/D)$.

Finally, for any $\sigma$, we prove that $\xi(G_{\sigma}) \leq h+1$, which implies that $\xi(G_{\sigma}) \leq \tau$. It is sufficient to show that, using $h+1$ rounds of communication, an arbitrary node $v$
given a map of $G_{\sigma}$  can compute where it is located in the map. The distance from any node $v$ to a node on the edge $\{c_0,c_1\}$ is at most $h+1$. Let $c(v)$ be the endpoint of the edge $\{c_0,c_1\}$ that is closest to $v$. It follows that $V(v,h+1)$ contains $c(v)$ (which can be identified by finding the closest node to $v$ that has degree $k+1$).
Consider two cases. If $v = c(v)$ or $v$ is in the the subtree $T_1$ of $V(c(v),h+1)$ induced by the nodes that can be reached from $v$ via a path starting with port 1, then $T_1 \in \{P_1,Q_1\}$ can be used to identify $v$'s position on a map of $G_{\sigma}$. Indeed, $v$ need only check the length of $T_1$, since $|T_1|=h$ if and only if $T_1 = P_1$. In the second case, let $i$ be the final port number in the port sequence $seq(v,c(v))$. The subtree of $V(c(v),h+1)$ induced by the nodes that can be reached from $c(v)$ via a path starting with port $1$ has $T_{0,i}$ or $T_{1,i}$ rooted at the node at distance $h-2$ from $c(v)$. By identifying which of these two trees appears, $v$ can identify its position on a map of $G_{\sigma}$.
\end{proof}

\section{Time $\alpha \cdot diam$ for constant $\alpha <1/2$}\label{alphadiam}

In this section, we prove tight upper and lower bounds of $\Theta(n)$ on the minimum size of advice sufficient to perform leader election
 in time $\alpha D$ for constant $\alpha <1/2$ in $n$-node trees $T$ with diameter $D \in \omega (\log ^2 n)$ and $\xi(T) \leq \alpha D$.
The upper bound, which holds for all values of $D$, is given by the following result. 

\begin{proposition}\label{O(n)}
Leader election in every non symmetric $n$-node  tree $T$ is possible in time $\xi(T)$, using $O(n)$ bits of advice.
\end{proposition}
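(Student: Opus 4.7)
The plan is to encode the port-labelled tree $T$ (rooted at some designated leader $\ell$) in $O(n)$ bits of advice, so that each node can reconstruct the entire map of $T$ and then invoke the algorithm guaranteed by the definition of $\xi(T)$.

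For the encoding, the oracle picks any node $\ell$ in $T$ as the leader and roots $T$ at $\ell$. At each node $v$, the oracle orders the children in ascending order of their port number at $v$, producing a canonical ordered rooted tree. The shape of this ordered rooted tree is stored using a balanced-parenthesis encoding of $2n$ bits. In addition, for each non-root node $v$, the oracle records the parent port $p_v$ (the port at $v$ leading toward its parent), which requires $\lceil \log d_v \rceil$ bits. Since $\sum_v d_v = 2(n-1)$ and $\log$ is concave, $\sum_v \log d_v \leq n \log 2 + O(1) \in O(n)$, so this part of the advice contributes only $O(n)$ bits. Storing $\xi(T)$ adds $O(\log n)$ additional bits, and the total is $O(n)$.

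This encoding uniquely determines $T$ as a port-labelled rooted tree: at each node $v$ with children $c_1, \ldots, c_k$ listed in the canonical order, the port at $v$ to $c_i$ is the $i$-th smallest element of $\{0, \ldots, d_v - 1\} \setminus \{p_v\}$ (where $\{p_v\} = \emptyset$ for the root), and the port at $c_i$ back to $v$ is $p_{c_i}$. Each node $v$ first decodes the advice to obtain the complete map of $T$ (with $\ell$ distinguished as root), then spends $\xi(T)$ rounds to obtain $V(v,\xi(T))$, and finally runs the algorithm promised by the definition of $\xi(T)$: given the full map and its view, this algorithm outputs the sequence of outgoing ports along $path(v,\ell)$.

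The main technical obstacle is to fit the encoding into $O(n)$ bits. A naive per-node listing of port numbers costs $\sum_v \log(d_v!) = \Theta(\sum_v d_v \log d_v)$ bits, which is $\Theta(n \log n)$ in the worst case: for instance, a star whose center has degree $n-1$ admits $(n-1)!$ different port numberings at the center. The decisive trick that saves a logarithmic factor is to sort the children of each node by their port at the parent, so that the full per-node port information is implicit from the ordering and from a single number $p_v$ of size $\lceil \log d_v \rceil$ rather than $\lceil \log(d_v!) \rceil$. This reduces the combined cost across all nodes from $\Theta(n \log n)$ to $O(n)$, after which correctness and the time bound follow immediately from the definition of $\xi(T)$ (which is finite because $T$ is non-symmetric).
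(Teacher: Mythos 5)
Your proposal is correct and follows essentially the same route as the paper: encode the entire port-labelled tree in $O(n)$ bits of advice, let every node reconstruct a faithful map, and then invoke the algorithm whose existence is guaranteed by the definition of $\xi(T)$. The paper uses Chierichetti's DFS code (a $2(n-1)$-bit up/down sequence plus the entry ports at first visits, minimized over roots), while you use balanced parentheses plus the parent ports with children sorted by the port at the parent; both rest on the same trick of making the downward ports implicit from the child ordering so that only one port per non-root node is stored, and both give $O(n)$ bits.

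One small imprecision to fix: you let the oracle ``pick any node $\ell$ as the leader'' and then assert that the $\xi(T)$-time algorithm outputs the outgoing ports of $path(v,\ell)$. The definition of $\xi(T)$ only guarantees that \emph{some} node can be elected in time $\xi(T)$ given the map; it does not allow you to prescribe an arbitrary $\ell$. For instance, in the length-$6$ line from the introduction with $\xi(T)=2$, the two leaves have identical views at time $2$, so they must output the same sequence, which forces the elected node to be the specific node equidistant from them --- an arbitrary $\ell$ would be infeasible. The fix is immediate: use $\ell$ only as the root for the encoding, and let every node simply run the $\xi(T)$-round election algorithm on the reconstructed map, electing whichever node that algorithm designates.
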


\begin{proof}
We use the following observation of Chierichetti \cite{Ch}.
An $n$-node anonymous tree can be coded by an ordered pair of two sequences $(\phi, \psi)$  in such a way
that trees  that are not port-preserving isomorphic get different codes. 
Starting from any node considered as a root, perform a DFS traversal of the tree, visiting children of any node in the
increasing order of ports at this node. 
The binary sequence $\phi$ has length $2(n-1)$ and is  
constructed as follows.
Whenever an edge is traversed down the tree, write 0, and whenever it is traversed up the tree, write 1. 
The sequence $\psi$ has length $n-1$: let $(v_1,\dots ,v_{n-1})$ be the sequence of nodes other than the root in the order of first visit in the traversal, and let the $i^{th}$ term
of $\psi$ be the entry port number at the first visit of $v_i$. 
There are at most  $2^{2(n-1)}$ possible sequences $\phi$. The number of possible sequences $\psi$ for each $\phi$ is bounded above by the product of degrees of all nodes other
than the root and the sum of these degrees is at most $2(n-1)$. Hence the number of  possible sequences $\psi$ for each $\phi$ is bounded above by $2^{n-1}$. Hence,
there are at most $2^{2(n-1)} \cdot 2^{n-1}$ possible codes. The code of a tree is the lexicographically smallest pair $(\phi, \psi)$ over all choices of the root.

We solve leader election as follows.
The advice is the code of the tree. It has size $O(n)$. Using this code, all nodes  construct a faithful map of the tree. Using the map, they can perform leader election in time
$\xi(T)$, by the definition of this parameter. 
\end{proof}

The next result is a matching lower bound when the diameter is not too small compared to $n$.

\begin{theorem}\label{Omega(n)}
Let $D<n$ be positive integers, such that $D \in \omega(\log^2n)$. Let $\alpha < 1/2$ be a constant and let $\tau= \lfloor \alpha D \rfloor$. There exist trees $T$ with  size $\Theta(n)$, diameter $D$, and $\xi(T) \leq \tau$, for which any leader election algorithm working in time $\tau$
requires advice of size $\Omega(n)$.
\end{theorem}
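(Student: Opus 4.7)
The plan is to construct a class $\mathcal T$ of $2^{\Omega(n)}$ trees of size $\Theta(n)$, diameter exactly $D$, and $\xi(T)\le\tau$, such that any two distinct trees in $\mathcal T$ require distinct advice. A pigeonhole argument then finishes the proof: if the advice had size $o(n)$, only $2^{o(n)}$ advice strings would be available, and two distinct members of $\mathcal T$ would be forced to share one, contradicting the pairwise-distinctness claim.

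Set $h=\lfloor D/2\rfloor$. My construction extends the double-broom families from Theorems~\ref{lb2odd} and~\ref{lb2even}. Take a central edge $\{c_0,c_1\}$ (or central node when $D$ is even) and attach $k=\Theta(n/D)$ arms to each side via distinct ports. Each arm is a path of length $h$, and I use its internal port numbering to encode a bit-string of length $\Theta(h)=\Theta(D)$ (at each internal node of the arm, one free bit chooses whether port $0$ or port $1$ leads back towards $c_j$). Across the $2k$ arms this encodes $\Theta(kh)=\Theta(n)$ bits, so $|\mathcal T|=2^{\Theta(n)}$. A single constant-size asymmetry marker attached to one distinguished arm makes each tree non-symmetric and lets each node locate itself in the map within $\tau$ rounds, so $\xi(T)\le\tau$; the slack $\omega(\log^2 n)$ in the diameter leaves ample room for this marker while maintaining diameter exactly $D$.

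The key structural observation is that, since $\tau<h$, a leaf $v$ at the deep end of an arm sees, within its view of radius $\tau$, only its own arm: the view never reaches $c_0$, $c_1$, any other arm, or the other side of the central edge. Suppose for contradiction that two distinct trees $T\neq T'$ in $\mathcal T$ share an advice. Pick an arm position $i$ at which their encodings differ, and let $\widetilde T$ be the ``swap companion'' of $T$ obtained by interchanging the contents of the two arms at port $i$ across the central edge; note $\widetilde T\in\mathcal T$. The handle (the tree minus the pair of arms at position $i$) is invariant under this reflection, so by an analogue of Fact~\ref{sameviewH} and Fact~\ref{sameviewHeven} the leaf $v$ at the deep end of arm $i$ has identical views of radius $\tau$ in $T$ and in $\widetilde T$. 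Following the ``which side of the central edge'' argument of Theorem~\ref{lb2even}, $v$'s identical output elects a leader on opposite sides of the central edge in $T$ versus $\widetilde T$, while a reference leaf $v'$ in the (unchanged) asymmetry-marker arm has identical views in both trees and elects the same absolute node. In at least one of $T,\widetilde T$ this places $v$ and $v'$ on opposite sides of the central edge, contradicting uniqueness of the elected leader.

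The main obstacle is lifting this swap argument to a \emph{pairwise} bound over all of $\mathcal T$ (not just swap-orbits), which is what is needed to get $\Omega(n)$ rather than merely $\Omega(n/D)$. Mirroring the paper's style from the proofs of Theorems~\ref{lb2odd} and~\ref{lb2even}, the crucial point is that the witness leaf $v$'s view is confined entirely to its own arm (since $\tau<h$), so it is insensitive to the encodings of \emph{all} other arms: for any two distinct $T,T'\in\mathcal T$, picking $i$ where they differ and applying the reflection of $T$ at arm $i$ gives a direct contradiction between $T$ and $T'$ (using $T'$'s arm-$i$ encoding to play the role of the swapped arm contents). Verifying that the handle's mirror symmetry is preserved under arbitrary choices of the other arm encodings, and that the asymmetry marker's small view never exposes differences between $T$ and $T'$, is the main technical delicacy; it is precisely the condition $\tau<h$, together with the $\omega(\log^2 n)$ slack in the diameter to absorb the marker, that makes this work.
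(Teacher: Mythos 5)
Your high-level plan (a spider/double-broom class whose arms carry port-encoded bits, witness leaves whose $\tau$-views are confined to their own arm, and a counting argument over advice strings) is in the spirit of the paper's construction, but two essential steps are missing and, as written, fail. First, you never establish $\xi(T)\leq\tau$, and for your class it is actually false: since $\tau<\lfloor D/2\rfloor$, a leaf at the deep end of an arm sees neither the central edge nor your single ``constant-size asymmetry marker'' on the distinguished arm. If two arms agree on the $\tau$-deep prefix visible to their end leaves but differ closer to the centre (such trees are abundant when you encode arbitrary bits along the \emph{whole} arm), those two leaves have identical views and must output the same port sequence even given a full map; a common leader then forces at least one fixed output to climb an arm whose hidden climbing ports it cannot match, so election is infeasible in time $\tau$ and these trees must be excluded --- but then the surviving subclass no longer obviously has $2^{\Omega(n)}$ members, and your count collapses. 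The paper avoids this by fixing the ports on the outer $f\approx\tau$ nodes of \emph{every} arm identically, placing the variable (bit-encoding) segment only in the inner $\Theta(D)$ nodes near the centre, and planting \emph{distinct} height-two marker gadgets every $\tau-4$ steps along every arm so that every node can locate itself within $\tau$ rounds; the hypothesis $D\in\omega(\log^2 n)$ is used precisely to make the $\Theta(n/\tau)$ distinct markers fit in $o(n)$ nodes, not merely to ``leave room for a marker''.

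Second, your pairwise distinct-advice claim is not proved. The reflection/swap argument (mirroring Theorems~\ref{lb2odd} and~\ref{lb2even}) only separates a tree from its swap companion, i.e.\ it distinguishes $2^{\Theta(n/D)}$ swap-orbits, which yields $\Omega(n/D)$, not $\Omega(n)$; and for arbitrary $T\neq T'$ the trick does not transfer: if $T$ and $T'$ differ inside the witness leaf's visible prefix, the leaf's views differ and nothing follows, while if they differ only in the hidden region there is no ``handle mirror symmetry'' to invoke once the other arms carry arbitrary, differing encodings on the two sides. The paper's pairwise argument is of a different nature and does not use reflections at all: because the outer segments are fixed, the end leaves $p_i$ and $p_{i'}$ of two different arms have literally identical $\tau$-views in \emph{all} trees of the class, hence with equal advice they output fixed sequences; since both sequences must trace simple paths to a common leader and the ports at the centre are fixed, at least one of the two fixed outputs must traverse the variable edge $e_{i,j}$ on which the two labelings disagree (set to $0$ in one tree and $1$ in the other), which is impossible for the same outgoing port number. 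You would need to restructure your construction (fixed visible prefix, markers along every arm) and replace the reflection step by such a crossing-edge argument to close the proof.
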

\begin{proof}
We define a class ${\cal T}$ of trees $T$ with  size $\Theta(n)$, diameter $D$, and $\xi(T) \leq \tau$ such that the minimum size of advice needed by an arbitrary algorithm $ELECT$ solving leader election in time $\tau$  for this class is $\Omega(n)$.

We consider the case where $D$ is even (the case where $D$ is odd is obtained by adding a single edge to the construction for diameter $D-1$). We begin by constructing a ``template'' tree $G$ from which all trees in our class ${\cal T}$ will be constructed. Tree $G$ itself is not a valid instance for leader election since some of its port numbers are undefined, but each tree in our class is obtained from $G$ by filling in the missing port numbers. Our construction of $G$ depends on a set ${\cal M}$ of trees that we call \emph{markers}. Each marker is a tree of height 2 with fixed port numbers, and each marker appears in $G$ only once. The purpose of the markers is to guarantee $\xi(T) \leq \tau$ by enabling each node to determine its location in a map of $G$. Later, we will specify how to define the markers so that we have as many of them as we need.

Let $k = \lceil 2n/D \rceil$. Our template $G$ consists of a central node $c$, and, for each $i \in \{0,\ldots,k-1\}$, a path $P_i$ of length $D/2$ with one endpoint equal to $c$. For each $i \in \{0,\ldots,k-1\}$, the other endpoint of $P_i$  will be denoted by $p_i$, and the first port number of the sequence $seq(c,p_i)$ is equal to $i$. Let $f$ be the integer in $\{\tau +1,\tau+2\}$ that has the same parity as $D/2$. The integer $f$ is the number of nodes on each path $P_i$ (without $c$) whose incident port numbers are fixed in $G$. More specifically, the first $2f-1$ port numbers of the sequence $seq(p_i,c)$ are $(0,1,0,1,\ldots,0)$. Finally, for each $i \in \{0,\ldots,k-1\}$, we place one marker rooted at distance 2 from $p_i$, and, from this node, we place one marker rooted at every $(\tau - 4)^{th}$ node on the path towards $c$. We ensure that no two markers from ${\cal M}$ are used twice in the construction. See Figure \ref{lbOmegan} for an illustration of $G$.

\begin{figure}[!ht]
\begin{center}
\includegraphics[scale=0.8]{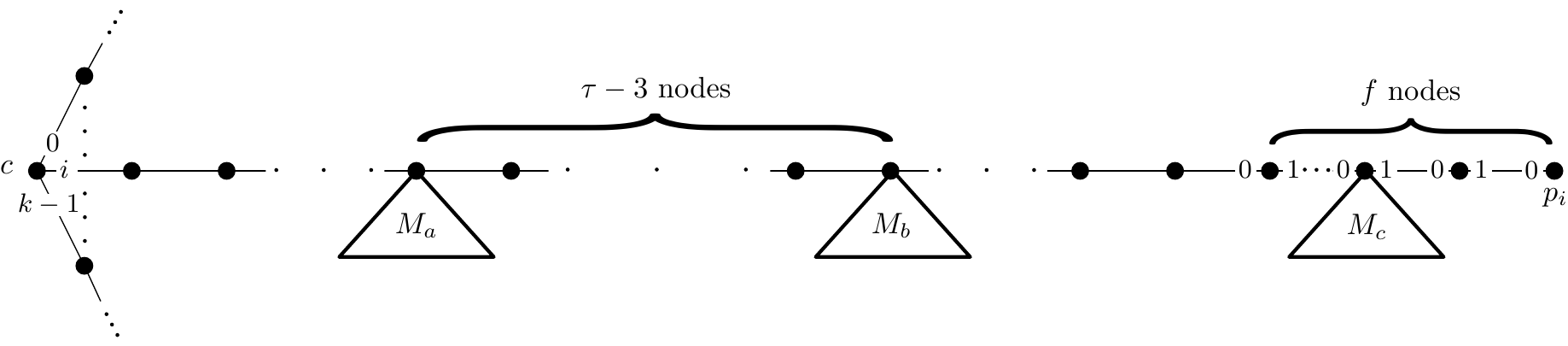}
\end{center}
\caption{Template $G$ constructed in the proof of Theorem \ref{Omega(n)}, where $M_a$, $M_b$, $M_c$ are different markers from $\cal M$.}
\label{lbOmegan}
\end{figure}

We are now ready to construct the class ${\cal T}$. Let $s = \frac{D}{2} - f$. Hence $s \in \Theta(D)$. We note that, on each path $P_i$ in $G$, there are $s$ consecutive nodes (starting at $c$'s neighbour on $P_i$) whose incident port numbers are not defined. Call these nodes $v_{i,1},\ldots,v_{i,s}$. Since $f$ was chosen to have the same parity as $\frac{D}{2}$, it follows that $s$ is even. For $j \in \{1,\ldots,s/2\}$, let $e_{i,j}$ be the edge $\{v_{i,2j-1},v_{i,2j}\}$. We say that edge $e_{i,j}$ is \emph{set to 0} (respectively, \emph{set to 1}) if the two ports corresponding to edge $e_{i,j}$ are equal to 0 (respectively, equal to 1), and the ports at $v_{i,2j-1}$ and $v_{i,2j}$ not corresponding to edge $e_{i,j}$ are equal to 1 (respectively, equal to 0). We now demonstrate how to obtain a tree $G_{L} $ by defining a labeling function $L$. In particular, a \emph{labeling function} $L : \{0,\ldots,k-1\} \times \{1,\ldots,s/2\} \longrightarrow \{0,1\}$ maps pairs of integers $(i,j)$ to 0 or 1. The tree $G_{L}$ is obtained from $G$ by setting each edge $e_{i,j}$ to $L(i,j)$. Note that, for any labeling function $L$, the port labeling of tree $G_L$ is valid. Further, for any distinct labeling functions $L,L'$, the trees $G_{L}$ and $G_{L'}$ are distinct. The class $\cal T$ is defined as the class of trees $G_L$ for all labeling functions $L$. The following result will help us compute the number of different advice strings required by algorithm $ELECT$ for the class ${\cal T}$.

\begin{claim}\label{Ldifferent}
 For any distinct labeling functions $L,L'$, 
 the advice strings provided to algorithm $ELECT$ for trees $G_{L}$ and $G_{L'}$ must be different.
\end{claim}
We prove the claim by contradiction. Assume that the advice strings assigned to $G_{L}$ and $G_{L'}$ are the same. Since $L$ and $L'$ are distinct, there exists an $i \in \{0,\ldots,k-1\}$ and a $j \in \{1,\ldots,s/2\}$ such that $L(i,j) \neq L'(i,j)$. Without loss of generality, assume that $L(i,j) = 0$. Choose an arbitrary $i' \in \{0,\ldots,k-1\}-\{i\}$. We consider the executions of algorithm $ELECT$ by nodes $p_{i}$ and $p_{i'}$ in both $G_{L}$ and $G_{L'}$. First, note that the first $2f-1 \geq 2\tau + 1$ port numbers of $seq(p_{i},c)$ and $seq(p_{i'},c)$ were fixed in $G$. It follows that $V(p_i,\tau)$ is the same in both $G_{L}$ and $G_{L'}$. Similarly, $V(p_{i'}, \tau)$ is the same in both $G_{L}$ and $G_{L'}$. Since $G_L$ and $G_{L'}$ are assigned the same advice, it follows that $p_i$ outputs the same sequence of outgoing ports, say $\sigma$, after executing $ELECT$ in $G_L$ as it does after executing $ELECT$ in $G_{L'}$. Similarly, $p_{i'}$ outputs the same sequence of outgoing ports, say $\sigma'$, in both executions. Since the sequence outputted by $p_i$ is the same in both executions, corresponds to simple paths in both of them, and the port numbers at $c$
are fixed, $p_i$ elects the same node in $G_L$ and in $G_L'$. The same is true for  $p_{i'}$.
Therefore, it must be the case that, for one of $\sigma$ or $\sigma'$, the corresponding path crosses edge $e_{i,j}$ using the same outgoing port number in the execution in $G_L$ as in the execution in $G_{L'}$. However, the two ports corresponding to $e_{i,j}$ are labeled 0 in $G_L$ and labeled 1 in $G_{L'}$, a contradiction. This proves the claim.

By Claim \ref{Ldifferent}, the number of different advice strings is equal to the number of distinct labeling functions, i.e., $2^{k(s/2)}$. But $s = \frac{D}{2} - f
\geq \frac{D}{2} -(\tau +2)  \geq \frac{D}{2} - \lfloor \alpha D \rfloor - 2 \in \Theta(D)$, and $k=\lceil 2n/D \rceil$, so the number of labeling functions is $ 2^{cn}$ for some positive constant $c$. It follows that the minimum number of bits sufficient to encode the advice strings is $\Omega(n)$, as required.

Next, we show that, for each $T \in {\cal T}$, we have $\xi(T) \leq \tau$. By construction, for each node $v$ in $T$, at least one of the following must be true:
\begin{enumerate}
\item $v$ is contained in a marker \label{inmarker}
\item $v$ is located between two consecutive roots of markers \label{betweentwo}
\item for some $i \in \{0,\ldots,k-1\}$, $v=p_i$, or $v$ is located between the root of a marker and $p_i$, with no root of a marker between $v$ and $p_i$\label{atleaf}
\item $v=c$, or, $v$ is located between the root of a marker and $c$, with no root of a marker between $v$ and $c$ \label{nearcentre}
\end{enumerate}

In case (1), $v$ is at distance at most 2 from the root of the marker. Since the markers are spaced distance $(\tau-4)$ apart, the distance from $v$ to a neighbouring marker is at most $\tau-2$. Therefore, $V(v,\tau)$ contains the marker containing $v$ as well as at least one neighbouring marker. In case (2), since the markers are spaced distance $(\tau-4)$ apart and the height of each marker is 2, it follows that $V(v,\tau)$ contains the two markers closest to $v$. In both cases, using these two markers, $v$ can determine where on a map of $T$ it is located. In case (3), since we placed a marker rooted at distance at most 2 from each $p_i$, 
it follows that $V(v,\tau)$ contains the marker closest to $v$. Using this marker, $v$ can determine where on a map of $T$ it is located. In case (4), 
by the construction of $G$, the distance between $v$ and $c$ is at most $\tau -4$, so $c$ is in $V(v,\tau)$. Node $c$ is uniquely recognizable as the node of degree $k$, as long as we ensure that no root of a marker has degree $k$ (see the specification of the markers below). 

Hence in all cases $v$ can locate itself in the map within time $\tau$, so leader election can be done in this time given the map, thus proving $\xi(T) \leq \tau$.

It remains to describe the set of markers. First, observe that the total number of markers needed to define the template $G$ is bounded above by $k\left\lceil\frac{D}{2(\tau-4)}\right\rceil$. This is because, for each $i \in \{0,\ldots,k-1\}$, path $P_i$ in $G$ contains $\frac{D}{2}$ nodes (other than $c$), and a marker is placed every $(\tau-4)$ nodes. So the number of markers needed is bounded above by $4\left(\frac{2n}{D}\right)\left(\frac{D}{2(\tau-4)}\right) = \frac{4n}{\tau-4}$.

Let $y = \lceil \log\frac{4n}{\tau-4} \rceil$. Consider the family $\cal X$ of all trees of height 2 with $z=y^2$ leaves and whose root has degree $x = \lceil y^{3/2} \rceil\leq y^{7/4}$,
for sufficiently large $n$. For each of these trees, label the ports at the root node using $\{2,\ldots,x+1\}$ (the port numbers 0 and 1 are reserved to label the ports on path $P_i$.) For each node at the first level, the port leading towards the root node is labeled 0. The number of trees in $\cal X$ is equal to the number of ordered partitions of the set of leaves into $x$ parts (a partition specifies the number of leaves adjacent to each of the $x$ nodes at the first level of the marker.) So the number of trees is equal to $\binom{z+x}{x} = \frac{(z+x)\cdots(z+1)}{x!} \geq \frac{z^x}{x!}$. We now show that $\frac{z^x}{x!} \geq \frac{n}{D}$. Note that $\log\left( \frac{z^x}{x!} \right) = \log{z^x} - \log{x!} \geq  x\log{z} - x\log{x}\geq  \lceil y^{3/2} \rceil\log{y^2} -  \lceil y^{3/2} \rceil\log{y^{7/4}}= 2 \lceil y^{3/2} \rceil\log{y} - \frac{7}{4}  \lceil y^{3/2} \rceil\log{y} \geq y$. It follows that $\frac{z^x}{x!} \geq 2^y \geq \frac{4n}{\tau-4}$. Therefore, by taking an arbitrary subset $\cal M$ of size at least $\frac{4n}{\tau-4}$ of the family $\cal X$, we have a sufficiently large set of markers to define $G$. Note that, since the size of each marker is $x+z$, the total number of nodes needed to define the markers in $G$ is at most $\lfloor \frac{4n}{\tau-4}\rfloor (y^{7/4} + y^2) \leq \frac{8n}{\tau-4}(y^2) \leq \frac{32n}{\tau-4}\log^2\frac{4n}{\tau-4}$. Since $D \in \omega(\log^2{n})$, it follows that $\tau - 4 = \lfloor \alpha D \rfloor - 4 \in \omega(\log^2\frac{4n}{\tau-4})$. Hence, for sufficiently large $n$, the number of nodes needed to define the markers is $o(n)$. Thus, the size of $G$ (and, therefore, of each tree in ${\cal T}$) is in $\Theta(n)$.
\end{proof}

Proposition \ref{O(n)} and Theorem \ref{Omega(n)} imply the following corollary.

\begin{corollary}
Let $D<n$ be positive integers, such that $D \in \omega(\log^2n)$. Let $\alpha < 1/2$ be a constant and let $\tau= \lfloor \alpha D \rfloor$.
Let $\cal T$ be the class of trees $T$  with size $\Theta(n)$, diameter $D$, and $\xi(T) \leq \tau$.
The minimum size of advice to perform leader election in time $\tau$ for the class $\cal T$ is $\Theta(n)$.
\end{corollary}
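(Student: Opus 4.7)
The plan is to prove the corollary as a direct consequence of the two results immediately preceding it, matching the $O(n)$ upper bound from Proposition \ref{O(n)} against the $\Omega(n)$ lower bound from Theorem \ref{Omega(n)}.

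For the upper bound, I would argue as follows. Fix any tree $T \in {\cal T}$. By definition of ${\cal T}$, we have $\xi(T) \leq \tau$, and $T$ has $\Theta(n)$ nodes. Proposition \ref{O(n)} applies to every non-symmetric tree; any $T$ with $\xi(T) \leq \tau < \infty$ is necessarily non-symmetric, so the proposition yields a leader election algorithm that uses time $\xi(T) \leq \tau$ and $O(|T|) = O(n)$ bits of advice. Running this algorithm and halting its output at round $\tau$ (which is allowed since it terminates no later than $\xi(T) \leq \tau$) gives leader election in time $\tau$ with $O(n)$ bits of advice, uniformly over ${\cal T}$.

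For the lower bound, Theorem \ref{Omega(n)} directly produces, under the hypotheses $D \in \omega(\log^2 n)$ and $\alpha < 1/2$ with $\tau = \lfloor \alpha D \rfloor$, a subfamily ${\cal T}' \subseteq {\cal T}$ of trees with size $\Theta(n)$, diameter $D$, and $\xi(T) \leq \tau$, such that any leader election algorithm running in time $\tau$ on ${\cal T}'$ requires $\Omega(n)$ bits of advice. Since ${\cal T}' \subseteq {\cal T}$, the same lower bound holds a fortiori for ${\cal T}$.

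Combining the two bounds gives the claimed $\Theta(n)$ bound. There is essentially no new obstacle here: the only point worth checking is that the class ${\cal T}'$ produced by Theorem \ref{Omega(n)} is indeed a subclass of the ${\cal T}$ described in the corollary (same size, diameter, and $\xi$ constraints), which is immediate from the theorem's statement. The proof is thus just a one-line invocation of each result, assembled into the matching upper and lower bounds.
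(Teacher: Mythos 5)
Your proposal is correct and matches the paper exactly: the paper states this corollary as an immediate consequence of Proposition \ref{O(n)} (the $O(n)$ upper bound, applicable since $\xi(T)\leq\tau$ forces non-symmetry) and Theorem \ref{Omega(n)} (the $\Omega(n)$ lower bound on a subclass of $\cal T$). Your only additions — checking non-symmetry and that the lower-bound family lies inside $\cal T$ — are the right sanity checks and are handled just as implicitly in the paper.
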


\section{Discussion of open problems}

For time values $diam -1$ and $diam-2$, we gave tight bounds (up to constant factors) on the minimum size of advice sufficient to perform leader election for trees of diameter $diam$. For time in the interval  $[\beta \cdot diam, diam -3]$ for constant $\beta >1/2$, we gave bounds leaving a logarithmic gap, except for the special case
when $diam$ is even and time is exactly $diam-3$.
This yields the first problem: 

{\bf P1.} Find close upper and lower bounds on the minimum size of advice in the special case when $diam$ is even and time is exactly $diam-3$.

As a step in this direction, we prove the following lower bound which implies an exponential jump in the minimum size of advice
between time $diam -2$ and time $diam -3$ when $diam$ is even and constant.

\begin{proposition}\label{discussprop}
Let $n$ be a positive integer and let $D$ be an even positive integer constant such that $ D\geq 6$. There exists a class $\cal T$ of trees $T$ with size $\Theta(n)$, diameter $D$, and $\xi(T) \leq D-3$, such that
every leader election algorithm working in time $D-3$ on the class $\cal T$ requires advice of size $\Omega(n^{2/D}\log n)$.
\end{proposition}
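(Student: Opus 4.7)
My plan is to construct, by analogy with Theorem~\ref{lb2even} but at the tighter view radius $\tau = D-3$, a rich class $\cal T$ of $2^{\Omega(n^{2/D}\log n)}$ spiders for which any election algorithm needs an advice string of the same order. The class $\cal T$ will consist of spiders with central node $c$ and $k=\Theta(n^{2/D})$ arms; each arm is a path of length $h-2$ (with $h=D/2$) from $c$ to a vertex $u_i$ at which a height-$2$ tip-decoration of size $\Theta(n^{1-2/D})$ is attached, drawn from a carefully engineered family $\{T_t\}_{t\in[Z]}$ with $Z=\Theta(n)$. The family will be designed so that every two members $T_t,T_{t'}$ have the same degree at every depth — so that no node lying outside the arm can distinguish them from the boundary-degree information available at radius $\tau$ — while the label $t$ is still recoverable from inside the decoration via port-labelling. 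Indexing a tree $G_{\vec t}\in\cal T$ by $\vec t=(t_1,\ldots,t_k)\in[Z]^k$ then yields $|\cal T|=Z^k = 2^{\Theta(n^{2/D}\log n)}$.

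I would next verify the three required properties of every $G_{\vec t}\in\cal T$. Diameter exactly $D$ is realised by any two depth-$h$ leaves on distinct arms. Total size $1+k(h-2+|T_{t_i}|)=\Theta(n)$. And $\xi(G_{\vec t})\leq D-3$: a node lying inside a decoration recovers its arm's label $t_i$ locally; a node outside sees $c$ (since $\tau\geq h$ for $D\geq 6$) and uses the port taken at $c$ together with its depth to locate itself on the given map.

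The advice lower bound would then follow by pigeonhole combined with a two-witness swap argument modelled on Theorem~\ref{lb2even}. If some algorithm $ELECT$ used only $o(n^{2/D}\log n)$ bits of advice, the number of admissible advice strings would be less than $|\cal T|$, so two distinct configurations $\vec t\neq \vec t'$ would share the same advice. Pick any coordinate $i^*$ with $t_{i^*}\neq t_{i^*}'$. By the degree-invariance of the decoration family, every node $v$ lying on an arm $j\neq i^*$, together with appropriately chosen witnesses on arm $i^*$ itself, has the same radius-$\tau$ view in $G_{\vec t}$ and $G_{\vec t'}$ and therefore outputs the same port-sequence in both executions. Exactly as in Theorem~\ref{lb2even}, one then constructs two such witnesses — one whose output sequence stays outside arm $i^*$, so its endpoint (the elected node) occupies the same structural position in both trees, and a second whose sequence crosses into arm $i^*$, so its endpoint shifts once the decoration $T_{t_{i^*}}$ is replaced by $T_{t_{i^*}'}$ — and derives the contradiction that the leader of $G_{\vec t}$ and of $G_{\vec t'}$ cannot simultaneously coincide with the position pinned down by the first witness.

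The main obstacle is the explicit construction of the decoration family: it must pack $Z=\Theta(n)$ distinguishable members of size $\Theta(n^{1-2/D})$, preserving the degree sequence at every depth reachable from outside the arm within radius $\tau=D-3$, while keeping $t$ recoverable from an internal view. A plausible design fixes a regular height-$2$ skeleton (so degrees are identical at every depth for every $t$) and encodes $t$ purely through a permutation of port-labels at its internal level, since such permutations leave every degree invariant but are readable by a node sitting inside the decoration. A secondary subtlety is that the pigeonhole pair $\vec t,\vec t'$ may disagree in many coordinates simultaneously; this I would handle by localising the swap argument at any single disagreeing coordinate $i^*$, the outer degree-invariance of the decoration family making all the other differing coordinates invisible to the relevant witnesses automatically.
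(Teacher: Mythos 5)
There is a genuine gap, and it is fatal to the construction rather than a patchable detail. In your spider class every node is within distance $h=D/2\leq D-3$ of the central node $c$ (this is exactly your own observation when arguing $\xi(G_{\vec t})\leq D-3$), and $c$ is structurally conspicuous: it is the unique node of degree $k=\Theta(n^{2/D})$, reachable from every node inside its radius-$(D-3)$ view. Hence the algorithm ``output the sequence of outgoing ports to the unique node of degree $k$ in your view'' elects a leader on your entire class with \emph{zero} advice, so no pigeonhole argument over $\mathcal{T}$ can force any advice at all. Concretely, your two-witness swap step collapses: when $\vec t$ and $\vec t'$ share an advice string and differ in coordinate $i^*$, the witnesses with coinciding views can all legitimately output their (identical) paths to $c$, these paths never enter the decoration of arm $i^*$, and both executions are correct, so no contradiction is reached. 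The step in Theorem \ref{lb2even} that you are modelling this on works only because there the deep leaves cannot tell which endpoint of the central edge is their candidate; your construction has no analogous mechanism preventing local identification of a canonical leader. A secondary, independent problem is the claimed invisibility of the decorations from outside the arm: nodes at depth at most $h-3$ on other arms, and $c$ itself, are within distance $\tau-1$ of the internal level of decoration $i$, so they see its port labels (not merely boundary degrees), contradicting the stated degree-invariance property; this could be repaired by restricting the witnesses to nodes at distance at least $\tau$ (e.g., the decorations of other arms), but it does not rescue the argument against the center-election escape above.

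For comparison, the paper's proof is built precisely around blocking that escape. Each arm hangs off a node $w_i$ adjacent to $c$, and below $w_i$ a recursively defined ``confusion subtree'' $T_i(h-1)$ replicates, along every port of $w_i$ other than the one leading back down the arm, the local structure that the true direction toward $c$ exhibits up to the depth visible in time $D-3$. Consequently a deep leaf near $q_i$ cannot tell which of the $\Delta$ directions at $w_i$ leads to the real center, and the only way it can orient itself is via the degree of the nearby node $q_i$; the class $\mathcal{T}$ is obtained by permuting the assignment of the distinguishing degrees $\Delta+\sigma(i)+2$ to the arms, giving $(\Delta-1)!$ trees with $\Delta=\Theta(n^{2/D})$ and hence the $\Omega(n^{2/D}\log n)$ bound, and the contradiction uses a confused leaf (which behaves identically on two different arms of two different trees) together with the node $q_0$, whose radius-$(D-3)$ view excludes all $q_i$ with $i\geq 1$. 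If you want to salvage your spider framework, you would have to embed a comparable confusion gadget near $c$ so that deep nodes cannot locate the center (or any other canonical node) without decoding advice, and your current decoration family and counting do not provide that.
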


\begin{proof}
Let $h = D/2$ and let $\Delta = \lfloor (\gamma n)^{1/h} \rfloor$, where $\gamma$ is a sufficiently large constant greater than 1 so that $\Delta \geq 2$. We start by defining a tree $G$ from which we will derive the class $\mathcal{T}$.
At a high level, trees in $\mathcal{T}$ will be defined in such a way that, for some leaves $v$, there are many nodes
in the view $V(v,D-3)$ such that $v$ cannot be sure which of them is the central node.   

To this end, in the construction of $G$, we will use special subtrees, called \emph{confusion subtrees}, as building blocks. For any $i \in \{0,\ldots,\Delta-1\}$, a {\em confusion subtree avoiding port $i$}, denoted by $T_i(x)$, is defined recursively with its height $x$ as parameter. For the base cases, let $T_i(0)$ consist of a single node with degree 0, and let $T_i(1)$ consist of a node $w_i$ with $\Delta$ degree-1 neighbours, such that the ports at $w_i$ are labeled by integers from $ \{0,\ldots,\Delta\} \setminus \{i\}$. For any $x>1$, define $T_i(x)$ as follows:
\begin{enumerate}
\item let $w_i$ be the root node with degree $\Delta$
\item Label one of $w_i$'s neighbours as $c_{\Delta}$. Attach a path of length $x-2$ with $c_{\Delta}$ as one endpoint, and let $p_{\Delta}$ be the other endpoint of this path (if $x=2$, set $p_{\Delta} = c_{\Delta}$.) Set the port sequence $seq(w_i,p_{\Delta})$ equal to $(\Delta,0,1,0,1,\ldots,0,1,0)$. \label{TiAddPath}
\item for each $j \in \{0,\ldots,\Delta-1\} \setminus \{i\}$, label one neighbour of $w_i$ as $c_j$. Set the two ports corresponding to edge $\{w_i,c_j\}$ equal to $j$. For each $k \in \{0,\ldots,\Delta-1\} \setminus \{j\}$, add a neighbour $b_{j,k}$ to $c_j$. Set the two ports corresponding to edge $\{c_j,b_{j,k}\}$ equal to $k$. At each $b_{j,k}$, attach a copy of $T_k(x-2)$ by identifying the root of $T_k(x-2)$ with $b_{j,k}$.
\end{enumerate}

See Figure \ref{confusion} for an illustration of $T_i(x)$. Finally, let $G$ consist of a root node $c$ of degree $\Delta$, with the roots of $T_0(h-1),\ldots,T_{\Delta-1}(h-1)$ as its neighbours. For each $i \in \{0,\ldots,\Delta-1\}$, we denote by $w_i$ the root of $T_i(h-1)$, and we denote by $q_i$ the node $p_{\Delta}$ defined at step \ref{TiAddPath} in the construction of $T_i(h-1)$. For each $i \in \{0,\ldots,\Delta-1\}$, the two ports corresponding to the edge $\{c,w_i\}$ are set to $i$, and, $\Delta +i+1$ leaves are added as neighbours of $q_i$. See Figure \ref{confusion} for an illustration of $G$. 

\begin{figure}[!ht]
\begin{center}
\includegraphics[scale=0.85]{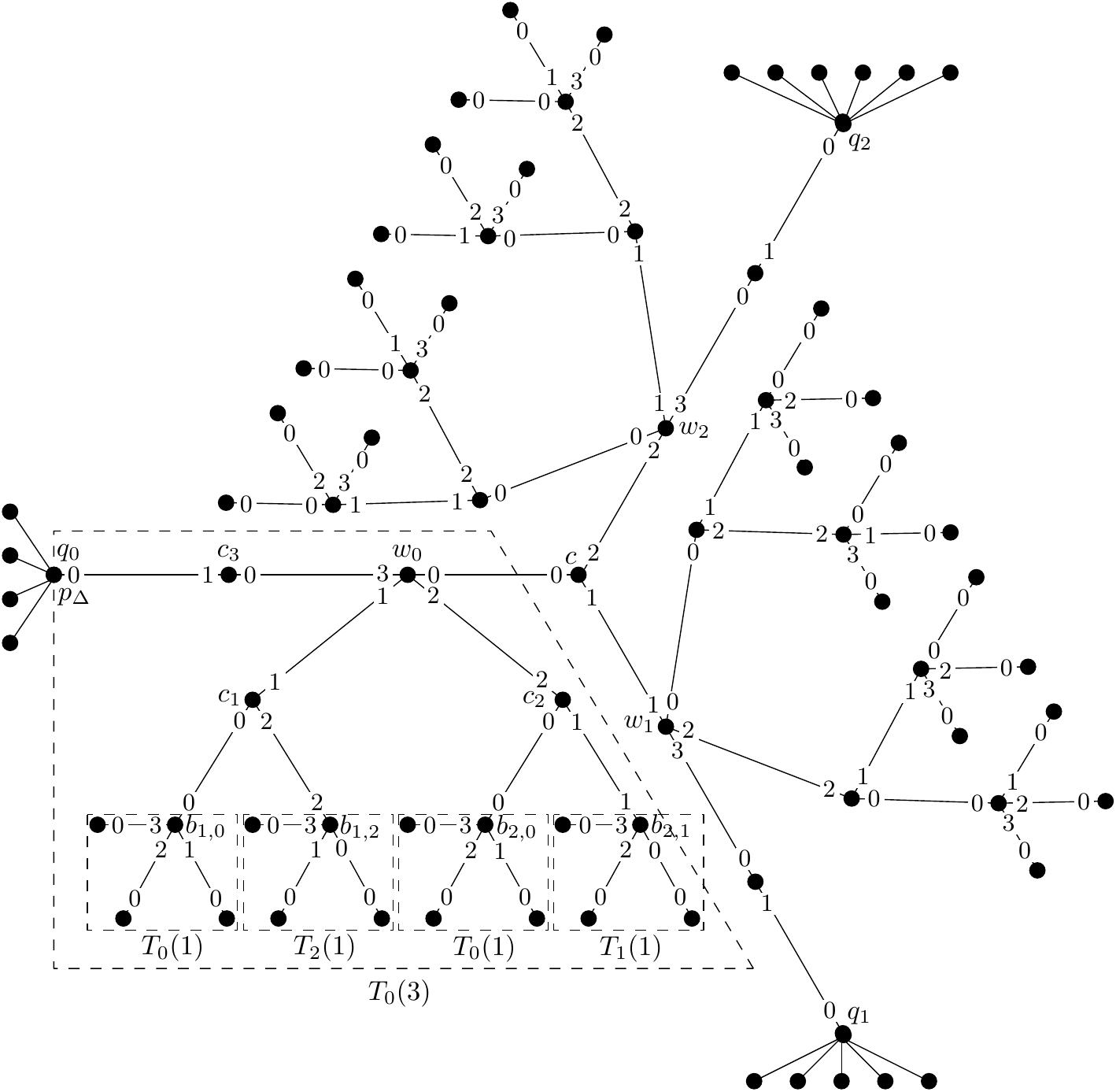}
\end{center}
\caption{Tree $G$ constructed in Proposition \ref{discussprop}, with $\Delta=3$ and $h=4$. Confusion trees $T_0(3), T_0(1), T_1(1), T_2(1)$ can be found in the dashed boxes.}
\label{confusion}
\end{figure}

We can now define the class $\mathcal{T}$. For every permuation $\sigma$ of the integers $\{1,\ldots,\Delta-1\}$, we obtain a tree $G_\sigma$ from $G$ by applying $\sigma$ to the subtrees rooted at $q_1,\ldots,q_{\Delta-1}$. More specifically, for each $i \in \{1,\ldots,\Delta-1\}$, the number of neighbouring leaves of $q_i$ is changed to $\Delta + \sigma(i) + 1$. It follows that, when $\sigma$ is the identity permutation, $G_{\sigma} = G$. Further, note that the only differences between two trees $G_{\sigma}$ and $G_{\sigma'}$ are the degrees of some nodes among $q_1,\ldots,q_{\Delta-1}$. The class $\mathcal{T}$ is defined as the set of trees $G_\sigma$ for all permutations $\sigma$ of the integers $\{1,\ldots,\Delta-1\}$. By counting the number of such permutations, it follows that $|\mathcal{T}| = (\Delta-1)!$. 

We now determine the number of different advice strings needed by any algorithm $ELECT$ that solves leader election in time $D-3$ in trees from the class $\mathcal{T}$. The following result shows that the number of different advice strings needed by algorithm $ELECT$ is $|\mathcal{T}|$.

\begin{claim}\label{different}
For any two distinct permuations $\sigma, \sigma'$ of the integers $\{1,\ldots,\Delta-1\}$, the advice strings provided to algorithm $ELECT$ for trees $G_{\sigma}$ and $G_{\sigma'}$ must be different.
\end{claim}
We prove the claim by contradiction. Assume that, for some permutations $\sigma \neq \sigma'$, the advice strings assigned to $G_{\sigma}$ and $G_{\sigma'}$ are the same. The proof proceeds by finding a leaf $\ell_{\sigma}$ in $G_{\sigma}$ and a leaf $\ell_{\sigma'}$ in $G_{\sigma'}$ such that the two leaves have the same view at distance $D-3$ in their respective trees. Since the two trees are assigned the same advice, this implies that the executions of $ELECT$ by these two leaves output the same port sequence to elect a leader in their respective trees. Further, we show that $q_0$ outputs the same port sequence in the execution of $ELECT$ in both $G_{\sigma}$ and $G_{\sigma'}$. Finally, we show that $ELECT$ fails in at least one of $G_{\sigma}$ or $G_{\sigma'}$, which gives the desired contradiction.

{\bf Definition of $\ell_{\sigma}$ and $\ell_{\sigma'}$.} Let $j \in \{1,\ldots,\Delta-1\}$ such that $\sigma(j) \neq \sigma'(j)$. Let $j' \in \{1,\ldots,\Delta-1\}$ such that $\sigma'(j') = \sigma(j)$. Let $\ell_{\sigma}$ be the leaf adjacent to $q_j$ in $G_{\sigma}$ such that the port at $q_j$ corresponding to the edge $\{q_j,\ell_{\sigma}\}$ is 1. Similarly, let $\ell_{\sigma'}$ be the leaf adjacent to $q_{j'}$ in $G_{\sigma'}$ such that the port at $q_{j'}$ corresponding to the edge $\{q_{j'},\ell_{\sigma'}\}$ is 1.

{\bf Showing that $\ell_{\sigma}$ and $\ell_{\sigma'}$ output the same port sequence.} Since $G_{\sigma}$ and $G_{\sigma'}$ are assigned the same advice, it is sufficient to show that $V_{G_{\sigma}}(\ell_{\sigma},D-3) = V_{G_{\sigma'}}(\ell_{\sigma'},D-3)$. To prove this fact, note that, since $\sigma(j) = \sigma'(j')$, it follows that $\ell_{\sigma}$'s neighbour $q_j$ in $G_{\sigma}$ has the same degree as $\ell_{\sigma'}$'s neighbour $q_{j'}$ in $G_{\sigma'}$ (i.e., both of these degrees are equal to $\Delta + \sigma(j) + 2$.) It follows that $V_{G_{\sigma}}(\ell_{\sigma},1) = V_{G_{\sigma'}}(\ell_{\sigma'},1)$. Next, by the constructions of $T_j(h-1)$ and $T_{j'}(h-1)$, we have that $seq(q_j,w_j) = (0,1,0,1,\ldots,0,\Delta)$ in $G_{\sigma}$ and that $seq(q_{j'},w_{j'}) = (0,1,0,1,\ldots,0,\Delta)$ in $G_{\sigma'}$. Since $|path(q_j,w_j)| = h-2 = |path(q_{j'},w_{j'})|$, we have so far shown that $V_{G_{\sigma}}(\ell_{\sigma},h-1) = V_{G_{\sigma'}}(\ell_{\sigma'},h-1)$. Next, the confusion trees were constructed specifically to satisfy the following property: for any port sequence of length $2h-3$ starting at $w_j$ in $G_{\sigma}$ that does not begin with port $\Delta$, the same sequence appears in $G_{\sigma'}$ starting at $w_{j'}$. It follows that $V_{G_{\sigma}}(w_j,h-2) = V_{G_{\sigma'}}(w_{j'},h-2)$. Finally, since $w_{j}$ is the only node at distance $h-1$ from $\ell_{\sigma}$ in $G_{\sigma}$, we have $V_{G_{\sigma}}(\ell_{\sigma},h-1) \cup V_{G_{\sigma}}(w_j,h-2) = V_{G_{\sigma}}(\ell_{\sigma},D-3)$. Similarly, since $w_{j'}$ is the only node at distance $h-1$ from $\ell_{\sigma'}$ in $G_{\sigma'}$, we have $V_{G_{\sigma'}}(\ell_{\sigma'},h-1) \cup V_{G_{\sigma'}}(w_{j'},h-2) = V_{G_{\sigma'}}(\ell_{\sigma'},D-3)$. Therefore, we have shown that $V_{G_{\sigma}}(\ell_{\sigma},D-3) = V_{G_{\sigma'}}(\ell_{\sigma'},D-3)$, as desired.

{\bf Showing that $q_0$ outputs the same port sequence in the execution of $ELECT$ in both $G_{\sigma}$ and $G_{\sigma'}$.} Since $G_{\sigma}$ and $G_{\sigma'}$ are assigned the same advice, it is sufficient to show that $V_{G_{\sigma}}(q_0,D-3) = V_{G_{\sigma'}}(q_0,D-3)$. To prove this fact, note that, since $d(q_j,c) = h-1$ for all $j \in \{0,\ldots,\Delta-1\}$, it follows that, for all $i \in \{1,\ldots,\Delta-1\}$, $d(q_0,q_i) = 2h-2 = D-2$. In particular, this means that $q_1,\ldots,q_{\Delta-1}$ are neither contained in $V_{G_{\sigma}}(q_0,D-3)$ nor contained in $V_{G_{\sigma'}}(q_0,D-3)$. Since the only differences between $G_{\sigma}$ and $G_{\sigma'}$ are the degrees of some nodes among $q_1,\ldots,q_{\Delta-1}$, it follows that $V_{G_{\sigma}}(q_0,D-3)$ and $V_{G_{\sigma'}}(q_0,D-3)$ must be equal.

{\bf Showing that $ELECT$ fails in at least one of $G_{\sigma}$ or $G_{\sigma'}$.} To obtain a contradiction, we assume that $ELECT$ correctly elects a leader in both $G_{\sigma}$ and $G_{\sigma'}$. First, suppose that $\ell_{\sigma}$ elects a node in $T_{j}(h-1)$ in $G_{\sigma}$, and $\ell_{\sigma'}$ elects a node in $T_{j'}(h-1)$ in $G_{\sigma'}$. It follows that, in $G_{\sigma}$, node $q_0$ elects a node in $T_j(h-1)$ by outputting some sequence $u$ whose $h^{th}$ term is $j$. Similarly, in $G_{\sigma'}$, node $q_0$ elects a node in $T_{j'}(h-1)$ by outputting some sequence $u'$ whose $h^{th}$ term is $j'\neq j$. However, this means that $u \neq u'$, which contradicts the fact that $q_0$ outputs the same port sequence in the execution of $ELECT$ in both $G_{\sigma}$ and $G_{\sigma'}$. So, we have shown that it is not the case that both $\ell_{\sigma}$ elects a node in $T_{j}(h-1)$ in $G_{\sigma}$ and $\ell_{\sigma'}$ elects a node in $T_{j'}(h-1)$ in $G_{\sigma'}$. So, without loss of generality, we may assume that, in $G_{\sigma}$, node $\ell_{\sigma}$ elects a node that is not in $T_j(h-1)$. In particular, this means that the $h^{th}$ term in the sequence $s$ outputted by $\ell_{\sigma}$ must correspond to edge $\{w_j,c\}$ (since, otherwise, the path corresponding to this sequence would not contain $c$). Hence, the $h^{th}$ term in sequence $s$ is equal to $j$. We showed above that $\ell_{\sigma}$ in $G_{\sigma}$  and $\ell_{\sigma'}$ in $G_{\sigma'}$ output the same port sequence, so the $h^{th}$ term of $\ell_{\sigma'}$'s output is also equal to $j$. Since $j \neq j'$, it follows that the $h^{th}$ edge on the path corresponding to $\ell_{\sigma'}$'s output does not correspond to edge $\{w_{j'},c\}$. Hence, in $G_{\sigma'}$, node $\ell_{\sigma'}$ elects a node in $T_{j'}(h-1)$. This elected node is at distance $|s| - (h-1)$ from $w_{j'}$ since the first $h-1$ ports in $s$ correspond to the path from $\ell_{\sigma'}$ to $w_{j'}$. Since $d(q_0, w_{j'})=h$, we have that, in $G_{\sigma'}$, the length of $q_0$'s output is $h+[|s| - (h-1)] = |s|+1$.
On the other hand, in $G_{\sigma}$, the node elected by $\ell_{\sigma}$ is not in $T_j(h-1)$, which means that the elected node is at distance $|s|-h$ from $c$ (since the first $h$ ports in $s$ correspond to the path from $\ell_{\sigma}$ to $c$.) It follows that, in $G_{\sigma}$, the length of $q_0$'s output is at most $h-1 + [|s|-h] = |s|-1$, i.e.,
shorter than its output in $G_{\sigma'}$. This contradicts the fact that $q_0$ outputs the same sequence in the execution of $ELECT$ in both $G_{\sigma}$ and $G_{\sigma'}$. Therefore, our assumption that $ELECT$ correctly outputs a leader in both $G_{\sigma}$ and $G_{\sigma'}$ was false, so $ELECT$ fails on at least one tree in $\mathcal{T}$. 

This contradicts the correctness of $ELECT$, so our assumption that the same advice is provided for $G_{\sigma}$ and $G_{\sigma'}$ must be wrong. This concludes the proof of the claim.

By Claim \ref{different} there are $|\mathcal{T}| = (\Delta-1)!$ different advice strings, and hence the size of advice is at least
$\log  ( (\Delta-1)!) \in \Omega( \Delta \log \Delta)$. Since $\Delta = \lfloor (\gamma n)^{2/D} \rfloor$, we get a lower bound $\Omega(n^{2/D}\log n)$
on the size of advice.

The following results demonstrate that leader election is solvable in each $G_{\sigma}$ in time $D-3$
(given its map), and that the number of nodes in $G_{\sigma}$ is in $\Theta(n)$.

\begin{claim}\label{Gsolvable}
$\xi(G_{\sigma}) \leq D-3$
\end{claim}
To prove the claim, we show that every node can identify itself in a map of $G_{\sigma}$, and thus can elect the central node $c$. First note that $c$ is at distance at most $h$ from every node in $G_{\sigma}$, so $V_{G_{\sigma}}(c,D-3) = V_{G_{\sigma}}(c,h)$ is equal to $G_{\sigma}$. Therefore, in time $D-3$, node $c$ can identify itself in a map of $G_{\sigma}$. For any $v \neq c$ in $G_{\sigma}$, node $v$ is located in $T_i(h-1)$ for some $i \in \{0,\ldots,\Delta-1\}$. Since $T_i(h-1)$ has height $h-1$, the distance between any two nodes in $T_i(h-1)$ is at most $D-2$. It follows that $V_{G_{\sigma}}(v,D-3)$ contains every node in $T_i(h-1)$ except possibly some leaves. In particular, $V_{G_{\sigma}}(v,D-3)$ contains the path between nodes $w_i$ and $q_i$. This path is the only induced subtree of $T_i(h-1)$ consisting of a path of length $h-2$. It follows that $v$ can identify this path in $V_{G_{\sigma}}(v,D-3)$. Hence, it can identify its endpoints
and it sees their degrees. Therefore $v$ can identify $q_i$, as it is the only node in $G_{\sigma}$ of degree $\Delta +\sigma(i)+2$. 
It follows that $v$ can identify itself in a map of $G_{\sigma}$. This concludes the proof of the claim.

\begin{claim}\label{sizeG}
The number of nodes in $G_{\sigma}$ is in $\Theta(n)$.
\end{claim}
To prove the claim, note that the size of $G_{\sigma}$ is equal to $1 + (3\Delta ^2 + \Delta)/2 + \Delta \cdot |T_i(h-1)|$. We first prove that $|T_i(x)| \leq 3\Delta^x$ by induction on $x$. In the base cases, $|T_i(0)| = 1$ and $|T_i(1)| = \Delta+1$, so $|T_i(x)| \leq 3\Delta^x$ in these two cases. Now, suppose that $x \geq 2$, and that, for all $y < x$, we have $|T_i(y)| \leq 3\Delta^y$. From the construction of $T_i(x)$, we observe that the size of $T_i(x)$ is $1 + \Delta + (x-2) + (\Delta-1)^2\cdot |T_i(x-2)|$. By the induction hypothesis, this is at most $1 + \Delta + (x-2) + (\Delta-1)^2(3\Delta^{x-2}) = 1 + \Delta + (x-2) + [3\Delta^x - 6\Delta^{x-1} + 3\Delta^{x-2}] \leq 1+\Delta + (x-2) + 3\Delta^x - 3\Delta^{x-1}$. Since each of the first three terms is bounded above by $\Delta^{x-1}$, we have shown that $T_i(x)$ has size at most $3\Delta^x$. Therefore, the size of $G_{\sigma}$ is at most $1 + (3\Delta ^2 + \Delta)/2 + 3\Delta^h = 6(( \gamma n)^{1/h})^{h} \leq 6 \gamma n$. This completes the proof of the claim.

\end{proof}

For time $\alpha \cdot diam$ for any constant $\alpha <1/2$, we showed bounds with tight order of magnitude of $\Theta (n)$, except for diameter
$diam \in O(\log ^2n)$. This yields three open questions. 

{\bf P2.} Find close upper and lower bounds on the minimum size of advice for time very close to $diam/2$, i.e., $diam/2 \pm o(diam)$.

{\bf P3.} Find close upper and lower bounds on the minimum size of advice for time below half of the diameter when the diameter is in $O(\log^2 n)$.

{\bf P4.} Find close upper and lower bounds on the minimum size of advice when time is very small, e.g., logarithmic in $n$.

This last problem has an intriguing twist.
At first glance it would seem that the answer to it,
at least  for diameter in $\omega (\log ^2n)$, is $\Theta(n)$. Indeed, the upper bound $O(n)$ on the size of advice holds in this case as well (with the same proof as in Section 6),
and the lower bound $\Omega (n)$ proved for time $\alpha \cdot diam$ for any constant $\alpha <1/2$  should be ``even more true'':
since decreasing the allocated time makes the task more difficult, the required amount of advice should not decrease.
Perhaps surprisingly, this argument overlooks 
the following subtlety. 
We should recall that, for a given time $\tau$, we seek solutions of our minimum advice problem only for trees $T$ with $\xi(T) \leq \tau$ because, for other 
trees, leader election is infeasible in time $\tau$ with any amount of advice. However, for small values of $\tau$, the restriction $\xi(T) \leq \tau$ could sometimes leave
so few trees under consideration that more efficient advice than for larger values of $\tau$ is sufficient. Is this really the case?



\bibliographystyle{plain}


\end{document}